\documentclass[11pt]{article}

\usepackage{graphicx}
\usepackage{xcolor}
\usepackage{hyperref}

\usepackage{amsmath,amssymb,epsf,epic}
\usepackage{tikz,tkz-tab}

\setlength{\evensidemargin}{2cm}
\setlength{\oddsidemargin}{-.7cm}
\setlength{\topmargin}{-.3in} 
\textwidth 6.8in 
\textheight 8.5in

\numberwithin{equation}{section}



\newcommand{\BRA}{\big\langle\hspace{-.1cm}\big\langle} 
\newcommand{\KET}{\big\rangle\hspace{-.1cm}\big\rangle}

\newcommand{\un}{{\mathbb I}}
\newcommand{\ra}{\rightarrow}
\newcommand{\tr}{{\,\rm Tr}}
\newcommand{\ran}{{\,\rm Ran}}
\newcommand{\spa}{{\,\rm Span}}
\renewcommand{\ker}{{\, \rm Ker\,}}

\newcommand{\bra}{\langle}

\newcommand{\ket}{\rangle}

\renewcommand{\i}{{\rm i}}

\newcommand{\be}{\begin{equation}}
\newcommand{\ee}{\end{equation}}
\newcommand{\bea}{\begin{eqnarray}}
\newcommand{\eea}{\end{eqnarray}}

\newcommand{\eps}{\varepsilon}

\newcommand{\ffi}{\varphi}

\newcommand{\ode}{{\cal O}}
\newcommand{\e}{{\rm e}}
\newcommand{\grintl}{[\kern-.18em [}
\newcommand{\grintr}{]\kern-.18em ]}


\newcounter{resultcounter}[section]

\newtheorem{thm}[resultcounter]{Theorem}
\newtheorem{lem}[resultcounter]{Lemma}
\newtheorem{prop}[resultcounter]{Proposition}
\newtheorem{cor}[resultcounter]{Corollary}

\newtheorem{rem}[resultcounter]{Remark}


\def\cA{{\cal A}} \def\cB{{\cal B}} 
\def\cD{{\cal D}}  
\def\cG{{\cal G}} \def\cH{{\cal H}} 
 \def\cK{{\cal K}} \def\cL{{\cal L}}
  
\def\cP{{\cal P}} \def\cQ{{\cal Q}} \def\cR{{\cal R}}
\def\cS{{\cal S}} \def\cT{{\cal T}} \def\cU{{\cal U}}
\def\cV{{\cal V}} \def\cW{{\cal W}} \def\cX{{\cal X}}
\def\cY{{\cal Y}} \def\cZ{{\cal Z}}


\newcommand{\R}{{\mathbb R}}
\newcommand{\N}{{\mathbb N}}

\newcommand{\C}{{\mathbb C}}

\renewcommand{\P}{{\mathbb P}}


\def\proof{\noindent{\bf Proof:}\ \ }
\def\qed{\hfill $\Box$\medskip}




\begin{document}
\title{Adiabatic  Lindbladian Evolution with Small Dissipators}
 \author{ Alain Joye\footnote{ Univ. Grenoble Alpes, CNRS, Institut Fourier, F-38000 Grenoble, France} }

\date{ }

\maketitle

\maketitle
\vspace{-1cm}

\thispagestyle{empty}
\setcounter{page}{1}
\setcounter{section}{1}

\setcounter{section}{0}

\vskip 1cm

\noindent{\bf Abstract}: { We consider a time-dependent small quantum system weakly coupled to an environment, whose effective dynamics we address by means of a Lindblad equation. We assume the Hamiltonian part of the Lindbladian is slowly varying in time and  the dissipator part has small amplitude. We study the properties of the evolved state of the small system as the adiabatic parameter and coupling constant both go to zero, in various asymptotic regimes. In particular, we analyse the deviations of the transition probabilities of the small system between the instantaneous eigenspaces of the Hamiltonian with respect to their values in the purely Hamiltonian adiabatic setup, as a function of both parameters. }

\medskip 

\noindent{\bf Keywords}: {Adiabatic approximation, Lindblad generators, Quantum dynamics.} 

\medskip 

\noindent {\bf Acknowledgments}: This work is partially supported by the ANR grant NONSTOPS (ANR-17-CE40-0006-01)

\section{Introduction}

The adiabatic approximation of quantum mechanics, designed to address the time-dependent Schr\"odinger equation in its Hamiltonian formulation, has been introduced very soon after the discovery of quantum mechanics \cite{BF, La, Z}. It has since been developed in order to accommodate more general Hamiltonians and to improve its accuracy; see \cite{ K1, N1, ASY, T, AE, N2, JKP, JP0, J1, JP, Sch, BDF} for examples along these lines. The adiabatic approximation being instrumental in the analysis of time-dependent phenomena, this mathematical method was extended and applied to a variety of evolution equations in more general contexts, like discrete time evolutions, \cite{DKS, HJPR1, HJPR2},  non-linear setups \cite{CFK1, CFK2, GG, S, LLFY,  F-KJ}, or contracting evolutions in Banach spaces \cite{Kr, NR, J2, AFGG} for example.

Specifically, adiabatic approaches were successfully adapted to address the evolution of open quantum systems consisting of a time-dependent small system of interest coupled to an environment, be it from a global Hamiltonian perspective encompassing a modeling of the environment, or from an effective point of view through a Lindblad evolution equation. In particular, asymptotic expressions for quantum states solution to an evolution equation driven by time-dependent Lindblad generators in the adiabatic limit are provided in \cite{AFGG, AFGG2, FH}, together with detailed analyses of the special case of dephasing Lindbladians. See  \cite{DS, TW, JMS} for results along the same lines, including a Hamiltonian description of an environment at positive and zero temperature, while \cite{A-SF, HJPR1, HJPR2, BFJP} focus on entropy production issues in the adiabatic regime of such systems.

From the point of view of applications to quantum engineering and quantum control, these adiabatic approaches are suited to describe the evolution of a small system that can be monitored in a time dependent fashion by external agents, and which is weakly coupled to its environment, due to imperfect isolation. In the adiabatic limit, and for a regime of small coupling, one expects to get a description of the evolution of the small system in terms of the characteristics of the Hamiltonian, with quantitative information of the perturbations induced by the effect of the environment. While these questions have been addressed in the physics literature in various setups, see {\it e.g.} the discussions and references in \cite{AFGG2, FH}, the mathematical approaches of those questions are less numerous and often concern specific cases.  

Detailed information about the evolution of the small system is available for dephasing Lindbladians only \cite{AFGG, AFGG2, FH, H}, in which by definition, the dissipator is a function of the Hamiltonian. Hence, a time dependence in the Hamiltonian implies a similar time dependence in the dissipator as well.  In the Hamiltonian model addressed in \cite{JMS}, the coupling between the small system and the (bosonic) environment is assumed to be energy conserving at all times, which makes it  dependent on the Hamiltonian as well. This is arguably a shortcoming of the approach since the dissipator models the effect of the coupling of the small system to the environment that, in general, is likely to be independent of the way the system is monitored and is possibly time-independent. In the model considered in \cite{DS}, the coupling of the small system to the (fermionic) environment is time independent, however the coupling constant is determined by the adiabatic parameter.

In case the Lindblad generator is time independent, the dynamics of states can be inferred from the spectral properties of the Lindbladian. Therefore, a host of perturbative methods have been designed to identify and approximate the asymptotic state, as well as to analyse more precise properties of the dynamics, {\it e.g.} unravelling. See for example the recent papers \cite{aletal, baletal, macietal, HJ, benetal}, and the references therein for works along these lines, in various setups.

The present contribution is devoted  to the study of the effective dynamics of a small quantum system weakly coupled to an environment, assuming a Lindbladian description that consists in a slowly varying time-dependent Hamiltonian drive and an arbitrary dissipator. We analyse the evolved state of the small quantum system as both the adiabatic parameter $\eps>0$ and the coupling constant $g>0$ vanish, in an independent way. Actually we can, and will, consider time-dependent dissipators, which will allow for comparisons with some of the results mentioned above. 

Under suitable assumptions, we provide leading order approximations of the density matrix of the small system in the {\it perturbative regime}, $g\ll \eps$, where the coupling constant is much smaller than the adiabatic parameter, 
in the {\it slow drive regime}, $\eps\ll g$, where the adiabatic parameter is much smaller than the coupling constant, 
as well as in a {\it transition regime},  $g\ll \sqrt \eps$, bridging the gap between the two previous regimes. The transition regime is addressed by means of a {\it reduced dynamics} defined on the kernel of the Hamiltonian part of the Lindbladian, that depends on the single parameter $\eps/g$ that determines the regime we are in. We show that the reduced dynamics approximates the asymptotic Lindbladian evolution in the transition regime, that covers the perturbative regime and, partially, the slow drive regime.

As a consequence, we also derive the asymptotics of the transition probabilities between instantaneous eigenspaces of the Hamiltonian in these regimes. In the perturbative regime, the leading order of these transition probabilities is shown to be given by the familiar expression of order $\eps^2$ depending only on the Hamiltonian  if $g\ll \eps^3$, 
and by an explicit integral expression of order $g/\eps$ that depends on the dissipator  if $\eps^3\ll g\ll \eps$. This is in keeping with \cite{JMS} where a similar transition in the asymptotics of the transition probability was observed for the Hamiltonian model considered. In case $g=\eps$, we are in the transition regime and the reduced dynamics is independent of $\eps$ to leading order. This regime corresponds, in spirit, to the regime addressed in their Hamiltonian model by \cite{DS}, Section 3.  For the slow drive regime, $\eps\ll g$, we get that the transition probability is independent of $\eps$ and $g$ to leading order, and is characterised by the kernel of the dissipator.

These features are illustrated for a two-level system with a dissipator displaying a certain symmetry. This allows for the explicit computation of the reduced dynamics which interpolates between these regimes.

\section{Setup and main results}\label{setup}

The separable Hilbert space of the small system is denoted by $\cH$ and $t\mapsto H(t)$ is its time dependent Hamiltonian on $\cH$. The dissipator of the Lindbladian is constructed by means of  a finite sum (for simplicity) of bounded operators on $\cH$, called jump operators, $t\mapsto \Gamma_l(t)$, $l\in I$, a  set of indices. We will assume the following regularity hypotheses:
\medskip 

\noindent{\bf Reg}\\
$\bullet$ $H: [0,1]\ra \cB(\cH)$ is self-adjoint valued and $C^\infty$ in norm (with right and  left derivatives at $\{0,1\}$).\\
$\bullet$ $\partial_t^k H(t)|_{t=0}=0$, for all $k\in \N^*$.\\
$\bullet$ For each $j\in I$, $I$  a finite set of indices, $\Gamma_j: [0,1]\ra \cB(\cH)$ is  $C^\infty$ in norm (with right and left derivatives at $\{0,1\}$). \\

Note that while the leading order results stated in the present section do not require all derivatives of the Hamiltonian at zero to vanish, the arbitrary high order generalisations of Section \ref{secgen} require this property. This assumption ensures that high order adiabatic approximations of the Heisenberg unitary evolution of the spectral projectors of the Hamiltonian coincide with the spectral projectors of the Hamiltonian at time zero.
\medskip

For $g\geq 0$, and each $t\in [0,1]$, the time-dependent Lindblad operator $\cL_t^{[g]}(\cdot)\in \cB(\cB(\cH))$ reads
\be\label{lind}
\cL_t^{[g]}(\cdot)=\cL_t^0(\cdot)+g\cL^1_t(\cdot)=-\i[H(t),\cdot]+g\sum_{l\in I}\Big(\Gamma_l(t)\cdot \Gamma_l^*(t)-\frac12\big\{\Gamma_l^*(t)\Gamma_l(t), \cdot\big\}\Big),
\ee
where the Hamiltonian part $\cL_t^0(\cdot)$ is time-dependent, while the dissipator $g\cL^1_t(\cdot)$ is possibly constant. 
The Lindbladian $\cL_t^{[g]}$ acts in particular on the Banach space $\cT(\cH)$, the set of trace class operators on $\cH$ with norm denoted by $\|\cdot\|_1$.
Further specialising, the Lindbladian acts on the set of density matrices or states, {\it i.e.} positive trace class operators of trace one, in the Schr\"odinger picture we adopt here.  
\medskip

A special case of interest for which the dissipator depends on time is that of {\it dephasing Lindbladians} characterized by $\Gamma_l(t)=F_l(H(t))$, where $F_l: \R\ra \C$ is some smooth function, for each $l\in I$,  see \cite{AFGG}. Among other things, dephasing Lindbladians enjoy the following properties for each $t$ fixed: 
\begin{align}\label{deph}
&\ker \cL_t^{[g]}(\cdot)=\ker [H(t), \cdot ] \ \ \mbox{in } \ \cT(\cH), \nonumber\\ 
&[\Gamma_j(t),P(t)]=0, \ \ \forall \ \mbox{spectral projector} \ P(t) \ \mbox{of} \ H(t).
\end{align}

We shall work on $\cT(\cH)$, unless stated otherwise, and the corresponding operator norm of $\cA\in \cB(\cT(\cH))$ will be denoted by $\|\cA\|_{\tau}$.  In particular, { for any $A\in\cB(\cH)$, the maps on  $ \cT(\cH)\ni \rho$ given by $\cA_l: \rho \mapsto A \rho$ and $\cA_r: \rho \mapsto \rho A$ belong to $ \cB(\cT(\cH))$} and have norms satisfying $\|\cA_\#\|_\tau\leq \|A\|$, $\#\in\{l,r\}$, where $\|\cdot \|$ denotes the operator norm on $\cB(\cH)$. For $\cA, \cB$ two operators in $\cB(\cB(\cH))$, we will denote their composition by $\cA\circ \cB$, or simply $\cA \cB$ if no risk of confusion arises.
\medskip

For $\eps>0$, $g\geq 0$, we consider the Lindblad equation
\be\label{lindeq}
\left\{\begin{matrix}
\eps \dot \rho= (\cL_t^0+g\cL^1_t)(\rho),  & t\in [0,1],\cr  \rho(0)=\rho_0, 
\hfill &\phantom{x}\rho\in \cT(\cH),
\end{matrix} \right.
\ee
in the adiabatic and small coupling regimes, characterised by $(\eps, g)\ra (0,0)$.\\

We recall here the main properties of the solutions to (\ref{lindeq}). As is well known, see \cite{D, L} and is recalled in \cite{AFGG, H} for example, for each fixed $t\in [0,1]$, the one-parameter family $(\e^{s\cL^{[g]}_t})_{s\geq 0}$ considered  on $\cT(\cH)$ forms a norm continuous semigroup of completely positive and  trace preserving  (CPTP) applications,  which are contraction operators. 
Consequently, denoting by $(\cU(t,s))_{0\leq s\leq t\leq 1}$ the two-parameter propagator associated to (\ref{lindeq}),
\begin{align}\label{ulind}
\left\{\begin{matrix}
\eps \partial_t\cU(t,s)=(\cL_t^0+g\cL^1_t)(\cU(t,s)), \cr
\cU(s,s)=\un, \ \ 0\leq s \leq t \leq 1,\hfill 
\end{matrix} \right.
\end{align}
it follows from Thm X.70 in \cite{RS} for example that the propagator is a contraction:
\begin{align}
\|\cU(t,s)\rho \|_1 \leq \|\rho\|_1,\phantom{2} \  \ &\forall  \rho \in \cT(\cH), \ \forall 1\geq  t\geq s\geq 0.
\end{align}
In particular we have $\|\cU(t,s)\|_\tau= 1$, since $\cU(t,s)$ is trace preserving.\\
While we shall stick to the bounded case, note that unbounded Hamiltonians and/or dissipators could also be accommodated, \cite{D, J2, FFFS}, for example.\\

We suppose that the spectrum of the Hamiltonian is separated into several disjoint subsets, which corresponds to the familiar gap assumption of the adiabatic theory. 

\medskip
\noindent 
{\bf Spec}\\
For $2\leq d < \infty$, there exists $G>0$ such that for all $t\in [0,1]$, the spectrum of $H(t)$, $\sigma(H(t))$, satisfies 
\be
\sigma(H(t))=\cup_{1\leq j\leq d}\, \sigma_j(t),  \ \ \ \  \inf_{t\in[0,1], 1\leq j\neq k\leq d}{\rm dist} (\sigma_j(t),\sigma_k(t))\geq G >0.
\ee
Accordingly, we introduce the corresponding self-adjoint spectral projectors on $\cH$  for $1\leq j\leq d$
\be\label{rieszj}
P_j(t)=-\frac{1}{2\i \pi}\oint_{\gamma_j}(H(t)-z)^{-1}dz, 
\ee
where $\gamma_j\in \rho(H(t))$ is a positively oriented simple loop encircling $\sigma_j(t)$ which contains no element of $\sigma(H(t))\setminus \sigma_j(t)$ in its interior,  {\it i.e.} ${\rm int }\, \gamma_j\cap \sigma(H(t))= \sigma_j(t)$. \\
Moreover, $P_j: t\mapsto P_j(t)$ is $C^\infty$ since $H$ is, and
\begin{align}\label{specproj}
&P_j(t)P_k(t)=\delta_{jk}P_j(t),\ \ \ \sum_{1\leq j\leq d} P_j(t) = \un.
\end{align}
Note that for all $t\in[0,1]$ and $1\leq j\leq d$, $P_j(t)$ belongs to $\ker \cL_t^0$, {\it i.e.} $\cL_t^0(P_j(t))\equiv 0$. 
Following \cite{K1}, we  introduce the operator on $\cH$
\be\label{multik}
K(t)=\sum_{1\leq j\leq d} P'_j(t)P_j(t)=-\sum_{1\leq j\leq d} P_j(t)P'_j(t),
\ee
and the corresponding {\it parallel transport, or Kato,} operator on $\cH$ solution to 
\bea\label{katop}
\left\{\begin{matrix}
\partial_t W(t,s)=K(t) W(t,s),\hfill \\
W(s,s)=\mathbb I, \ \ 0\leq s,t \leq 1.
\end{matrix} \right.
\eea
It is unitary and satisfies the well known intertwining relation
\be\label{interk}
W(t,s)P_j(s)=P_j(t)W(t,s),
\ee
whose proof is based on the fact that for any smooth projector $P(t)=P^2(t)$, $P(t)P'(t)P(t)\equiv 0$, see \cite{K1, K2, Kr}.
Note that the propagator $(W(t,s))_{0\leq s,t\leq 1}$ is actually well defined and invertible for any set of projectors that satisfy (\ref{specproj}) in a Banach space framework, and it satisfies the intertwining relation (\ref{interk}) for all $1\leq j\leq d$.\\

To keep technicalities to a minimum in this presentation section, we state the main results of the paper in their leading order formulations, and under simple assumptions. As will be mentioned along the way, some results  are corollaries of more general statements to be found in later sections. {The last paragraph of the present section indicates the locations in the manuscript where the proofs of the results and their generalisations are to be found.}

\subsection{Perturbative regime $g\ll \eps\ll 1$}

Our first result describes the modification of the adiabatic transition probabilities between the spectral subspaces $P_j(0)\cH$ at time zero and $P_k(t)\cH$ at time $t$, { for $j\neq k$,}  induced by the presence of the dissipator $g\cL^1_t$ in the regime $g\ll\eps$. 

{Pick an initial state $\rho_j\in\cT(\cH)$ such that $\rho_j=P_j(0)\rho_j P_j(0)$ }
and denote by $\cU^0(t,s)$ the solution to (\ref{ulind}) with $g=0$. In absence of dissipator, the transition probability considered reads { $\tr (P_k(t)\cU^0(t,0)(\rho_j))$} and is of order $\eps^2$, see Proposition \ref{propuread} and Remark \ref{remtiltra}. In case both spectral projectors involved 
are associated to a (potentially degenerate) eigenvalue, $\sigma_j(t)=\{e_j(t)\}$, $\sigma_k(t)=\{e_k(t)\}$, one has the explicit expression (\ref{pureadev}) for $j\neq k$
\begin{align}\label{indema}
 & \tr (P_k(t)\cU^0(t,0)(\rho_j)) 
 =\eps^2\tr \Big\{\frac{P_k(t)P_k'(t)\tilde\rho_j(t)P_k'(t) P_k(t)}{(e_j(t)-e_k(t))^2}\Big\}+\ode(\eps^3), \ \mbox{where} \nonumber\\
 &\tilde\rho_j(t)=W(t,0)\rho_jW(0,t).
\end{align}
Note that due to (\ref{interk}) $\tilde \rho_j(t)=P_j(t)\tilde \rho_j(t) P_j(t)$.

When the dissipator term is turned on we have, in the perturbative regime:
\begin{thm}\label{maing<e} Assume {\bf Reg} and {\bf Spec} with  $\sigma_j(t)=\{e_j(t)\}$ for all $t\in [0,1]$, and consider a state {$\rho_j=P_j(0)\rho_j P_j(0)$}. Then, the solution to (\ref{ulind}) satisfies for $j\neq k$, as $(\eps,g)\ra (0,0)$ with $g/\eps\ra 0$, 
 \begin{align}\label{sigjej}
 \tr (P_k(t)\cU(t,0)(\rho_j))=& \tr (P_k(t)\cU^0(t,0)(\rho_j)) \nonumber\\
 &+\frac{g}{\eps}\sum_{l\in I}\int_0^t \tr (P_k(s)\Gamma_l(s)\tilde\rho_j(s)\Gamma_l^*(s)P_k(s))ds+\ode(g+g^2/\eps^{2}),
 \end{align}
 with $\tilde \rho_j(t)=W(t,0)\rho_jW(0,t)$.
 
Further assuming  that for $j\neq k$, $\sigma_j(t)=\{e_j(t)\}$ and $\sigma_k(t)=\{e_k(t)\}$  for all $t\in [0,1]$, we have in the same regime, 
 \begin{align}\label{214}
 \tr (P_k(t)\cU(t,0)(\rho_j))=&\ \eps^2\tr \Big\{\frac{P_k(t)P_k'(t)\tilde \rho_j(t)P_k'(t) P_k(t)}{(e_j(t)-e_k(t))^2}\Big\}\nonumber\\
 &+\frac{g}{\eps}\sum_{l\in I}\int_0^t \tr (P_k(s)\Gamma_l(s)\tilde\rho_j(s)\Gamma_l^*(s)P_k(s))ds+\ode(g+\eps^3+g^2/\eps^{2}),
 \end{align}
 with $\tilde \rho_j(t)=W(t,0)\rho_jW(0,t)$.
\end{thm}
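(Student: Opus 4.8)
The plan is to treat the dissipator $g\cL^1_t$ as a perturbation of the purely Hamiltonian part $\cL^0_t$ through a Duhamel expansion, and then feed in the adiabatic description of the unperturbed propagator $\cU^0$ recalled in Proposition~\ref{propuread} and Remark~\ref{remtiltra}. Write $U(t,s)$ for the unitary on $\cH$ solving $\eps\,\partial_tU(t,s)=-\i H(t)U(t,s)$, $U(s,s)=\I$, so that $\cU^0(t,s)(\rho)=U(t,s)\,\rho\,U(t,s)^*=U(t,s)\,\rho\,U(s,t)$. The variation of constants formula applied to (\ref{ulind}) gives
\[
\cU(t,0)=\cU^0(t,0)+\frac{g}{\eps}\int_0^t\cU^0(t,s)\,\cL^1_s\,\cU(s,0)\,ds ,
\]
and iterating it once, the remaining term $\big(\tfrac{g}{\eps}\big)^2\int_0^t\!\int_0^s\cU^0(t,s)\,\cL^1_s\,\cU^0(s,u)\,\cL^1_u\,\cU(u,0)\,du\,ds$ is bounded in $\|\cdot\|_\tau$ by $C\,g^2/\eps^2$, since $\|\cU^0(t,s)\|_\tau=\|\cU(t,s)\|_\tau=1$ and $\sup_{s\in[0,1]}\|\cL^1_s\|_\tau\le 2\sum_{l\in I}\sup_s\|\Gamma_l(s)\|^2<\infty$ by \textbf{Reg}; hence it is $\ode(g^2/\eps^2)$ in $\|\cdot\|_\tau$ and contributes $\ode(g^2/\eps^2)$ to any matrix element.

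The core of the argument is the first order term. Pairing $\cU^0(t,s)\,\cL^1_s\,\cU^0(s,0)(\rho_j)$ with $P_k(t)$ and using cyclicity of the trace,
\[
\tr\big(P_k(t)\,\cU^0(t,s)\,\cL^1_s\,\cU^0(s,0)(\rho_j)\big)=\tr\big(U(s,t)P_k(t)U(t,s)\;\cL^1_s\big(\cU^0(s,0)(\rho_j)\big)\big).
\]
Since $\sigma_j(t)=\{e_j(t)\}$, the adiabatic theorem (valid under \textbf{Spec}) gives $U(t,0)P_j(0)=\e^{-\frac{\i}{\eps}\int_0^t e_j(u)du}\,W(t,0)P_j(0)+\ode(\eps)$ with a \emph{scalar} dynamical phase, so upon conjugating $\rho_j=P_j(0)\rho_jP_j(0)$ the phase cancels its complex conjugate and $\cU^0(s,0)(\rho_j)=\tilde\rho_j(s)+\ode(\eps)$ in $\|\cdot\|_1$, where $\tilde\rho_j(s)=W(s,0)\rho_jW(0,s)=P_j(s)\tilde\rho_j(s)P_j(s)$ by (\ref{interk}); likewise, from the intertwining built into the adiabatic theorem, $U(s,t)P_k(t)U(t,s)=P_k(s)+\ode(\eps)$. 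Because $\|\cL^1_s\|_\tau$ is bounded, these two replacements cost $\ode(\eps)$ for each $s\in[0,t]$, hence $\ode(g)$ after integration and the prefactor $g/\eps$. There remains $\frac{g}{\eps}\int_0^t\tr\big(P_k(s)\,\cL^1_s(\tilde\rho_j(s))\big)ds$; expanding $\cL^1_s$ from (\ref{lind}), the anticommutator contribution $-\tfrac12\sum_{l\in I}\big(\tr(P_k(s)\Gamma^*_l(s)\Gamma_l(s)\tilde\rho_j(s))+\tr(P_k(s)\tilde\rho_j(s)\Gamma^*_l(s)\Gamma_l(s))\big)$ vanishes because $\tilde\rho_j(s)=P_j(s)\tilde\rho_j(s)P_j(s)$ while $P_j(s)P_k(s)=0$ for $j\neq k$, leaving precisely $\frac{g}{\eps}\sum_{l\in I}\int_0^t\tr\big(P_k(s)\Gamma_l(s)\tilde\rho_j(s)\Gamma^*_l(s)P_k(s)\big)ds$.

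Collecting the three pieces yields (\ref{sigjej}), with error $\ode(g)+\ode(g^2/\eps^2)$. To obtain (\ref{214}) it then suffices to substitute into (\ref{sigjej}) the unperturbed asymptotics (\ref{indema}), namely $\tr\big(P_k(t)\cU^0(t,0)(\rho_j)\big)=\eps^2\tr\big\{\frac{P_k(t)P_k'(t)\tilde\rho_j(t)P_k'(t)P_k(t)}{(e_j(t)-e_k(t))^2}\big\}+\ode(\eps^3)$, valid once $\sigma_k(t)=\{e_k(t)\}$ as well, which enlarges the error by $\ode(\eps^3)$. I expect the main obstacle to lie in the adiabatic step: one has to make sure the replacements $\cU^0(s,0)(\rho_j)\mapsto\tilde\rho_j(s)$ and $U(s,t)P_k(t)U(t,s)\mapsto P_k(s)$ hold with a \emph{uniform} $\ode(\eps)$ error on $[0,1]$, which is where the gap hypothesis \textbf{Spec} and the cancellation of the scalar dynamical phase for a possibly degenerate single eigenvalue intervene; this is exactly the content of the results quoted from the later sections, and everything else is bookkeeping with the contraction bounds and the trace.
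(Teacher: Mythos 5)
Your proposal is correct and follows essentially the same route as the paper: a first-order Duhamel/Dyson expansion with an $\ode(g^2/\eps^2)$ tail, replacement of the interaction-picture conjugations by their adiabatic ($W$-transported) limits at a uniform cost of $\ode(\eps)$ inside the integral, hence $\ode(g)$ overall, the exact cancellation of the anticommutator terms from $P_j(s)P_k(s)=0$, and substitution of the purely Hamiltonian $\eps^2$ asymptotics of Proposition \ref{propuread} for (\ref{214}). The only cosmetic difference is that the paper first trades $\cU^0$ for the intertwining evolution $\cV^0$ (Corollary \ref{coruv}) and then evaluates the integrand exactly, whereas you perform the equivalent adiabatic replacements directly under the trace.
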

The physical interpretation is that in this regime, the dissipator contributes to the adiabatic transition probabilities of order $\eps^2$ by a history dependent perturbative term of order $g/\eps$; see also Theorem \ref{asuper} below.
\begin{rem} i) The correction to the transition probability due to the dissipator is non negative.  \\
ii) If we drop the assumption $\sigma_j(t)=\{e_j(t)\}$, formula (\ref{sigjej}) still holds with $\tilde \rho_j(s)$ replaced by a state $\tilde \rho_j(s, \eps)$ that depends on $\eps$ and also satisfies { $\tilde \rho_j(s, \eps)=P_j(t)\tilde \rho_j(s, \eps)P_j(t)$}, see (\ref{rhofi}). 
In any case, if  $P_j(0)$ is of finite rank and $\rho_j=P_j(0)/\dim(P_j(0))$, then (\ref{sigjej}) holds with $\tilde \rho_j(t)=P_j(t)/\dim(P_j(0)).$ \\
iii) The transition probability from $P_j(0)\cH$ to  $P_j(t)\cH$ reads
\begin{align}
\tr (P_j(t)\cU(t,0)(\rho_j))=&\tr (P_j(t)\cU^0(t,0)(\rho_j))\nonumber\\
&-\frac{g}{\eps}\sum_{l\in I}\int_0^t \tr ((\un-P_j(s))\Gamma_l(s)\tilde\rho_j(s)\Gamma_l^*(s)(\un-P_j(s)))ds+\ode(g+g^2/\eps^{2}).
\end{align}
\\
iv) In case $g=\eps^3$, both contributions in (\ref{214}) are of order $\eps^2$ and the error term is $\ode(\eps^3)$.\\
v) If $\eps^3\ll g \ll \eps$, the dissipator contribution takes over, with arbitrary slow decay
\begin{align}
 \tr (P_k(t)\cU(t,0)(\rho_j))=\frac{g}{\eps}\sum_{l\in I}\int_0^t \tr (P_k(s)\Gamma_l(s)\tilde\rho_j(s)\Gamma_l^*(s)P_k(s))ds+\ode(g+\eps^2+g^2/\eps^{2}).
 \end{align} 
 vi) If $g\ll \eps^3$, one recovers the adiabatic result to leading order
 \begin{align}
 \tr (P_k(t)\cU(t,0)(\rho_j))=&\ \eps^2\tr \Big\{\frac{P_k(t)P_k'(t)\tilde\rho_j(t)P_k'(t) P_k(t)}{(e_j(t)-e_k(t))^2}\Big\}+\ode(\eps^3).
\end{align}
vii) If $\cL_t^{[g]}$ is dephasing, the contribution of the dissipator vanishes, due to (\ref{deph}). This is keeping with Thm 18 of \cite{AFGG}  which yields transition probabilities of order $\eps g$ with our notations; see also \cite{AFGG2}.\\
viii)  Finally, Theorem \ref{maing<e} is a consequence of Theorem \ref{asuper} stated below.

\end{rem}
 
While Theorem \ref{maing<e} focuses on transition probabilities, we also provide higher order approximations of the full propagator $\cU(t,s)$ in Propositions \ref{propweak} and \ref{propweaksuper}. The full formulations are too  involved for this presentation section and we limit ourselves here to the leading order expression stated  as Theorem \ref{asuper}.

Any state $\rho\in \cT(\cH)$ can be written as
\be
\rho=\sum_{1\leq n,m\leq d} P_n(t)\rho P_m(t),
\ee
with  off diagonal elements, or coherences, $P_n(t)\rho P_m(t)$, for $n\neq m \in \{1,\dots, d\}$, and  diagonal elements, or populations, $P_n(t)\rho P_n(t)$, for $n \in \{1,\dots, d\}$.
The extraction of the diagonal part of $\rho$ is obtained by the action of the projector $\cP_0(t)$ on $\cB(\cH)$ defined for any $A\in \cB(\cH)$ by 
\be\label{projoker}
\cP_0(t)(A)=\sum_{1\leq n\leq d} P_n(t) A P_n(t).
\ee
{ We take advantage of the fact that the superoperator $\cP_0(t)$ acts on states in $\cT(\cH)$ in the same way as its dual acts on observables in $\cB(\cH)$. When acting on $\cT(\cH)$,  $\cP_0(t)$} is a CPTP map characterised by its Kraus operators. In case $\sigma_k(t)=\{e_k(t)\}$ for all $1\leq k\leq d$ and all $t\in [0,1]$, $\cP_0(t)$ coincides with the spectral projector onto $\ker \cL_t^0$.

Then, observe that under {\bf Reg},
the projector $\cP_0(t)$ 
given by (\ref{projoker}) is smooth in trace norm.
Consider the  parallel transport operator $\cW_0(t,s)$, $0\leq t,s\leq 1$,  associated to $\cP_0(t)$ via the equation
\bea\label{w0}
\left\{\begin{matrix}
\partial_t\cW_0(t,s)=[\cP_0'(t),\cP_0(t)]\cW_0(t,s), \\ 
\cW_0(s,s)=\un \hfill \end{matrix}\right.
\eea
that satisfies, , see \cite{Kr}, the intertwining relation
\be\label{intertw0}
\cP_0(t)\cW_0(t,s)=\cW_0(t,s)\cP_0(s)
\ee
and the propagation relation for all $0\leq r,s,t\leq 1$
\be\label{proprel}
\cW_0(t,s)\cW_0(s,r)=\cW_0(t,r).
\ee
The operator $\cW_0(t,s)$ enjoys further properties: 
\begin{lem}\label{wpcptp} Assume {\bf Reg} and {\bf Spec} Then, for any $0\leq s, t\leq 1$, the  operator $\cW_0(t,s)\cP_0(s)$  is a  CPTP map on $\cT(\cH)$,  and $\cW_0(t,s)$ maps $\ran \cP_0(s)$ to 
$\ran \cP_0(t)$ isometrically in trace norm.
Moreover, 
\be\label{par0pdimstate}
\cW_0(t,s) \cP_0(s)(\rho)=W(t,s)  \cP_0(s)(\rho) W(s,t),
\ee 
where $W(t,s)$ is the Kato operator defined by (\ref{katop}). 
\end{lem}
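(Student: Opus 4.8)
The plan is to establish the identity \eqref{par0pdimstate} first, since the CPTP and isometry claims follow rather easily once we know that $\cW_0(t,s)$ acts on the range of $\cP_0(s)$ by conjugation with the Kato operator $W(t,s)$. First I would fix $s$ and a state $\rho$ with $\rho = \cP_0(s)(\rho)$, and define $R(t) := W(t,s)\cP_0(s)(\rho)W(s,t)$. Using the intertwining relation \eqref{interk}, $W(t,s)P_j(s) = P_j(t)W(t,s)$, and the definition \eqref{projoker} of $\cP_0$, one checks that $R(t) = \sum_n P_n(t) R(t) P_n(t)$, i.e. $R(t)$ stays in $\ran\cP_0(t)$ for all $t$; in particular $\cP_0(t)R(t) = R(t)$. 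Then I would differentiate $R(t)$ using \eqref{katop}, getting $\dot R(t) = K(t)R(t) - R(t)K(t) = [K(t),R(t)]$ (the second term picks up a sign from differentiating $W(s,t) = W(t,s)^{-1}$, using that $W$ is unitary so $\partial_t W(s,t) = -W(s,t)K(t)$). The remaining task is to show this commutator equals $[\cP_0'(t),\cP_0(t)](R(t))$, so that $R(t)$ solves the same ODE \eqref{w0} as $\cW_0(t,s)\cP_0(s)(\rho)$; uniqueness of solutions to the linear ODE then gives $R(t) = \cW_0(t,s)\cP_0(s)(\rho)$, which is \eqref{par0pdimstate}.

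The heart of the argument is therefore the operator identity $[\cP_0'(t),\cP_0(t)](A) = [K(t),A]$ for every $A$ in $\ran\cP_0(t)$, where $K(t) = \sum_j P_j'(t)P_j(t)$. I would prove this by direct computation: writing $A = \sum_n P_n A P_n$ and using $\cP_0'(t)(A) = \sum_n (P_n' A P_n + P_n A P_n')$, then applying $\cP_0(t)$ to extract the block-diagonal part, and comparing with $[K,A] = \sum_j (P_j' P_j A - A P_j' P_j)$ expanded against the resolution of identity $\sum_m P_m = \un$. The key algebraic facts to exploit are $P_j P_j' P_j = 0$ (from differentiating $P_j = P_j^2$, as already noted in the text after \eqref{interk}), the orthogonality $P_j P_k = \delta_{jk}P_j$, and its differentiated form $P_j' P_k + P_j P_k' = \delta_{jk}P_j'$. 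Tracking which terms survive projection onto the diagonal, both sides should reduce to the same expression; this is the step I expect to be the main obstacle, purely as a bookkeeping matter, though it is a standard Kato-type manipulation.

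Once \eqref{par0pdimstate} is in hand, the rest is short. Since $W(t,s)$ is unitary on $\cH$ and $\cP_0(s)(\rho)$ is trace class, $\rho \mapsto W(t,s)\cP_0(s)(\rho)W(s,t)$ is manifestly completely positive (it is conjugation by a fixed operator, composed with the CPTP map $\cP_0(s)$, which is CPTP by its Kraus representation noted in the text) and trace preserving (cyclicity of the trace plus unitarity, together with trace preservation of $\cP_0(s)$); hence $\cW_0(t,s)\cP_0(s)$ is CPTP. For the isometry claim: for $\rho \in \ran\cP_0(s)$ one has $\|\cW_0(t,s)\rho\|_1 = \|W(t,s)\rho W(s,t)\|_1 = \|\rho\|_1$ because conjugation by a unitary preserves the trace norm, and $\cW_0(t,s)$ maps $\ran\cP_0(s)$ into $\ran\cP_0(t)$ by \eqref{intertw0}; surjectivity onto $\ran\cP_0(t)$ follows from invertibility of $\cW_0$, established via the propagation relation \eqref{proprel} with $\cW_0(s,t)$ as inverse. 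This completes the proof.
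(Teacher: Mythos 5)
Your proposal is correct and follows essentially the paper's own route: the identity (\ref{par0pdimstate}) is proved, as in Lemma \ref{lemw}, by showing that both sides solve the ODE (\ref{w0}) with the same initial condition at $t=s$, and the key algebraic step $[\cP_0'(t),\cP_0(t)](A)=[K(t),A]$ for $A\in\ran \cP_0(t)$ is exactly the block computation the paper carries out (using $P_jP_j'P_j\equiv 0$ and the intertwining relation (\ref{interk})); the bookkeeping you flag does go through. The only difference is organisational: the paper quotes \cite{AFGG} for the CPTP and isometry statements, whereas you deduce them directly from the identity together with unitarity of $W(t,s)$, the Kraus form of $\cP_0(s)$ and the propagation relation (\ref{proprel}), which is a sound, self-contained alternative.
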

\begin{rem}
The first statements are shown  in \cite{AFGG}, and (\ref{par0pdimstate}) is proven below in lemma \ref{lemw}.
\end{rem}
We are now ready to give the approximation of $\cU(t,0)$:

\begin{thm}\label{asuper}
Assume {\bf Reg} and {\bf Spec} with  $\sigma_k(t)=\{e_k(t)\}$ for all $1\leq k\leq d$ and all $t\in[0,1]$. Then the solution to (\ref{ulind}) satisfies, as $(\eps,g)\ra (0,0)$ with $g/\eps\ra 0$, 
\begin{align}\label{apupr}
\cU(t,0)\cP_0(0)&=\cU^0(t,0)\cP_0(0)+\frac{g}{\eps}\int_0^t \cP_0(t)\cW_0(t,s)\cP_0(s) \cL_s^1 \cW_0(s,0)\cP_0(0)ds+\ode(g+(g/\eps)^2)\\ \nonumber
&=\cP_0(t)\cW_0(t,0)\cP_0(0)+\frac{g}{\eps}\int_0^t \cP_0(t)\cW_0(t,s)\cP_0(s) \cL_s^1 \cW_0(s,0)\cP_0(0)ds+\ode(\eps+(g/\eps)^2),
\end{align}
where, for all $A\in\cB(\cH)$, 
\be
 \cW_0(t,s)\cP_0(s)(A)=\cP_0(t)\cW_0(t,s)\cP_0(s)(A)=\sum_{1\leq n\leq d}W(t,s)P_n(s)AP_n(s)W(s,t),
 \ee
with $W(t,s)$ the Kato operator (\ref{katop}).
\end{thm}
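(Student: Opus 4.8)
The plan is to combine a Duhamel expansion in the coupling constant $g$ with the first order adiabatic theorem for the gapped Lindbladian $\cL^0_t$. I would start from the variation–of–constants formula relating $\cU(t,s)$ to the $g=0$ propagator $\cU^0(t,s)$: using $\partial_s\cU^0(t,s)=-\frac1\eps\cU^0(t,s)\cL^0_s$, iterating the Duhamel identity once gives
\be
\cU(t,0)=\cU^0(t,0)+\frac g\eps\int_0^t\cU^0(t,s)\cL^1_s\cU^0(s,0)\,ds+\Big(\frac g\eps\Big)^2\int_0^t\!\!\int_0^s\cU^0(t,s)\cL^1_s\cU^0(s,r)\cL^1_r\cU(r,0)\,dr\,ds .
\ee
Since all propagators appearing here are $\|\cdot\|_\tau$-contractions and $\sup_{s\in[0,1]}\|\cL^1_s\|_\tau<\infty$ under \textbf{Reg}, the double-integral remainder is $\ode((g/\eps)^2)$ uniformly on $[0,1]$, so the problem reduces to evaluating the first order term $\frac g\eps\int_0^t\cU^0(t,s)\cL^1_s\cU^0(s,0)\cP_0(0)\,ds$ to leading order.

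Here I would invoke the first order adiabatic theorem for $\cL^0_t$: by \textbf{Spec}, $0$ is an isolated point of $\sigma(\cL^0_t)$ with spectral projector $\cP_0(t)$, uniformly separated by the gap $G$ from $\sigma(\cL^0_t)\setminus\{0\}\subset\{-\i(e_n(t)-e_k(t)):\,n\neq k\}$, and the associated Kato parallel transport is precisely $\cW_0(t,s)$. The theorem gives $\|\cU^0(s,0)\cP_0(0)-\cP_0(s)\cW_0(s,0)\cP_0(0)\|_\tau=\ode(\eps)$ uniformly in $s$; together with contractivity of $\cU^0(t,s)$, boundedness of $\cL^1_s$, and the intertwining relation (\ref{intertw0}) (which gives $\cP_0(s)\cW_0(s,0)\cP_0(0)=\cW_0(s,0)\cP_0(0)$), this replaces $\cU^0(s,0)\cP_0(0)$ by $\cW_0(s,0)\cP_0(0)$ in the integral at the cost of an $\ode(g)$ error. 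I would then split $\cL^1_s\cW_0(s,0)\cP_0(0)=D_s+O_s$ into its $\cP_0(s)$-diagonal part $D_s=\cP_0(s)\cL^1_s\cW_0(s,0)\cP_0(0)$ and its complement $O_s=(\un-\cP_0(s))\cL^1_s\cW_0(s,0)\cP_0(0)$, both bounded in $\|\cdot\|_\tau$ and, by \textbf{Reg}, smooth in $s$. For $D_s$ the adiabatic theorem applied to $\cU^0(t,s)\cP_0(s)$ yields $\cU^0(t,s)D_s=\cP_0(t)\cW_0(t,s)\cP_0(s)\cL^1_s\cW_0(s,0)\cP_0(0)+\ode(\eps)$ uniformly, which integrates to exactly the claimed correction term modulo $\ode(\eps)$.

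The crux is to show that the off-diagonal contribution $\int_0^t\cU^0(t,s)O_s\,ds$ is $\ode(\eps)$, so that the $g/\eps$ prefactor turns it into $\ode(g)$. For this I would use that $\cL^0_s$ restricted to $\ran(\un-\cP_0(s))$ is boundedly invertible uniformly in $s$ — this is where the gap $G$ and the finiteness of $d$ in \textbf{Spec} enter, via the block form $\cL^0_s(A)=-\i\sum_{n,k}(e_n(s)-e_k(s))P_n(s)AP_k(s)$ — with reduced inverse $\cS_s$ smooth in $s$. Setting $R_s:=\cS_sO_s$, so that $O_s=\cL^0_sR_s$ with $R_s$ smooth and bounded, and using $\cU^0(t,s)\cL^0_s=-\eps\,\partial_s\cU^0(t,s)$, an integration by parts gives
\be
\int_0^t\cU^0(t,s)O_s\,ds=-\eps\big(R_t-\cU^0(t,0)R_0\big)+\eps\int_0^t\cU^0(t,s)\dot R_s\,ds=\ode(\eps),
\ee
again by contractivity of $\cU^0$. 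Collecting the pieces yields the first line of (\ref{apupr}) with error $\ode(g+(g/\eps)^2)$; the second line follows by one further application of the adiabatic theorem to replace $\cU^0(t,0)\cP_0(0)$ by $\cP_0(t)\cW_0(t,0)\cP_0(0)$ at the cost of $\ode(\eps)$, which absorbs the $\ode(g)$ since $g/\eps\to0$, and the explicit sum formula for $\cW_0(t,s)\cP_0(s)$ is then just (\ref{par0pdimstate}) of Lemma \ref{wpcptp} together with the definition (\ref{projoker}) of $\cP_0$ and the Kato intertwining (\ref{interk}). I expect the two genuinely delicate points to be (i) invoking the first order adiabatic theorem in the non-self-adjoint, trace-class Lindbladian setting with a \emph{uniform} $\ode(\eps)$ bound — the same machinery underlying Propositions \ref{propweak} and \ref{propweaksuper}, of which this theorem is a corollary — and (ii) the uniform-in-$s$ control of the reduced resolvent $\cS_s$ and of $\dot R_s$ needed to make the integration by parts deliver genuinely $\ode(\eps)$ rather than merely $o(1)$.
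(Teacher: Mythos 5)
Your proposal is correct, and its skeleton — the once-iterated Duhamel identity with an $\ode((g/\eps)^2)$ remainder by contractivity, followed by an adiabatic identification of the zeroth and first order terms — is the same as the paper's (Corollary \ref{coruv}); where you genuinely diverge is in how the first-order integrand is treated. The paper stays at the Hamiltonian level: it replaces $\cU^0$ by the unitarily implemented adiabatic evolution $\cV^0$ built from $V(t,s)$ (Lemma \ref{atqm}), identifies the $\cP_0$-diagonal blocks with $\cW_0$ via Lemma \ref{expadiab0} and Remark \ref{v=w}, and kills the coherences by the operator-level integration by parts of Lemma \ref{vancor2}, which exploits the explicit phases $\e^{\pm\frac{\i}{\eps}\int (e_n-e_m)}$ and the resolvent of $H$. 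You instead work entirely at the superoperator level: the adiabatic theorem for the gapped eigenvalue $0$ of $\cL^0_t$ with spectral projector $\cP_0(t)$ and Kato transport $\cW_0$, plus an integration by parts using the reduced resolvent of $\cL^0_s$ on $\ran(\un-\cP_0(s))$ to dispose of the off-diagonal part $O_s$. Both ingredients are indeed available in the stated setting: the non-self-adjoint, uniform $\ode(\eps)$ adiabatic estimate you invoke is exactly Lemma \ref{idint}/Corollary \ref{bddsmall} with $\cG=\cL^0$ and $\cK=[\cP_0',\cP_0]$ (i.e. Lemma \ref{adiag0} at $g=0$; one checks $[\cP_0',\cP_0]$ is off-diagonal with respect to all spectral projectors of $\cL^0_t$, and on $\ran\cP_0$ the comparison dynamics reduces to $\cW_0$ since the eigenvalue is $0$), while your reduced resolvent is uniformly bounded and smooth thanks to the gap $G$, the finite number $d$ of blocks, and {\bf Reg}; your move of first replacing $\cU^0(s,0)\cP_0(0)$ by the $\eps$-independent $\cW_0(s,0)\cP_0(0)$ is what keeps $\dot R_s$ uniformly bounded, playing the role of the phase cancellation in Lemma \ref{vancor2}. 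Your error bookkeeping ($\ode(g)$ per adiabatic replacement and for the off-diagonal piece, $\ode((g/\eps)^2)$ from the remainder, and $\ode(\eps)$ absorbing $\ode(g)$ in the second line) matches (\ref{apupr}). What the paper's route buys is explicit phase and resolvent information that it reuses for the higher-order statements (Lemma \ref{gendiag}, Section \ref{secgen}); what yours buys is independence from the unitary implementation and a direct bridge to the transition-regime analysis — it is essentially the route by which the paper itself recovers the second line of the theorem in Section \ref{trareg} via Lemma \ref{adiag0} and Corollary \ref{recovpereg}.
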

\begin{rem}
i) The first statement compares $\cU(t,0)$ with the  Hamiltonian evolution $\cU^0(t,0)$, while the second one uses a leading order approximation of the latter, hence the different error terms. \\
ii) The projector $\cP_0(t)$ on the left of the  integral term in (\ref{apupr}) shows that the coherences of the correction due to the dissipator vanish to leading order. Actually, as soon as $\sigma_{j}(t)=\{e_j(t)\}$,  we have  for any $n\neq m$ in the regime $g\ll \eps\ll 1$, 
\be
P_n(t)\cU(t,0)(\rho_j)P_m(t)=P_n(t)\cU^0(t,0)(\rho_j)P_m(t)+\ode(g+(g/\eps)^{2}),
\ee
see Lemma \ref{vancor2}. \\
iii) Moreover, under the assumptions of Theorem \ref{asuper},  Proposition \ref{propuread} shows that $P_n(t)\cU^0(t,0)(\rho_j)P_m(t)$ is of order $\eps^2$ if $n$ and $m$ are different from $j$, while it is of order  $\eps$ if $n$ or $m$ equals $j$. 
\end{rem}

\subsection{Slow drive regime $\eps \ll g\ll 1$}

We consider now larger time scales $1/\eps$ for the drive, which implies the adiabatic dynamics within the instantaneous eigenspaces of the driving Lindbladian $\cL_t^{[g]}$ will dominate. 
In order to tackle this regime for $g$ small, we will need more precise spectral information on $\cL_t^{[g]}$ that require working in a simpler setup. In particular, the next result  holds under the assumption that $\cH$ is finite dimensional and that $\sigma(H(t))$ is generic in the following sense:

\medskip

\noindent{\bf Gen} \\
\noindent
$\bullet$ $\dim \cH=d$, $1<d<\infty$.\\
$\bullet$ $\forall t\in [0,1]$, $\sigma(H(t))=\{e_1(t),\cdots,e_d(t)\}$ is simple and the Bohr frequencies $\{e_j(t)-e_k(t)\}_{1\leq j\neq k\leq d}$ are distinct.\\

Note that {\bf Spec} holds with $\sigma_j(t)=\{e_j(t)\}$ if {\bf Gen} is satisfied.\\

\noindent
Let $\{\ffi_j\}_{1\leq j\leq d}$ be a fixed orthonormal basis of eigenvectors of $H(0)$. We consider $\{\ffi_j(t)\}_{1\leq j\leq d}$, the orthonormal basis of smooth eigenvectors of $H(t)$ 
defined by 
\be\label{instev}
\ffi_j(t)=W(t,0)\ffi_j, \ \ \mbox{s.t.}  \ \ \bra \ffi_j(t) | \ffi_j'(t)\ket\equiv 0,
\ee
see (\ref{katop}) and (\ref{interk}).
Therefore
\be
P_j(t)=|\ffi_j(t)\ket\bra \ffi_j(t)| \ \ \mbox{and}\  \ H(t)=\sum_{1\leq j\leq d}e_j(t)P_j(t).
\ee  

As a consequence of assumption {\bf Gen}, $\cL_t^0$ admits $0$ as a $d-$fold eigenvalue, with
\be
\ker \cL_t^0 =\spa \{P_j(t),1\leq  j \leq d\},
\ee
whereas all its other eigenvalues are purely imaginary and simple. The spectral projector onto $\ker \cL_t^0$ is the projector $\cP_0(t)$ introduced in (\ref{projoker}).
The splitting of the eigenvalue $0$ of $\cL_t^0$ by the addition of the dissipator $g\cL_t^1$ is thus governed to leading order in $g$ by the operator on $\cB(\cH)$
\be
\tilde \cL_t^1=\cP_0(t)\circ \cL_t^1\circ \cP_0(t).
\ee
See (\ref{mattildel}) for the matrix form of $\tilde \cL_t^1|_{\ker \cL_t^0}$ in the ordered basis $\{P_1(t), \dots, P_d(t)\}$.\\

We assume that the splitting induced by $g\cL_t^1$ is maximal in the following sense

\medskip 
\noindent {\bf Split}  

\noindent
$\bullet$ For all $t\in[0,1]$, the spectrum of the restriction of $\tilde \cL^1_t$ to ${\ker \cL_t^0}$ is simple.\\

\noindent
{ Hypothesis {\bf Split} is generic in the sense that in absence of very specific symmetry, the dissipator $\cL_t^1$ fully lifts the degeneracy of the eigenvalue zero of $\cL_t^0$, which is equivalent to saying that the matrix {representation of $\tilde \cL_t^1|_{\ker \cL_t^0}$}  (\ref{mattildel}) has simple eigenvalues. {  It implies that for each fixed $t\in [0,1]$ and $g>0$ small enough, $\cL_t^{[g]}$ has one dimensional kernel, so that  the (Ces\'aro) limit as $s\ra \infty$ of the semigroup ${e^{s\cL_t^{[g]}}}$ converges to the projector onto that kernel}. This is true for the example worked out in  Section \ref{ex}, and for the simple quantum reset models considered in \cite{HJ}, section 2.1, in particular.}\\

\noindent
Consequently, for $g>0$ small enough, the spectrum of the  Lindbladian $\cL_t^{[g]}$ is simple, see \cite{K2}, with a non trivial kernel. Moreover, thanks to Remark \ref{gershg} below, the real parts of all the $d-1$ eigenvalues of order $g$ are strictly negative for $g>0$ small enough.
The simplicity of the spectrum of $\cL_t^{[g]}$ for small  $g>0$ allows us to construct a propagator $({\cal V}(t,s))_{0\leq s\leq t\leq 1}$, which possesses the intertwining property with  all spectral projectors of  $\cL_t^{[g]}$, and approches  $({\cal U}(t,s))_{0\leq s\leq t\leq 1}$ under the sole condition $\eps\ll g\ll 1$:  it satisfies for $g>0$ small enough and all $0\leq s\leq t\leq 1$  
\be\label{glimpse}
\|\cU(t,s)-{\cal V}(t,s)\|_\tau=\ode(\eps/g).
\ee
The explicit description of ${\cal V}(t,s)$, which depends on the spectral data of $\cL_t^{[g]}$ and $\eps$, is too involved for this presentation section, and we refer the reader to Proposition \ref{je<g} for more details. 

We present  a statement  that holds under the supplementary condition $\eps\ll g \ll \sqrt{\eps.}$
Our second result describes the leading order of the density matrix $\cU(t,0) (P_j(0))$, which is characterised by $\tilde \cL^1_t$,  in the slow drive regime.
\begin{thm}\label{betcont}
Assume {\bf Reg}, {\bf Gen} and {\bf Split}.  Then,  for any fixed  $ 0<t\leq 1$, and  $j\neq k$, the solution to (\ref{ulind}) satisfies for $(\eps, g)\ra (0,0)$ with $\eps/g\ra 0$ and $g^2/\eps\ra 0$, 
\be
\cU(t,0) (P_j(0))=\tilde \nu_0(t)+\ode(g^2/\eps+\eps/g),
\ee
where $\tilde \nu_0(t)=\cP_0(t)(\tilde \nu_0(t))$ is determined by $\tilde \cL_t^1(\tilde \nu_0(t))=0$ and $\tr  (\tilde \nu_0(t))=1$.
\end{thm}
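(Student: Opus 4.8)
The plan is to prove Theorem~\ref{betcont} by combining the adiabatic approximation for the Lindbladian $\cL_t^{[g]}$ (valid when $\eps\ll g$, encoded in the propagator $\cV(t,s)$ and estimate~\eqref{glimpse}) with a control of the error made by replacing the true spectral data of $\cL_t^{[g]}$ by the leading-order data obtained from $\tilde\cL_t^1$ (valid when $g$ is small). First I would invoke \eqref{glimpse}: since $\eps/g\to 0$, we have $\cU(t,0)(P_j(0))=\cV(t,0)(P_j(0))+\ode(\eps/g)$, so it suffices to analyse $\cV(t,0)(P_j(0))$. By construction $\cV(t,s)$ intertwines the spectral projectors of $\cL_t^{[g]}$; let $\Pi^{[g]}(t)$ denote the (rank-one, by \textbf{Split}) spectral projector onto $\ker\cL_t^{[g]}$ and $\Pi^{[g]}_\perp(t)=\un-\Pi^{[g]}(t)$ the complementary one associated with the $d-1$ eigenvalues of order $g$ having strictly negative real part. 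Because $\cV$ preserves each spectral subspace, $\cV(t,0)(P_j(0))=\cV(t,0)\Pi^{[g]}(0)(P_j(0))+\cV(t,0)\Pi^{[g]}_\perp(0)(P_j(0))$.

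Next I would show the $\Pi^{[g]}_\perp$ contribution is negligible and the $\Pi^{[g]}$ contribution converges to $\tilde\nu_0(t)$. For the kernel part, $\cV(t,0)\Pi^{[g]}(0)$ maps into $\ran\Pi^{[g]}(t)=\C\,\nu^{[g]}(t)$, where $\nu^{[g]}(t)$ is the unique normalised state with $\cL_t^{[g]}(\nu^{[g]}(t))=0$; moreover, since $\cV$ is built from a trace-preserving adiabatic evolution that carries the instantaneous stationary state to the instantaneous stationary state (the intertwining plus trace preservation force $\cV(t,0)\nu^{[g]}(0)=\nu^{[g]}(t)$, up to the usual parallel-transport phase which is trivial on a one-dimensional space of states), I get $\cV(t,0)\Pi^{[g]}(0)(P_j(0))=c_j(g)\,\nu^{[g]}(t)$ for a scalar $c_j(g)=\tr(\Pi^{[g]}(0)(P_j(0)))$ obtained by pairing with the left eigenvector. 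Standard analytic perturbation theory (\cite{K2}), using that $\tilde\cL_t^1$ has simple spectrum on $\ker\cL_t^0$, gives $\nu^{[g]}(t)=\tilde\nu_0(t)+\ode(g)$ and $c_j(g)=1+\ode(g)$ uniformly in $t$, since the unperturbed kernel projector is $\cP_0$ and the $g$-correction is governed by $\tilde\cL_t^1$. For the perpendicular part, $\Pi^{[g]}_\perp(0)(P_j(0))=(\un-\cP_0(0))(P_j(0))+\ode(g)=\ode(g)$ because $P_j(0)\in\ran\cP_0(0)$, and $\cV(t,0)$ is bounded on that subspace (indeed contracting, since the associated eigenvalues have negative real part), so this whole term is $\ode(g)$. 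Assembling: $\cV(t,0)(P_j(0))=\tilde\nu_0(t)+\ode(g)+\ode(\eps/g)$.

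Finally I would reconcile the error term $\ode(g)$ with the claimed $\ode(g^2/\eps+\eps/g)$: the hypothesis $g^2/\eps\to 0$ together with $\eps/g\to 0$ forces $g\to 0$, but one must check that $g=\ode(g^2/\eps)+\ode(\eps/g)$ is \emph{not} automatic — in fact $g^2/\eps + \eps/g \ge g^{1/2}\cdot(\text{something})$ is not obviously $\ge g$. So the honest route is to sharpen the kernel estimate: write $\nu^{[g]}(t)=\tilde\nu_0(t)+g\,\nu_1(t)+\ode(g^2)$ from second-order perturbation theory, and show the linear-in-$g$ term $g\nu_1(t)$ is itself absorbed — this is where I expect the main technical point. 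The resolution should come from the adiabatic construction of $\cV$ (Proposition~\ref{je<g}): the eigenvalue branch through $0$ is exactly $0$ for all $g$ (the kernel is genuinely non-trivial, not just small), so the adiabatic phase/decay factor on that branch is trivial and the only $g$-dependence in $\cV(t,0)\Pi^{[g]}(0)$ comes through the projectors $\Pi^{[g]}(0),\Pi^{[g]}(t)$ and through the $\ode(\eps/g)$-type remainders in the adiabatic theorem for $\cV$ versus $\cU$; tracking these remainders more carefully in Proposition~\ref{je<g} yields the stated $\ode(g^2/\eps+\eps/g)$ rather than $\ode(g)$. Concretely, the $g^2/\eps$ term is the genuine adiabatic error for the $\eps$-rescaled dynamics generated by the $O(g)$-small dissipator-induced splitting, so it scales as $(\text{adiabatic parameter})\times(\text{gap})^{-1}\sim \eps/(g)\times\ldots$ — I would extract this directly from the quantitative form of \eqref{glimpse} and its refinement. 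The main obstacle, then, is not the soft structure (which is the intertwining plus one-dimensionality argument above) but pinning down the precise order of the adiabatic remainder when the spectral gap of $\cL_t^{[g]}$ itself shrinks like $g$: one must verify that the adiabatic theorem survives with the gap replaced by $cg$ and produces an error $\ode(\eps/g)$ on the decaying subspace and $\ode(g^2/\eps)$ on the would-be-resonant kernel branch, uniformly on $[0,1]$, which is presumably exactly the content of Proposition~\ref{je<g} that I would cite.
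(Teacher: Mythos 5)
Your decomposition along the spectral projector onto $\ker \cL^{[g]}_0$ contains the fatal step. You write $\Pi^{[g]}_\perp(0)(P_j(0))=(\un-\cP_0(0))(P_j(0))+\ode(g)=\ode(g)$, which implicitly identifies the rank-one kernel projector $\Pi^{[g]}(0)$ with a perturbation of $\cP_0(0)$. That identification is false: only the rank-$d$ projector $\cP_0^{[g]}(0)$ onto the whole $0$-group of eigenvalues satisfies $\cP_0^{[g]}(0)=\cP_0(0)+\ode(g)$, see (\ref{perpo}). Since the left eigenvector of $\cL^{[g]}_0$ at the eigenvalue $0$ is $\un$ (the dual semigroup is unital), one has $\Pi^{[g]}(0)(P_j(0))=\tr(P_j(0))\,\nu^{[g]}(0)=\nu^{[g]}(0)$, hence $\Pi^{[g]}_\perp(0)(P_j(0))=P_j(0)-\nu^{[g]}(0)$, which is of order $1$, not $\ode(g)$. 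Mere boundedness or contractivity of $\cV(t,0)$ on the complementary subspace therefore cannot close the argument; as written, your proof would yield $\cU(t,0)(P_j(0))=\tilde\nu_0(t)+o(1)$ uniformly down to $t=0$, contradicting $\cU(0,0)(P_j(0))=P_j(0)$. The actual mechanism, and the reason the theorem requires a fixed $t>0$, is dynamical: the $O(1)$ component along the $d-1$ branches emanating from $0$ is multiplied by the factors $\e^{\int_0^t\lambda_l^{[g]}(u)du/\eps}$ whose real parts satisfy $\Re\lambda_l^{[g]}\leq -cg$ by {\bf Split} and Remark \ref{gershg}, so that for fixed $t>0$ and $g/\eps\to\infty$ these contributions are $\ode((\eps/g)^\infty)$; the coherence blocks $\cP^{[g]}_{j'k'}$ must be treated separately (their exponentials only oscillate), but they contribute $\ode(g)$ because the unperturbed coherence projectors annihilate $P_j(0)$. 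None of this appears in your proposal, so there is a genuine gap at its central step. (Your claim $\cV(t,0)\nu^{[g]}(0)=\nu^{[g]}(t)$ is also not automatic, but it is fixable: write $\cP^{[g]}_{00}(t)=\big|\nu^{[g]}(t)\KET\BRA\un\big|$, use the intertwining of $\cW$, and check the scalar coefficient is constant by differentiating its trace.)

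Two further points. Your worry in the last paragraph about absorbing an $\ode(g)$ error is unfounded: in the stated regime $\eps/g\to 0$ one has $g\leq g^2/\eps$, so $\ode(g)$ is automatically $\ode(g^2/\eps)$ and no second-order expansion of $\nu^{[g]}$ is needed. Moreover, the paper does not obtain Theorem \ref{betcont} from Proposition \ref{je<g} at all: it passes through the transition-regime reduced dynamics (Proposition \ref{tech} and Corollaries \ref{cor1}, \ref{corevolad}, \ref{slowdrive}), where the $g^2/\eps$ error comes from replacing the exact generator on the slow subspace by $g\tilde\cL^1_t$ and the $\eps/g$ from the adiabatic approximation of the reduced dynamics in the parameter $\delta=\eps/g$. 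Your alternative route via $\cV(t,s)$ could in principle be repaired along the lines sketched above (kernel branch exactly, exponentially decaying branches via the $-cg$ bound, $\ode(g)$ coherence corrections), but as it stands the key estimate fails.
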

In physical terms, the statements above mean that in the regime considered, the drive is so slow that the dissipator has time enough to determine the instantaneous invariant state that the adiabatic dynamics selects, within the kernel of the Hamiltonian part of the Lindbladian.
\begin{rem}
i) The extra constraint $g^2\ll \eps$ stems from the fact that we only retain $\tilde \cL_t^1$ in the description of  $\cU(t,0) (P_j(0))$.\\
ii) Accordingly, for any $1\leq k\leq d$, the transition probability to $P_k(t)$ starting at $P_j(0)$ reads
\be
\tr (P_k(t) \cU(t,0) (P_j(0)))=\tr(P_k(t)\tilde \nu_0(t))+\ode(g^2/\eps+\eps/g),
\ee
while the coherences all vanish to leading order, since $\tilde \nu_0(t)=\cP_0(t)(\tilde \nu_0(t))$.\\
iii) More precise results taking into account the other eigenstates of $\tilde \cL^1_t$ can be found in Corollary \ref{slowdrive}.\\
\end{rem}

\subsection{Transition regime $g\ll \sqrt \eps\ll 1$}

The asymptotic expressions stated above require either $g\ll \eps$ or $\eps \ll g$, and thus do not cover the transition regime where $\eps$ and $g$ are roughly of the same order. Our next result  bridges this gap: for an initial state $\rho_j=P_j(0)\rho_jP_j(0)\in \cT(\cH)$, it provides an approximation of the evolved state $\cU(t,0) (\rho_j)$ which holds as soon as $g\ll \sqrt \eps\ll 1$, a regime that covers partly both the perturbative and slow drive regimes. Moreover, this result only requires the spectral assumption {\bf Spec} with $\sigma_j(t)=\{e_j(t)\}$, for all $1\leq j\leq d$, regardless of the dimensions of the spectral projectors $P_j(t)$ of $H(t)$.

\medskip

Let $\tilde \cL_t^1=\cP_0(t) \cL_t^1 \cP_0(t)$ and consider $(\tilde \Psi_{\delta}(t,s))_{0\leq s\leq t\leq 1}$, defined for $\delta>0$ by
\bea\label{auxad}
\left\{\begin{matrix}
\delta \partial_t \tilde \Psi_{\delta}(t,s)={\cW_0}(0,t)\tilde \cL_t^1\cW_0(t,0)\tilde \Psi_{\delta}(t,s),\\
 \tilde \Psi_{\delta}(s,s)=\un.\hfill \end{matrix}\right.
\eea
We call $\tilde \Psi_{\delta}(t,s)$ the {\it reduced dynamics}, since (\ref{intertw0}) implies $ [\tilde \Psi_{\delta}(t,s),\cP_0(0)]\equiv 0$ .\\

The following Theorem shows that $\tilde \Psi_{\delta}(t,s)$ with $\delta = \eps/g$ provides an approximation of  $\cU(t,0) \cP_0(0)$ in the transition regime $g\ll \sqrt\eps\ll 1$:
\begin{thm}\label{thmtra} Assume {\bf Reg} and {\bf Spec}  with $\sigma_j(t)=\{e_j(t)\}$ for all $1\leq j\leq d$ and all $t\in[0,1]$. 
Then, for all $0\leq t\leq 1$, as $(\eps, g)\ra 0$ with $g^2/\eps\ra 0$, we have for the solution of (\ref{ulind})
\be
\cU(t,0)\cP_0(0)=\cW_0(t,0)\tilde \Psi_{\eps/g}(t,0)\cP_0(0)+\ode(\eps+g+g^2/\eps).
\ee
Consequently, for any state $\rho_j=P_j(0)\rho_jP_j(0)\in \cT(\cH)$ and  any $1\leq j,k\leq d$, $0\leq t\leq 1$, we have in the same regime,
\begin{align}
\tr \{P_k(t)\cU(t,0)(\rho_j)\}
&=\tr \{P_k(0) \tilde \Psi_{\eps/g}(t,0)(\rho_j)\}+\ode(\eps+g+g^2/\eps).
\end{align}
\end{thm}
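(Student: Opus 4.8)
The plan is to transform the Lindblad equation (\ref{ulind}) into an interaction picture adapted to the fast Hamiltonian flow, separate the dynamics into the slow sector $\ran\cP_0$ and the fast sector $\ran(\un-\cP_0)$, and show that in the regime $g^2\ll\eps$ the coupling between sectors is negligible to leading order, so that the slow sector is governed precisely by the reduced dynamics $\tilde\Psi_{\eps/g}$. First I would set $\cU(t,0)=\cU^0(t,0)\cV(t,0)$, where $\cU^0$ is the $g=0$ propagator; then $\cV$ solves $\eps\partial_t\cV(t,0)=g\,\cU^0(0,t)\cL_t^1\cU^0(t,0)\cV(t,0)$ with $\cV(0,0)=\un$. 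By the adiabatic theorem for $\cU^0$ (which, under {\bf Reg} with all derivatives of $H$ vanishing at $0$ — invoked via Section \ref{secgen}, or directly via Proposition \ref{propuread} at leading order), we have $\cU^0(t,0)=\cP_0(t)\cW_0(t,0)\cP_0(0)+\cU^0(t,0)(\un-\cP_0(0))+\ode(\eps)$, and more importantly $\cU^0(t,0)\cP_0(0)=\cW_0(t,0)\cP_0(0)+\ode(\eps)$ with the intertwining (\ref{intertw0}). The key point is that when $\cP_0(0)$ is inserted on the right, the generator of $\cV$ restricted to $\ran\cP_0(0)$ is, up to $\ode(\eps)$ corrections, exactly $(g/\eps)\,\cW_0(0,t)\tilde\cL_t^1\cW_0(t,0)$, i.e.\ the generator of $\tilde\Psi_{\eps/g}$.

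The main steps, in order: (1) write $\cU(t,0)\cP_0(0) = \cU^0(t,0)\cV(t,0)\cP_0(0)$ and expand $\cV(t,0)\cP_0(0)$ by iterating the Duhamel/Dyson series in the small parameter $g/\eps$; (2) in the leading ($n=1$) Dyson term, replace $\cU^0(0,s)\cL_s^1\cU^0(s,0)$ acting on $\ran\cP_0(0)$ by $\cW_0(0,s)\cP_0(s)\cL_s^1\cP_0(s)\cW_0(s,0) = \cW_0(0,s)\tilde\cL_s^1\cW_0(s,0)$ up to $\ode(\eps)$ — this uses the adiabatic approximation of $\cU^0$ together with the fact (Lemma \ref{wpcptp}, eq.\ (\ref{par0pdimstate})) that $\cW_0(s,0)\cP_0(0)$ has the Kato-conjugation form; (3) observe that the projector $\cP_0(s)$ that naturally appears sandwiching $\cL_s^1$ in step (2) lets us identify the integrand with the generator in (\ref{auxad}); (4) control the higher Dyson terms: the $n$-th term is $\ode((g/\eps)^n)$ from crude norm bounds, but this only sums to something useful when $g\ll\eps$, so instead one reorganizes — define $\cV_{\mathrm{red}}$ as the solution driven only by the $\cP_0$-compressed generator $(g/\eps)\cP_0(0)\cW_0(0,t)\tilde\cL_t^1\cW_0(t,0)\cP_0(0)$, which is exactly $\tilde\Psi_{\eps/g}(t,0)\cP_0(0)$, and estimate $\cV(t,0)\cP_0(0)-\cV_{\mathrm{red}}(t,0)\cP_0(0)$ via a single Duhamel identity whose integrand involves the \emph{off-diagonal} part $(\un-\cP_0(s))\cU^0(s,0)\cdots$; this off-diagonal piece is $\ode(\eps)$ by adiabatic decoupling, the prefactor is $g/\eps$, hence the contribution is $\ode(g)$, while a second-order off-diagonal excursion returns a factor $(g/\eps)\cdot\eps\cdot(g/\eps)=g^2/\eps$. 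Collecting, $\cU(t,0)\cP_0(0)=\cU^0(t,0)\cV_{\mathrm{red}}(t,0)\cP_0(0)+\ode(g+g^2/\eps)$, and finally replacing $\cU^0(t,0)$ acting on the (now $\cP_0$-valued) output $\cV_{\mathrm{red}}(t,0)\cP_0(0)$ by $\cW_0(t,0)\cdot$ costs another $\ode(\eps)$, giving $\cW_0(t,0)\tilde\Psi_{\eps/g}(t,0)\cP_0(0)+\ode(\eps+g+g^2/\eps)$. The trace formula for $\rho_j=P_j(0)\rho_jP_j(0)=\cP_0(0)(\rho_j)$ then follows by applying $\tr(P_k(t)\,\cdot\,)$, using that $\cW_0(t,0)$ acts by Kato conjugation (so $\tr(P_k(t)\cW_0(t,0)(\sigma))=\tr(P_k(0)\sigma)$ for any $\sigma\in\ran\cP_0(0)$ by the intertwining (\ref{interk}) and cyclicity), and $\tilde\Psi_{\eps/g}(t,0)$ commutes with $\cP_0(0)$.

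The delicate points are two. First, the adiabatic decoupling estimate for $\cU^0$ must be uniform: I need that $(\un-\cP_0(s))\cU^0(s,0)\cP_0(0)=\ode(\eps)$ in $\|\cdot\|_\tau$ \emph{uniformly in $s\in[0,1]$} and with a constant independent of $g$ (it is, since $\cU^0$ doesn't see $g$), and symmetrically $\cP_0(t)\cU^0(t,s)(\un-\cP_0(s))=\ode(\eps)$ — this is standard once {\bf Reg} gives smoothness of $\cP_0$ and {\bf Spec} gives the gap of $\cL_t^0$ (note the gap of $\cL_t^0$ between $0$ and its purely imaginary spectrum is controlled by the Bohr frequencies, hence by $G$). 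Second, and this is the real obstacle: making the bookkeeping of the Dyson series honest so that the error is genuinely $\ode(g^2/\eps)$ and not merely $\ode(g/\eps)$. The naive Dyson expansion of $\cV$ has terms of size $(g/\eps)^n$ which are \emph{not} small in the transition regime $g\sim\sqrt\eps$ (there $g/\eps\sim 1/\sqrt\eps\to\infty$!). The resolution is that one must \emph{not} expand $\cV$ fully, but only peel off one Duhamel layer comparing $\cV$ to $\cV_{\mathrm{red}}$, where $\cV_{\mathrm{red}}$ is the solution of the \emph{closed} equation on $\ran\cP_0(0)$ (which is a contraction-type flow, bounded uniformly), so that the remainder carries the off-diagonal smallness $\ode(\eps)$ that kills the dangerous $1/\eps$; the factor $g\cdot(\eps/\eps)=g$ and $g\cdot\eps\cdot(g/\eps)/\eps = g^2/\eps$ then emerge from counting how many times the trajectory must leave and re-enter $\ran\cP_0$. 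One also needs that $\tilde\Psi_{\eps/g}$ itself is norm-bounded uniformly in $\eps,g$; this requires checking that the generator $\cW_0(0,t)\tilde\cL_t^1\cW_0(t,0)$ generates a contraction on $\ran\cP_0(0)$ in trace norm — which follows since $\tilde\cL_t^1$ restricted to $\ker\cL_t^0$ is the generator of a CPTP semigroup on that subspace (it is $\cP_0\cL_t^1\cP_0$, a compression of a Lindbladian to an invariant subalgebra) and $\cW_0$ conjugates isometrically by Lemma \ref{wpcptp}.
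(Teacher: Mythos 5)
Your route is genuinely different from the paper's. The paper (Section \ref{trareg}) never passes to the interaction picture with respect to $\cU^0$: it applies the integration-by-parts adiabatic lemma of the Appendix directly to the full Lindbladian $\cL^{[g]}_t$ with the perturbed spectral projector $\cP_0^{[g]}(t)$ of the $0$-group (Lemma \ref{adiag0}), uses degenerate perturbation theory ($\cP_0^{[g]}=\cP_0+\ode(g)$, $\cW^{[g]}_0=\cW_0+\ode(g)$, and $\cL_t^{[g]}\cP_0^{[g]}(t)=g\,\cP_0(t)\cL^1_t\cP_0(t)+\ode(g^2)$, see (\ref{recall})), and then compares the resulting dynamical phase $\Psi^{[g]}_\eps\cP_0^{[g]}(0)$ with $\tilde\Psi_{\eps/g}\cP_0^{[g]}(0)$ by one Duhamel estimate (Proposition \ref{tech}), so that the three error sources $\eps$, $g$, $g^2/\eps$ are each attached to a single mechanism (gap-uniform adiabatic theorem, replacement of the $g$-dependent spectral data by the unperturbed ones, second-order term of the compressed generator). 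You instead keep only the Hamiltonian adiabatic theorem for $\cU^0$, peel off one Duhamel layer between the interaction-picture propagator and $\tilde\Psi_{\eps/g}$, and control the coherence-generating part of $\cL^1$ by hand; this avoids any spectral analysis of $\cL^{[g]}$ and is closer in spirit to the Section \ref{weakcoup} machinery (Lemmas \ref{vancor2}, \ref{gendiag}). The trade-off: the paper gets the uniform bound on $\tilde\Psi_{\eps/g}$ for free by a bootstrap inside Proposition \ref{tech}, while your scheme must supply it separately, and your $\ode(g+g^2/\eps)$ must come out of an oscillatory-integral estimate rather than out of perturbation theory.

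Two steps need to be made precise for your argument to close. First, your step (2) and the phrase ``this off-diagonal piece is $\ode(\eps)$ by adiabatic decoupling'' are not correct as pointwise statements: after the $\ode(\eps)$ replacements, the dangerous term in the Duhamel integrand is $\frac{g}{\eps}\,\cV^0(0,s)\,\cQ_0(s)\cL^1_s\cP_0(s)\,\cW_0(s,0)\tilde\Psi_{\eps/g}(s,0)\cP_0(0)$, which is of size $g/\eps$ pointwise; the estimate $\cQ_0(s)\cU^0(s,0)\cP_0(0)=\ode(\eps)$ does not touch it. Its smallness only appears after integration in $s$, using that on the coherence blocks $\cV^0(0,s)$ carries the phases $\e^{\frac{\i}{\eps}\int_0^s(e_j-e_k)(u)du}$ with $|e_j-e_k|\geq G$, i.e.\ by the integration by parts of Lemma \ref{vancor2} and (\ref{estoff}); the boundary and smooth-derivative terms then give $\ode(g)$, and your advertised $g^2/\eps$ arises exactly when $\partial_s$ hits the $(\eps,g)$-dependent factors $\cV(t,s)$ and $\tilde\Psi_{\eps/g}(s,0)$, whose derivatives are $\ode(g/\eps)$ (note the $\cL^0$ contributions cancel in $\partial_s\cV(t,s)$, which should be checked). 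So your bookkeeping is right, but the mechanism must be non-stationary phase, not decoupling. Second, the uniform bound $\sup_{s\leq t}\|\tilde\Psi_{\eps/g}(t,s)\|_\tau\leq C$ is needed before the Duhamel estimate can be exploited; the paper proves it by bootstrap (that is where $g^2/\eps$ small enters), whereas your structural claim (the compression $\cP_0\cL^1\cP_0$ generates a contraction semigroup on $\ran\cP_0$, transported isometrically by $\cW_0$, Lemma \ref{wpcptp}) is correct but is asserted without proof; it can be justified, e.g., by a Lie--Trotter approximation of $\e^{s\tilde\cL^1}|_{\ran\cP_0}$ by the CPTP maps $(\cP_0\e^{(s/n)\cL^1}\cP_0)^n$, and then it even yields the bound for all values of $\eps/g$. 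With these two points supplied, your proposal gives a complete alternative proof of the theorem.
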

In this transition regime, the dissipator is not strong enough to ensure instantaneous relaxation of the dynamics over the adiabatic time scale, while the coherences are still suppressed by the slow drive.
\begin{rem} 
i) The coherences vanish to leading order since
\be
\cW_0(t,0)\tilde \Psi_{\eps/g}(t,0)\cP_0(0)=\cP_0(t)\cW_0(t,0)\tilde \Psi_{\eps/g}(t,0)\cP_0(0).
\ee 
ii) The result holds in particular for $g=\eps$ in which case we have
\be
\cU(t,0)(\rho_j)=\cW_0(t,0)\tilde \Psi_{1}(t,0)(\rho_j)+\ode(\eps),
\ee
where the reduced dynamics is parameter free and thus of order 1.\\
iii) For any $\delta>0$, the maps $\tilde \Psi_{\delta}(t,0)\cP_0(0)$ and $\cW_0(t,0)\tilde \Psi_{\delta}(t,0)\cP_0(0)$ are CPTP, see Corollary \ref{cor1}.\\
iv) In case $\eps\ll g\ll \sqrt \eps$, the reduced dynamic is itself in an adiabatic regime in the parameter $\delta= \eps/g$, which allows us to recover the slow drive regime as shown in Corollaries \ref{corevolad}, \ref{slowdrive}, further assuming {\bf Gen} and {\bf Split}. Whereas for $g\ll \eps\ll 1$ we fall back on the perturbative regime, see Corollary \ref{recovpereg}, under the present hypotheses.\\
v) In Section \ref{ex},  the reduced dynamics $\tilde \Psi_{\eps/g}(t,0)$ is computed explicitly for a two-level system, under mild symmetry assumptions on the jump operators $\Gamma_l(t)$.   \\
vi) Finally, under {\bf Gen}, the reduced dynamics can be interpreted as the transition matrix of an associated classical Markov process, see Lemma \ref{markov}.

\end{rem}

\medskip

The rest of the paper is organised as follows. In the next section, we consider the perturbative regime, making use of Dyson series. To study this series,  we revisit methods in the adiabatic analysis of evolution equations, which leads to Proposition \ref{propweak},  the main technical result of this section. The statements of Theorems \ref{maing<e} and \ref{asuper} are leading order consequences of this result, as explained at the very end of that section. Section \ref{slodri} is devoted to the slow drive regime, starting with the spectral analysis of the Lindbladian for $g$ small, to get the approximation (\ref{glimpse}) of the Lindbladian evolution for $\eps\ll g$ stated as Proposition \ref{je<g},  which is proven there. The transition regime is finally addressed in Section \ref{trareg}, where the reduced dynamics is introduced and analysed. In particular, various asymptotic values of $\delta=\eps/g$ allowed by the condition $g\ll \sqrt \eps$ are considered as Corollaries \ref{cor1}, \ref{corevolad}, \ref{slowdrive} and \ref{recovpereg}   of Proposition \ref{tech}, the main technical result of the section. For instance, Theorem \ref{thmtra} corresponds to Corollary \ref{cor1}, while Corollaries \ref{corevolad} and \ref{slowdrive} are shown to yield the statements of Theorem \ref{betcont}, whereas Corollary \ref{recovpereg}  partially recovers results in the perturbative regime. An application to a two-level system, or Qubit, illustrating those results is worked out in Section \ref{ex}, while Section \ref{secgen} is devoted to higher order generalisations of the perturbative regime results. The paper closes with a technical appendix gathering some proofs.

 \section{Perturbative regime $g\ll \eps$}\label{weakcoup}
 
In the regime $g\ll \eps$, the dissipator of the Lindbladian can be considered a perturbation of the Hamiltonian part, so that a head on approach using Dyson series in the interaction picture is useful. We work here under {\bf Reg} and {\bf Spec}
\medskip
 
Let $(\cU^0(t,s))_{(t,s)\in \R^2}$ be the propagator on $\cT(\cH)$ solution to the equation in $\cT(\cH)$
\begin{align}\label{u0}
\left\{\begin{matrix}
\eps \partial_t\cU^0(t,s)=\cL_t^0(\cU^0(t,s)), \hfill\cr
\cU^0(s,s)=\un, \ \ (t,s)\in [0,1]^2.
\end{matrix} \right.
\end{align}
Introducing $(U(t,s))_{(t,s)\in\R^2}$, the unitary Schr\"odinger propagator, solution to the equation in $\cH$
\be\label{schr}
\left\{\begin{matrix}
\i \eps \partial_t U(t,s)=H(t)U(t,s), \hfill \cr
U(s,s)=\un, \ \ (t,s)\in [0,1]^2,
\end{matrix} \right.
\ee
we check that for any $\rho\in \cT(\cH)$
\be\label{u0u}
\cU^0(t,s)(\rho)=U(t,s) \rho U^*(t,s),
\ee 
showing that $\cU^0(t,s)$ is actually { unitarily implemented}  on $\cT(\cH)$ and on $\cB(\cH)$, and is well defined for any $0\leq s,t\leq 1$. 
 
The integral form of (\ref{ulind})  and the definition of $\cU^0(t,s)$ yield
\be\label{uint}
\cU(t,r)=\cU^0(t,r)+\frac{g}{\eps}\int_r^t \cU^0(t,s)\circ \cL^1_s\circ \cU(s,r)ds, \ \ \forall 0\leq r\leq t \leq 1.
\ee
By iteration we have  for $0\leq s\leq t \leq 1$,  $N\geq 1$, and with the convention $s_0=t$,
\begin{align}\label{dyson}
\cU(t,s)-\cU^0(t,s)&\\
=\sum_{n=1}^N(g/\eps)^n&\int_s^t\int_s^{s_1}\dots\int_s^{s_{n-1}} \cU^0(t,s_1)\circ \cL^1_{s_1}\circ \cU^0(s_1,s_2)\circ \cL^1_{s_2}\dots \circ \cL^1_{s_n}\circ \cU^0(s_n,s) ds_{n}\dots ds_{2}ds_{1} \nonumber\\
+(g/\eps)^{N+1}&\int_s^t\int_s^{s_1}\dots\int_s^{s_{N}} \cU^0(t,s_1)\circ \cL^1_{s_1}\circ \cU^0(s_1,s_2)\circ \cL^1_{s_2}\dots \circ \cL^1_{s_{N+1}}\circ \cU(s_{N+1},s) ds_{N+1}\dots ds_{2}ds_{1}\nonumber\\
=\sum_{n=1}^\infty(g/\eps)^n&\int_s^t\int_s^{s_1}\dots\int_s^{s_{n-1}} \cU^0(t,s_1)\circ \cL^1_{s_1}\circ \cU^0(s_1,s_2)\circ \cL^1_{s_2}\dots \circ \cL^1_{s_n}\circ \cU^0(s_n,s) ds_{n}\dots ds_{2}ds_{1}.\nonumber
\end{align}
The convergence is in the norm operator sense on $\cT(\cH)$ for the second expression. In particular, with $\sup_{0\leq s\leq 1}\|\cL^1_s\|_\cT=L_1$, the norm of the term of order $n$ is bounded above by 
$((g/\eps)L_1(t-s))^n/n!$, since $\cU^0(t,s)$ is isometric.
\begin{rem}
The Dyson series converges, irrespectively of the value of the ratio $g/\eps$.
\end{rem}

Thanks to (\ref{u0u}), the adiabatic approximation of  $\cU^0(t,s)$ is easily obtained from that of the Schr\"odinger propagator $U(t,s)$, under the spectral hypotheses {\bf Spec} on the Hamiltonian.

\subsection{Adiabatic toolbox}
The gap hypothesis {\bf Spec} and the assumed regularity in time of $H(t)$ ensure the existence of a unitary propagator on $\cH$, $(V(t,s))_{(t,s)\in [0,1]^2}$ defined by 
\be\label{adev}
\left\{\begin{matrix}
\i \eps \partial_t V(t,s)=(H(t)+\i \eps K(t))V(t,s),\\ V(s,s)=\un, \ \ (t,s)\in [0,1]^2,  \hfill 
\end{matrix}\right.
\ee
where $K(t)$ is given in (\ref{multik}).
The adiabatic theorem of quantum mechanics reads, see {\it e.g. }\cite{K1, N1, ASY},
\begin{lem}\label{atqm} Under {\bf Reg} and {\bf Spec}, there exists $c$ such that for any $0\leq s, t\leq 1$, and $\eps >0$, the solutions to (\ref{schr}) and (\ref{adev}) satisfy
\begin{align}\label{adiab0}
&\|U(t,s)-V(t,s)\|\leq c\eps,
\end{align}
where $V$ possesses the intertwining property 
\be\label{inter}
V(t,s)P_j(s)=P_j(t)V(t,s), \ \forall 1\leq j\leq d. 
\ee
\end{lem}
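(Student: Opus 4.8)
The statement to prove is the adiabatic theorem of quantum mechanics (Lemma \ref{atqm}): under {\bf Reg} and {\bf Spec}, $\|U(t,s)-V(t,s)\|\leq c\eps$, with $V$ intertwining the spectral projectors.

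\textbf{Overall strategy.} This is a classical result and I would follow the Kato--Nenciu--Avron-Seiler-Yaffe approach. The intertwining property (\ref{inter}) follows directly from the construction: differentiate $V(t,s)P_j(s)$ and $P_j(t)V(t,s)$ and check they solve the same linear ODE with the same initial condition, using the defining relation (\ref{katop}) for the Kato operator, namely $\partial_t P_j = [K,P_j]$ (which in turn follows from $K = \sum_j P_j'P_j$ and $P_jP_j'P_j\equiv 0$). So the content is the norm estimate (\ref{adiab0}).

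\textbf{Key steps for the estimate.} First I would pass to the interaction picture: set $\Omega(t,s) = V(s,t)U(t,s)$ (or $U(s,t)V(t,s)$), which satisfies $\Omega(s,s)=\un$ and
\be
\i\eps\,\partial_t \Omega(t,s) = -V(s,t)\,\i\eps K(t)\,V(t,s)\,\Omega(t,s),
\ee
so that $\|U(t,s)-V(t,s)\| = \|\Omega(t,s)-\un\|$. Integrating gives $\Omega(t,s)-\un = -\int_s^t V(s,u)K(u)V(u,s)\,\Omega(u,s)\,du$, which naively is only $O(1)$ since $K$ is bounded uniformly. The gain of one power of $\eps$ comes from an integration by parts exploiting oscillation. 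The standard device is to write $K(t) = \sum_j P_j'(t)P_j(t)$ and use that, by the gap hypothesis, for each pair the ``off-diagonal'' part can be represented via the resolvent: one solves the commutator equation $[H(t),X_j(t)] = P_j'(t)P_j(t) - (\text{diagonal part})$ for a bounded operator-valued $C^\infty$ function $X(t)$, using $X(t) = \frac{1}{2\pi}\oint_{\gamma_j}(H(t)-z)^{-1}P_j'(t)P_j(t)(H(t)-z)^{-1}\,dz$ and the Riesz formula (\ref{rieszj}); the gap $G$ bounds $\|X\|$ and its derivatives. Then $V(s,u)K(u)V(u,s)$ (after removing a harmless diagonal piece that actually vanishes here) is of the form $\partial_u\big(V(s,u)\,\i\eps^{-1}\cdots\big)$ up to $C^\infty$ remainder, so integrating by parts in $u$ trades the $1/\eps$ in the phase against the explicit $\eps$ in front, producing boundary terms of size $O(\eps)$ and a remaining integral of size $O(\eps)$. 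Uniformity of $c$ over $0\le s,t\le 1$ follows from compactness of $[0,1]$, the uniform gap $G$, and the $C^\infty$ (hence uniformly bounded with bounded derivatives) hypotheses on $H$ and the $P_j$.

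\textbf{Main obstacle.} The only real subtlety is the integration-by-parts / commutator-equation step: one must set up the solution $X_j(t)$ of $[H(t),\cdot] = $ (off-diagonal source) correctly, verify it is $C^\infty$ with norm controlled by $1/G$ and derivatives of $H$ and $P_j$, and check that the "diagonal" leftover is precisely $\sum_j P_j K P_j$, which vanishes because $P_j K P_j = P_j P_j' P_j = 0$. Everything else is routine Gr\"onwall/triangle-inequality bookkeeping. Since this lemma is quoted from \cite{K1, N1, ASY}, in the paper itself I would simply cite those references rather than reproduce the argument, but the sketch above is how the proof goes.
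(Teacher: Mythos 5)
Your proposal is correct and follows essentially the same route as the paper: the paper proves (\ref{adiab0}) via the abstract integration-by-parts Lemma \ref{idint} and Corollary \ref{bddsmall} in the Appendix, applied with $\cZ=\cH$, $\cX=U$, $\cY=V$, $\cG=-\i H$, $\cK=K$, where the commutator equation is solved by exactly the double-resolvent contour operator $\cR_j$ you write down, the off-diagonality $P_jKP_j=P_jP_j'P_j=0$ plays the same role, and the intertwining (\ref{inter}) is obtained, as you say, by ODE uniqueness. The only difference is presentational (you phrase the Duhamel step in the interaction picture, the paper works directly with $U-V$ in a Banach-space setting), so no substantive gap.
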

\begin{rem}\label{ctoct} Thanks to the second point of assumption {\bf Reg}, we have for $s=0$
\be
\|U(t,0)-V(t,0)\|\leq ct\eps.
\ee
\end{rem}
As a direct consequence, the transition amplitude from the subspace $P_j(s)\cH$ at time $s$, to the subspace $P_k(t)\cH$ at time $t$, $j\neq k$, is of order $\eps$, as $\eps\ra 0$:
\be
\|P_k(t)U(t,s)P_j(s)\|=\|P_k(t)V(t,s)P_j(s)\|+\ode(\eps)=\ode(\eps).
\ee 

Estimate (\ref{adiab0}) in Lemma \ref{atqm} is based on an integration by parts procedure that will be used below in various situations, and even generalised in Section \ref{secgen}. Therefore, the argument is presented in Appendix in a general abstract setup as Lemma \ref{idint}, from which the proof of (\ref{adiab0}) follows. The classical intertwining property (\ref{inter}), obtained by observing that both sides are solutions to the same differential equation in $t$, with initial condition $P_j(s)$ at $t=s$ can be found in (\cite{K1, K2, Kr}).
For later purposes, we also introduce here a useful decomposition of the operator $V(t,s)$:

Let $W(t,s)$ be the Kato operator defined by (\ref{katop})
and 
 $\Phi_\eps(t,s)$ be the {\it dynamical phase} operator defined by
\bea
\left\{\begin{matrix}
\i\eps\partial_t \Phi_\eps(t,s)=W^{-1}(t,0)H(t)W(t,0)\Phi_\eps(t,s), \\
\Phi_\eps(s,s)=\mathbb I, \ \ 0\leq s,t \leq 1.\hfill 
\end{matrix}\right.
\eea
The dynamical phase operator $\Phi_\eps(t,s)$ describes the evolution within the spectral subspaces of $H(t)$, and thus depends on $\eps$. As the Kato operator, it is well defined in a Banach space framework for bounded generator, and its key property is that for all $1\leq j\leq d$
\be
[\Phi_\eps(t,s),P_j(0)]\equiv 0.
\ee
In case {\bf Spec} holds with $\sigma_j(t)=\{e_j(t)\}$, 
$\Phi_\eps(t,s)P_j(0)=P_j(0)\e^{-\frac{\i}{\eps}\int_s^te_j(r)dr}$, and if this assumption holds  for all $1\leq j\leq d$,
\be\label{fidyn}
\Phi_\eps(t,s)=\sum_{j=1}^dP_j(0)\e^{-\frac{\i}{\eps}\int_s^te_j(r)dr},
\ee 
which justifies the name of the operator. 
The link between $V$, $W$ and $\Phi_\eps$ reads, see {\it e.g.} \cite{K1, JP, J2}
\begin{lem}\label{decvwp}
Under {\bf Reg} and {\bf Spec}, one has
\be
V(t,s)=W(t,0)\Phi_\eps(t,s)W^{-1}(s,0), \ \ \forall \ 0\leq t,s\leq 1.
\ee
\end{lem}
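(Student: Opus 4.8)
\textbf{Proof plan for Lemma \ref{decvwp}.}

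The plan is to show that the right-hand side, call it $\tilde V(t,s)=W(t,0)\Phi_\eps(t,s)W^{-1}(s,0)$, solves the same linear differential equation in $t$ as $V(t,s)$ with the same initial condition at $t=s$, and then invoke uniqueness of solutions of a linear ODE with bounded (indeed norm-continuous) generator on the Banach space in question. First I would record the initial condition: at $t=s$ we have $\Phi_\eps(s,s)=\I$, hence $\tilde V(s,s)=W(s,0)W^{-1}(s,0)=\I=V(s,s)$. Then I would differentiate $\tilde V$ in $t$, using the product rule together with the defining equations (\ref{katop}) for $W$ and the equation for $\Phi_\eps$.

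Carrying out the differentiation: $\partial_t \tilde V(t,s)=(\partial_t W(t,0))\Phi_\eps(t,s)W^{-1}(s,0)+W(t,0)(\partial_t\Phi_\eps(t,s))W^{-1}(s,0)$. By (\ref{katop}), $\partial_t W(t,0)=K(t)W(t,0)$, and by the definition of $\Phi_\eps$, $\i\eps\,\partial_t\Phi_\eps(t,s)=W^{-1}(t,0)H(t)W(t,0)\Phi_\eps(t,s)$, so
\begin{align*}
\i\eps\,\partial_t \tilde V(t,s)&=\i\eps K(t)W(t,0)\Phi_\eps(t,s)W^{-1}(s,0)+W(t,0)W^{-1}(t,0)H(t)W(t,0)\Phi_\eps(t,s)W^{-1}(s,0)\\
&=\big(H(t)+\i\eps K(t)\big)\,W(t,0)\Phi_\eps(t,s)W^{-1}(s,0)=\big(H(t)+\i\eps K(t)\big)\tilde V(t,s).
\end{align*}
This is exactly equation (\ref{adev}) satisfied by $V(t,s)$. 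Since $H$ and $K$ are norm-continuous in $t$ and bounded (by {\bf Reg} and {\bf Spec}, via (\ref{rieszj}) and (\ref{multik})), the initial value problem (\ref{adev}) has a unique solution, so $\tilde V(t,s)=V(t,s)$ for all $0\le t,s\le 1$.

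There is essentially no hard obstacle here; the only points requiring a word of care are that $W(t,0)$ is genuinely invertible (which holds because $W$ is the propagator of a linear equation with bounded generator, as already noted after (\ref{interk})) and that one is allowed to differentiate the composition of these operator-valued functions term by term, which is justified by their $C^1$ dependence on $t$ in operator norm. I would also note in passing that Lemma \ref{decvwp} is consistent with the intertwining relations: from (\ref{interk}), $W(t,0)P_j(0)=P_j(t)W(t,0)$, and $[\Phi_\eps(t,s),P_j(0)]\equiv 0$ by construction, so $\tilde V(t,s)P_j(s)=W(t,0)\Phi_\eps(t,s)W^{-1}(s,0)P_j(s)=W(t,0)\Phi_\eps(t,s)P_j(0)W^{-1}(s,0)=P_j(t)\tilde V(t,s)$, recovering (\ref{inter}).
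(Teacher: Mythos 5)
Your verification is correct: the product rule computation shows $W(t,0)\Phi_\eps(t,s)W^{-1}(s,0)$ solves (\ref{adev}) with the right initial condition, and uniqueness for linear ODEs with bounded, norm-continuous generators finishes the argument. The paper itself gives no proof of Lemma \ref{decvwp}, merely citing \cite{K1, JP, J2}, and your argument is exactly the standard one from those references, so there is nothing to add.
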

Under these assumptions, the operators $V(t,s), W(t,s), \Phi_\eps(t,s)$ are all unitary.
\medskip

In turn, the adiabatic approximation (\ref{adiab0}) $V(t,s)$ of $U(t,s)$  on $\cH$ provides an approximation of $\cU^0(t,s)$ on $\cT(\cH)$ up to $\ode(\eps)$. For $\eps >0$, define the isometric operator on $\cT(\cH)$ (and on $\cB(\cH)$)
\be\label{defv00}
\cV^0(t,s)(\rho)=V(t,s) \rho  V^*(t,s)=V(t,s) \rho  V(s,t), \ \ \rho\in \cT(\cH).
\ee
Then, for $c$ given in equation (\ref{adiab0}), we get  
\bea\label{u0v0sim}
\|\cU^0(t,s)-\cV^0(t,s)\|_\tau\leq 2c\eps,
\eea
and the same holds for the operator norm  on $\cB(\cH)$.

To get a better grasp on $\cV^0(t,s)$,  we proceed by proving here (\ref{par0pdimstate}) which specifies the action of $\cW_0(t,s)$ on the range of $\cP_0(s)$:
\begin{lem}\label{lemw}
For  $\cW_0(t,s)$ defined by (\ref{w0}) and $\cP_0(t)$ by (\ref{projoker}), we have for any $\rho \in \cB(\cH)$,
\be\label{par0pdim}
\cW_0(t,s)\circ \cP_0(s)(\rho)=W(t,s)  \cP_0(s)(\rho) W(s,t),
\ee 
where $W(t,s)$ is the Kato operator defined by (\ref{katop}). Moroever, $\cW_0(t,s)$ is trace preserving.
\end{lem}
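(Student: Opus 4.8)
The plan is to identify both sides of (\ref{par0pdim}), viewed as functions of $t$ for fixed $s$ and $\rho$, as solutions of one and the same linear ODE with the same value at $t=s$, and then invoke uniqueness. Set $Y(t):=\cW_0(t,s)\,\cP_0(s)(\rho)$ and $Z(t):=W(t,s)\,\cP_0(s)(\rho)\,W(s,t)$, so that $Y(s)=Z(s)=\cP_0(s)(\rho)$. By (\ref{w0}) we have $\partial_t Y(t)=[\cP_0'(t),\cP_0(t)]\,Y(t)$. Differentiating $Z$ and using $\partial_t W(t,s)=K(t)W(t,s)$ from (\ref{katop}) together with $\partial_t W(s,t)=-W(s,t)K(t)$ (obtained by differentiating $W(s,t)W(t,s)=\un$) gives $\partial_t Z(t)=[K(t),Z(t)]$. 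Hence everything reduces to two points: first, the operator identity $[K(t),B]=[\cP_0'(t),\cP_0(t)]\,B$ for all $B\in\ran\cP_0(t)$; and second, the fact that $Z(t)\in\ran\cP_0(t)$ for every $t$. The latter is immediate from the intertwining relation (\ref{interk}): since $W(t,s)P_n(s)=P_n(t)W(t,s)$, hence $P_n(s)W(s,t)=W(s,t)P_n(t)$, one gets $Z(t)=\sum_{n}W(t,s)P_n(s)\rho P_n(s)W(s,t)=\sum_{n}P_n(t)\bigl(W(t,s)\rho W(s,t)\bigr)P_n(t)=\cP_0(t)\bigl(W(t,s)\rho W(s,t)\bigr)$.

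For the operator identity I would proceed in two elementary steps. First, using the general fact that any smooth idempotent family satisfies $\cP_0(t)\cP_0'(t)\cP_0(t)\equiv 0$ (differentiate $\cP_0^2=\cP_0$ and multiply left and right by $\cP_0$), one has $[\cP_0'(t),\cP_0(t)]B=\cP_0'(t)B-\cP_0(t)\cP_0'(t)\cP_0(t)B=\cP_0'(t)B$ whenever $B=\cP_0(t)(B)$. Second, for such $B$ the relations $\sum_nP_n=\un$, $P_nP_m=\delta_{nm}P_n$ and $B=\sum_nP_nBP_n$ give $BP_n(t)=P_n(t)BP_n(t)$ and $P_n(t)B=P_n(t)BP_n(t)$, so $\cP_0'(t)B=\sum_n\bigl(P_n'(t)P_n(t)BP_n(t)+P_n(t)BP_n(t)P_n'(t)\bigr)$, which is exactly $K(t)B-BK(t)$ once one writes $K=\sum_nP_n'P_n=-\sum_nP_nP_n'$ as in (\ref{multik}). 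With the identity in hand, $\partial_t Z(t)=[\cP_0'(t),\cP_0(t)]\,Z(t)$, and uniqueness for this linear ODE with norm-continuous generator forces $Y\equiv Z$, which is (\ref{par0pdim}).

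For trace preservation I would differentiate: $\frac{d}{dt}\tr\bigl(\cW_0(t,s)(A)\bigr)=\tr\bigl([\cP_0'(t),\cP_0(t)]\cW_0(t,s)(A)\bigr)$, and it suffices to show $\tr\bigl([\cP_0'(t),\cP_0(t)](B)\bigr)=0$ for all $B\in\cT(\cH)$. This follows from two cyclicity computations: $\tr(\cP_0(t)(C))=\sum_n\tr(P_nCP_n)=\tr\bigl((\sum_nP_n)C\bigr)=\tr(C)$, and $\tr(\cP_0'(t)(C))=\sum_n\bigl(\tr(P_nP_n'C)+\tr(P_n'P_nC)\bigr)=\tr\bigl((\sum_nP_n)'C\bigr)=0$ since $\sum_nP_n=\un$; hence $\tr\bigl([\cP_0',\cP_0](B)\bigr)=\tr\bigl(\cP_0'(\cP_0B)\bigr)-\tr\bigl(\cP_0(\cP_0'B)\bigr)=0-\tr(\cP_0'B)=0$, and therefore $\tr(\cW_0(t,s)(A))=\tr(\cW_0(s,s)(A))=\tr(A)$. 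The ODE-and-uniqueness skeleton is routine; the only place where a little care is needed is the bookkeeping in the operator identity — keeping each $P_n$ and $P_n'$ on the correct side through the cyclic rearrangements — and checking $Z(t)\in\ran\cP_0(t)$, which I regard as the (minor) crux of the argument.
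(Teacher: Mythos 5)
Your proof is correct and follows essentially the same route as the paper: both sides are identified as solutions of the ODE (\ref{w0}) with initial datum $\cP_0(s)(\rho)$, using the intertwining relation for $W$ to see that the conjugated operator stays diagonal, and the identities $\cP_0\cP_0'\cP_0=0$ and $K=\sum_nP_n'P_n=-\sum_nP_nP_n'$ to match the generator $[\cP_0',\cP_0]$ with $[K,\cdot]$ on diagonal arguments (the paper does this componentwise with arguments $P_k(s)\rho P_k(s)$ and sums over $k$; you do it for a general diagonal $B$ — the same computation). The only divergence is the trace-preservation step, where you differentiate $\tr(\cW_0(t,s)(A))$ and check $\tr([\cP_0',\cP_0](B))=0$, whereas the paper combines (\ref{par0pdim}), the intertwining relation (\ref{intertw0}) and $\tr((\un-\cP_0(t))(A))=0$; both arguments are routine and valid.
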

\begin{rem}\label{remwp} i) The particular case $\rho=P_j(s)$ which gives $\cW_0(t,s)(P_j(s))=P_j(t)$ can be found \cite{AFGG}.\\
ii) If the projectors are all one dimensional, {\it i.e.} $P_j(t)= | \ffi_j(t)\ket\bra \ffi_j(t)|$, $1\leq j \leq d$,
we have 
\bea\label{par0p}
\cW_0(t,s)\circ \cP_0(s)(\rho)=\sum_{1\leq k\leq d} \bra \ffi_k(s)|\rho \ffi_k(s) \ket P_k(t).
\eea
\end{rem}
\proof 
The fact that $\cW_0(t,s)$ is trace preserving follows from  $\tr ((\un -\cP_0(t))(A))=0$ for all $A\in \cT(\cH)$, and (\ref{intertw0}), together with  (\ref{par0pdim}) and the fact that $\cP_0(s)$ is trace preserving.  The identity is proven by checking that its two sides  satisfy the same differential equation (\ref{w0}) with initial condition $\cP_0(s)$ at $t=s$. Considering first the argument $P_k(s)\rho P_k(s)=\cP_0(s)(P_k(s)\rho P_k(s))$ in place of $\rho$ in (\ref{par0pdim}), the RHS reads
$W(t,s)  P_k(s)\rho P_k(s) W(s,t)=P_k(t) W(t,s) \rho W(s,t)P_k(t)$ so that with (\ref{multik}),
\begin{align}
\partial_t \Big(W(t,s)  P_k(s)\rho P_k(s) W(s,t)\Big)=&[K(t), W(t,s)  P_k(s)\rho P_k(s) W(s,t)]\\ \nonumber
=&P_k'(t)P_k(t) W(t,s) \rho W(s,t)P_k(t)+ P_k(t)W(t,s) \rho W(s,t)P_k(t)P_k'(t).
\end{align}
Then, making use of 
\begin{align}
 \cP_0'(t)(\rho)&=\sum_{1\leq j\leq d}P_j'(t)\rho P_j(t)+P_j(t)\rho P_j'(t),
\end{align}
we get $\cP_0(t)\cP_0'(t)(P_k(t) W(t,s) \rho W(s,t)P_k(t))\equiv 0$ and 
\begin{align}
\cP_0'(t)\cP_0(t)(P_k(t) W(t,s) \rho &W(s,t)P_k(t))\\ \nonumber
&=P_k'(t)P_k(t) W(t,s) \rho W(s,t)P_k(t)+ P_k(t)W(t,s) \rho W(s,t)P_k(t)P_k'(t),
\end{align}
showing the result for the argument $P_k(s)\rho P_k(s)$. It remains to sum over $1\leq k\leq d$ to end the proof.
\qed\\
As a consequence, we get the following expression for the adiabatic approximation (\ref{u0v0sim}) of  $\cU^0(t,s)(\rho_j)$
where the state $\rho_j=P_j(s)\rho P_j(s)$ and  $P_j(s)$ is associated to a permanently degenerate eigenvalue:
\begin{lem}\label{expadiab0}
Under {\bf Reg} and {\bf Spec} with  $\sigma_j(t)=\{e_j(t)\}$ for all $t\in[0,1]$, 
for any state $\rho_j=P_j(s)\rho P_j(s)=\cP_0(s)(\rho_j)$ it holds
\begin{align}
&\cU^0(t,s)(\rho_j)=\cV^0(t,s)(\rho_j)+\ode(\eps), \ \ \mbox{where}\nonumber\\
&\cV^0(t,s)(\rho_j)=\cW_0(t,s)(\rho_j)=W(t,s)\rho_jW(s,t).
\end{align}
\end{lem}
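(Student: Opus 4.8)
The statement splits into two parts: (i) the adiabatic estimate $\cU^0(t,s)(\rho_j)=\cV^0(t,s)(\rho_j)+\ode(\eps)$, and (ii) the identification $\cV^0(t,s)(\rho_j)=\cW_0(t,s)(\rho_j)=W(t,s)\rho_jW(s,t)$ when $\rho_j=\cP_0(s)(\rho_j)$ sits in a single (possibly degenerate) spectral subspace. Part (i) is immediate from the tools already assembled: by \eqref{u0u} and \eqref{defv00}, $\cU^0(t,s)(\rho_j)-\cV^0(t,s)(\rho_j)=U(t,s)\rho_jU^*(t,s)-V(t,s)\rho_jV^*(t,s)$, and adding and subtracting a mixed term $U(t,s)\rho_jV^*(t,s)$ together with the unitarity of $U,V$, $\|\rho_j\|_1\le\|\rho\|_1$, and the adiabatic bound $\|U(t,s)-V(t,s)\|\le c\eps$ of Lemma \ref{atqm} gives $\|\cU^0(t,s)(\rho_j)-\cV^0(t,s)(\rho_j)\|_1\le 2c\eps\|\rho\|_1$; this is exactly \eqref{u0v0sim} applied to the argument $\rho_j$.

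For part (ii), the plan is to use the decomposition $V(t,s)=W(t,0)\Phi_\eps(t,s)W^{-1}(s,0)$ from Lemma \ref{decvwp}. Under the hypothesis $\sigma_j(t)=\{e_j(t)\}$ we have $\Phi_\eps(t,s)P_j(0)=P_j(0)\e^{-\frac{\i}{\eps}\int_s^t e_j(r)dr}$, so the scalar phase factors out. Writing $\rho_j=P_j(s)\rho P_j(s)$ and using $W^{-1}(s,0)P_j(s)=P_j(0)W^{-1}(s,0)$ (the intertwining relation \eqref{interk}) on both sides, one computes
\begin{align}
\cV^0(t,s)(\rho_j)&=V(t,s)\rho_jV(s,t)\nonumber\\
&=W(t,0)\Phi_\eps(t,s)\big(W^{-1}(s,0)P_j(s)\big)\rho\big(P_j(s)W(0,s)\big)\Phi_\eps(s,t)W(0,t)\nonumber\\
&=W(t,0)P_j(0)W^{-1}(s,0)\rho\, W(0,s)P_j(0)W(0,t),\nonumber
\end{align}
where the two conjugate scalar phases cancel. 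Using the intertwining relation again in the form $W(t,0)P_j(0)=P_j(t)W(t,0)$ and the propagation relation $W(t,0)W(0,s)=W(t,s)$, this collapses to $P_j(t)W(t,s)\rho W(s,t)P_j(t)=W(t,s)P_j(s)\rho P_j(s)W(s,t)=W(t,s)\rho_jW(s,t)$. Finally, $\cW_0(t,s)(\rho_j)=W(t,s)\rho_jW(s,t)$ is precisely the content of Lemma \ref{lemw} (equation \eqref{par0pdim}) applied to the argument $\rho_j=\cP_0(s)(\rho_j)$, so the two approximating operators agree.

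I do not expect a serious obstacle here; the lemma is a bookkeeping consequence of the decomposition of $V$ and the two intertwining relations. The one point requiring a little care is making sure the $\eps$-dependent dynamical phases in $\Phi_\eps$ genuinely cancel — this needs the single-eigenvalue hypothesis $\sigma_j(t)=\{e_j(t)\}$, since only then does $\Phi_\eps(t,s)$ act on $P_j(0)\cH$ as a pure scalar rather than a nontrivial unitary; for a higher-dimensional $\sigma_j(t)$ the phase would not factor and the clean identification with the $\eps$-independent Kato transport would fail. A secondary check is that all the operator manipulations take place on $\cT(\cH)$ with the trace norm, which is legitimate because $W(t,s)$, $V(t,s)$, $\Phi_\eps(t,s)$ are all unitary on $\cH$ and hence their conjugation actions are isometric on $\cT(\cH)$.
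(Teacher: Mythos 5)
Your proof is correct and follows essentially the same route as the paper: part (i) is the estimate (\ref{u0v0sim}) from Lemma \ref{atqm}, and part (ii) uses the decomposition $V(t,s)=W(t,0)\Phi_\eps(t,s)W(0,s)$ of Lemma \ref{decvwp} together with the intertwining relation (\ref{interk}) to cancel the dynamical phases, with Lemma \ref{lemw} identifying the result with $\cW_0(t,s)(\rho_j)$. The only blemish is a harmless swap of arguments ($W(0,s)$ written where $W(s,0)$ is meant) in your intermediate displayed line, which does not affect the final, correct identity $\cV^0(t,s)(\rho_j)=W(t,s)\rho_j W(s,t)$.
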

\begin{rem}\label{v=w} If $\sigma_j(t)=\{e_j(t)\}$ for all $1\leq j\leq d$ and all $t\in [0,1]$, then for any $0\leq s\leq t\leq 1$,
\be\label{vpw}
\cP_0(t)\cW_0(t,s)\cP_0(s)=\cP_0(t)\cV^0(t,s)=\cV^0(t,s)\cP_0(s).
\ee
\end{rem}
\proof The first estimate is (\ref{u0v0sim}). Then, Lemma \ref{decvwp} yields the expression 
\be
V(t,s)=W(t,0)\Phi_\eps(t,s)W(0,s) \ \mbox{ with} \  \ \Phi_\eps(t,s)P_j(0)=P_j(0)\e^{-\frac{\i}{\eps}\int_0^te_j(r)dr},
\ee 
since $\sigma_j(t)=\{e_j(t)\}$. Therefore, $\cV^0(t,s)$ given by (\ref{defv00}) with $\rho_j=P_j(s)\rho_jP_j(s)$ and the intertwining relation (\ref{interk}) make
the phases $\e^{\pm\frac{\i}{\eps}\int_s^te_j(r)dr}$ disappear, which justifies the last two identities. 
\qed\\
In particular, if $\dim P_j(0)<\infty$, then $P_j(t)$ belongs to $\cT(\cH)$ for all $t\in[0,1]$ so that 
\bea\label{superad2}
\cU^0(t,s)(P_j(s))&=&P_j(t)+\ode(\eps).
\eea
Again, if $P_j(0)$ is not trace class, the estimates above hold in operator norm.
\medskip

\subsection{Adiabatic Dyson expansion}\label{return}

We now apply the foregoing to the analysis of the Dyson series.

Equation (\ref{dyson}) and the above yields the following estimate of the propagator $(\cU(t,s))_{0\leq s\leq t\leq 1}$:
\begin{prop}\label{propweak} Under  assumptions {\bf Reg} and {\bf Spec}, for any $N\geq 1$,  there exists  $c<\infty$ (given in Lemma \ref{atqm}), such that for all $0\leq s\leq t \leq 1$ (with the convention $s_0=t$ {  for $N=1$}),
for all $\eps>0$, all $g\geq 0$, the propagator 
$\cU(t,s)\in \cB(\cT(\cH))$ satisfies  
\begin{align}\label{expgenew}
 \cU(t,s)&=\cV^0(t,s)\nonumber \\
 &+\sum_{n=1}^N(g/\eps)^n\int_s^t\int_s^{s_1}\dots\int_s^{s_{n-1}} \cV^0(t,s_1)\circ \cL^1_{s_1}\circ \cV^0(s_1,s_2)\circ \cL^1_{s_2}\dots \circ \cL^1_{s_n}\circ \cV^0(s_n,s) ds_{n}\dots ds_{2}ds_{1} \nonumber\\
 &+R_{N+1}(t,s,\eps,g)
 \end{align}
 where, with $L_1=\sup_{0\leq s\leq 1}\| \cL_s^1\|_\tau$,
 \begin{align}\label{1stnew}
\|R_{N+1}(t,s,\eps,g)\|_\tau\leq 2 c\eps\e^{2(t-s)L_1(1+2 c\eps)g/\eps}+\frac{(L_1(t-s))^{N+1}}{(N+1)!}(g/\eps)^{N+1}.
\end{align}
In particular, if $g/\eps\leq 1$ and $\eps\leq 1/(2c)$,
 \begin{align}
\|R_{N+1}(t,s,\eps,g)\|_\tau&\leq  2\e^{4L_1}\left(c\eps+((t-s)g/\eps)^{N+1}\right)\nonumber\\
&=\ode(\eps+(g/\eps)^{N+1}).
\end{align}
\end{prop}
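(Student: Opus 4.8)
The plan is to start from the truncated Dyson expansion (\ref{dyson}) in its second form: there, $\cU(t,s)-\cU^0(t,s)$ equals the sum over $1\le n\le N$ of the $n$-fold simplex integrals built out of $\cU^0$ and $\cL^1$, plus a tail term of order $(g/\eps)^{N+1}$ containing the full propagator $\cU(s_{N+1},s)$. In each of the $N+1$ pieces that involve only $\cU^0$'s (the free term $\cU^0(t,s)$ and the integrals $1\le n\le N$) I would replace every propagator factor $\cU^0(\cdot,\cdot)$ by its adiabatic approximant $\cV^0(\cdot,\cdot)$, controlling the error through the uniform bound $\|\cU^0(t',s')-\cV^0(t',s')\|_\tau\le 2c\eps$ of (\ref{u0v0sim}), valid for all $(t',s')\in[0,1]^2$; the remaining order $(g/\eps)^{N+1}$ tail I would leave untouched and place into $R_{N+1}$.

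The substitution error is handled by a standard multilinear telescoping. Fix $n$ and $s\le s_n\le\cdots\le s_1\le t$, write the integrand of the $n$-th term as $A_0C_1A_1C_2\cdots C_nA_n$ with $A_0=\cU^0(t,s_1)$, $A_j=\cU^0(s_j,s_{j+1})$, $A_n=\cU^0(s_n,s)$ and $C_j=\cL^1_{s_j}$, and let $B_j$ be the corresponding factors with $\cV^0$ in place of $\cU^0$. Then
\[
A_0C_1A_1\cdots C_nA_n-B_0C_1B_1\cdots C_nB_n=\sum_{k=0}^{n}B_0C_1\cdots B_{k-1}C_k\,(A_k-B_k)\,C_{k+1}A_{k+1}\cdots C_nA_n,
\]
and since $\|A_j\|_\tau=1$ ($\cU^0$ is isometric on $\cT(\cH)$ by (\ref{u0u})), $\|B_j\|_\tau\le\|A_j\|_\tau+2c\eps\le 1+2c\eps$ by (\ref{u0v0sim}), $\|C_j\|_\tau\le L_1$ and $\|A_k-B_k\|_\tau\le 2c\eps$, the left side has $\tau$-norm at most $(n+1)(1+2c\eps)^nL_1^n\,2c\eps$, uniformly on the integration simplex. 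Integrating over that simplex, of volume $(t-s)^n/n!$, and multiplying by $(g/\eps)^n$, the error from the $n$-th term is at most $\frac{n+1}{n!}\big((t-s)L_1(1+2c\eps)g/\eps\big)^n\,2c\eps$; for $n=0$ this is exactly the bound $2c\eps$ of (\ref{u0v0sim}) for the free term. Summing over $0\le n\le N$ and using $\sum_{n\ge0}\frac{n+1}{n!}y^n=(1+y)e^y\le e^{2y}$ with $y=(t-s)L_1(1+2c\eps)g/\eps$ produces precisely the first summand in (\ref{1stnew}).

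For the tail term of (\ref{dyson}), bounding the $N+1$ factors $\cU^0$ and the factor $\cU$ by $1$ in $\tau$-norm (both are trace preserving, hence $\tau$-contractions), the $N+1$ factors $\cL^1$ by $L_1$, and the simplex volume by $(t-s)^{N+1}/(N+1)!$, one obtains a contribution of at most $(g/\eps)^{N+1}(L_1(t-s))^{N+1}/(N+1)!$, the second summand in (\ref{1stnew}). Adding the two bounds yields (\ref{expgenew})--(\ref{1stnew}). For the final ``in particular'' assertion one inserts $g/\eps\le1$, $\eps\le1/(2c)$ (so $1+2c\eps\le2$) and $t-s\le1$: then $2(t-s)L_1(1+2c\eps)g/\eps\le4L_1$, so $2c\eps\,e^{2(t-s)L_1(1+2c\eps)g/\eps}\le2e^{4L_1}c\eps$, while $\frac{(L_1(t-s))^{N+1}}{(N+1)!}(g/\eps)^{N+1}=\frac{L_1^{N+1}}{(N+1)!}\big((t-s)g/\eps\big)^{N+1}\le e^{L_1}\big((t-s)g/\eps\big)^{N+1}\le2e^{4L_1}\big((t-s)g/\eps\big)^{N+1}$; summing gives $\|R_{N+1}\|_\tau\le2e^{4L_1}\big(c\eps+((t-s)g/\eps)^{N+1}\big)=\ode(\eps+(g/\eps)^{N+1})$.

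No step in this argument is deep: it is bookkeeping built on top of (\ref{dyson}) and the adiabatic estimate (\ref{u0v0sim}). The points that will require a little care are that (\ref{u0v0sim}) must be invoked in its form uniform over $(t',s')\in[0,1]^2$, so that it applies to every propagator factor occurring in the Dyson integrand, including the ``time-reversed'' ones with $s_j\ge s_{j+1}$; and that the combinatorial factor $n+1$ generated by the telescoping must be carried through the summation, since it is exactly what makes the resulting constant a bounded (exponential) quantity rather than one that degrades with $N$.
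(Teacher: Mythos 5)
Your proposal is correct and follows essentially the same route as the paper: replace each $\cU^0$ factor in the truncated Dyson expansion (\ref{dyson}) by $\cV^0$ using the uniform adiabatic estimate (\ref{u0v0sim}), bound the substitution error term by term, sum the resulting series into the exponential factor, and keep the order-$(g/\eps)^{N+1}$ tail as is. The only cosmetic difference is that you control the $n$-th substitution error by a telescoping identity, giving $(n+1)(1+2c\eps)^nL_1^n\,2c\eps$ directly, whereas the paper expands binomially and then bounds $\sum_k\binom{n+1}{k}\Delta^k$ by the same kind of quantity — both yield exactly (\ref{1stnew}).
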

\begin{rem}\label{remu0v0new}
i) The isometric operators $\cV^0$ depends on $\eps$ and displays fast oscillations as $\eps\ra 0$.\\
ii) Keeping $\cU^0$ instead of $\cV^0$ in the first term of the RHS of (\ref{expgenew}),  the  estimate on the remainder reads
\be
\|R_{N+1}(t,s,\eps,g)\|_\tau=\ode(g+(g/\eps)^{N+1})).
\ee
iii) In case $s=0$, on can replace $c$ by $tc$, according to Remark \ref{ctoct}.
\end{rem}
\proof 
The first estimate follows by replacing $\cU^0$ by its approximation $\cV^0$  in each term of the Dyson series, and collecting the different contributions to the error terms. With 
\be
\Delta=\sup_{0\leq s\leq t\leq 1}\|\cU^0(t,s)-\cV^0(t,s)\|_\tau\leq 2c\eps,
\ee
 the trace norm of the difference of the term of order $n\geq 1$ in (\ref{dyson}) with that of order $n$ in (\ref{expgenew})  is bounded above by
\begin{align}\label{basesdya}
&(g/\eps)^n\frac{((t-s)L_1)^n}{n!}\sum_{1\leq k\leq n+1}\begin{pmatrix} n+1 \\ k \end{pmatrix}\Delta^k\leq (g/\eps)^n\frac{((t-s)L_1)^n}{n!} \sum_{0\leq j\leq n}\begin{pmatrix} n+1 \\ j+1 \end{pmatrix}\Delta^{j+1}\nonumber\\
&\leq (g/\eps)^n\frac{((t-s)L_1)^n}{n!}\Delta (n+1)\sum_{0\leq j\leq n}\begin{pmatrix} n \\ j \end{pmatrix}\Delta^j\leq (g/\eps)^n\frac{((t-s)L_1)^n}{n!}\Delta 2^n(1+\Delta)^{n}.
\end{align}
Summing over all $n\in \N$ yields the first term in (\ref{1stnew}). The second term stems from the term of order $N+1$ in (\ref{dyson}).
 The second estimate is a consequence of  $g/\eps \leq 1$, $t-s\leq 1$, $1+2 c\eps \leq 2$ and  $\frac{\alpha^{m}}{m!}\leq e^{\alpha}$, for all $m\geq 1$, $\alpha>0$.
\qed

Specialising to the leading order term in $g/\eps$, and taking into account Remark \ref{remu0v0new} ii) above, we get 
\begin{cor}\label{coruv} Under the assumptions of Proposition \ref{propweak},  for $\eps\leq 1/(2c)$ and  $g/\eps\leq 1$, 
\begin{align}\label{ordeg/e}
 \cU(t,s)&=\cU^0(t,s)+\frac{g}{\eps}\int_s^t\cV^0(t,s_1)\circ \cL^1_{s_1}\circ \cV^0(s_1,s) ds_{1} +\ode(g+g^2/\eps^{2}).
 \end{align}
\end{cor}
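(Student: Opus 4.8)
The statement is the $N=1$ instance of Proposition \ref{propweak}, read off in the form recorded in Remark \ref{remu0v0new} ii), so the plan is simply to make that reduction explicit and to justify the improved remainder. I would start from the Dyson expansion (\ref{dyson}) truncated at $N=1$, in the form $\cU(t,s)=\cU^0(t,s)+\frac g\eps\int_s^t\cU^0(t,s_1)\circ\cL^1_{s_1}\circ\cU^0(s_1,s)\,ds_1+\sum_{n\ge 2}(g/\eps)^n\int_s^t\cdots$, the last sum being the convergent tail of the series. I keep the zeroth order term $\cU^0(t,s)$ untouched --- this is the point of the Remark, and what buys an error $\ode(g)$ rather than $\ode(\eps)$ --- and in the $n=1$ term I replace both copies of $\cU^0$ by their adiabatic approximants $\cV^0$.

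The error bookkeeping then splits into exactly two contributions. For the replacement in the $n=1$ term I use the identity $\cU^0 A\cU^0-\cV^0 A\cV^0=(\cU^0-\cV^0)A\cU^0+\cV^0 A(\cU^0-\cV^0)$ together with $\|\cU^0(\cdot,\cdot)-\cV^0(\cdot,\cdot)\|_\tau\le 2c\eps$ from (\ref{u0v0sim}), $\|\cL^1_s\|_\tau\le L_1$, and the fact that $\cU^0,\cV^0$ are isometric on $\cT(\cH)$; integrating over $s_1\in[s,t]$ this costs at most $(g/\eps)\,4c L_1\eps(t-s)=\ode(g)$. For the tail I use isometry of $\cU^0$ and the simplex volume to bound the term of order $n$ by $((g/\eps)L_1(t-s))^n/n!$, so that for $g/\eps\le 1$ and $t-s\le 1$ the tail is at most $(g/\eps)^2\sum_{n\ge 2}L_1^{\,n}/n!\le e^{L_1}(g/\eps)^2=\ode(g^2/\eps^2)$; this is exactly the computation already carried out in (\ref{basesdya}), restricted to $n\ge 2$. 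Adding the two gives the total remainder $\ode(g+g^2/\eps^2)$, which is (\ref{ordeg/e}).

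There is no genuine obstacle for a corollary of this kind: the only point to watch is that the constants $c$ and $L_1$ are uniform in $(\eps,g)$ over the stated range $\eps\le 1/(2c)$, $g/\eps\le 1$, which they are by {\bf Reg}, {\bf Spec} and Lemma \ref{atqm}. Equivalently, one may quote Proposition \ref{propweak} with $N=1$ verbatim, observe that merely substituting $\cV^0(t,s)=\cU^0(t,s)+\ode(\eps)$ in its leading term would only yield $\ode(\eps+g^2/\eps^2)$, and note that leaving the $n=0$ Dyson term exact --- rather than approximating it --- upgrades that $\ode(\eps)$ to $\ode(g)$ by the estimate above.
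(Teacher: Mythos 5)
Your argument is correct and is essentially the paper's own proof made explicit: the paper obtains Corollary \ref{coruv} by taking $N=1$ in Proposition \ref{propweak} together with Remark \ref{remu0v0new} ii), i.e.\ leaving the zeroth Dyson term as $\cU^0$ so that the $\cU^0\to\cV^0$ replacement cost is $\ode(g)$ while the Dyson tail contributes $\ode((g/\eps)^2)$, which is precisely the bookkeeping you spell out. The only quibble is a citation slip: your tail estimate is the term-by-term bound noted just after (\ref{dyson}) (the paper instead bounds the order-$N{+}1$ remainder), not the replacement estimate (\ref{basesdya}) you point to, but the computation itself is right.
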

This expression will provide an explicit leading order correction to the transition probability between spectral subspaces driven by the purely Hamiltonian dynamics, due to the dissipator. 
\medskip

Consider a state $\rho_j=P_j(0)\rho_jP_j(0)\in \cT(\cH)$  
and recall the definition (\ref{lind}) of the dissipator
\be\label{defl1}
\cL^1_t(\cdot)=\sum_{l\in I}\Gamma_l(t)\cdot \Gamma_l^*(t) -\frac12\{\Gamma_l^*(t)\Gamma_l(t), \cdot \}.
\ee
The transition probability between $P_j(0)\cH$ and $P_k(t)\cH$, $j\neq k$, induced by the Lindbladian dynamics (\ref{lindeq}) reads $\tr(P_k(t)\cU(t,0)(\rho_j))$. 
 Using (\ref{defv00}) and Lemma \ref{atqm}, we have
 \be\label{rhofi}
 \tilde\rho_j(t,\eps)=\cV^0(s,0) (\rho_j)=V(s,0) \rho_j V(s,0)=P_j(s) V(s,0)\rho_j V(0,s)P_j(s),
 \ee
so that with $P_j(t)P_k(t)=0$ and the cyclicity of the trace,  
 \begin{align}\label{traproqnew}
 \tr(P_k(t)\cU(t,0)&(\rho_j))= \tr(P_k(t)\cU^0(t,0)(\rho_j))\\ 
 &+\frac{g}{\eps}\int_0^t \tr (P_k(t)\cV^0(t,s)\circ \cL^1_{s}\circ \cV^0(s,0) (\rho_j))ds +\ode(g+g^2/\eps^{2})\nonumber\\
&\phantom{xxxx}= \tr (P_k(t)U(t,0)P_j(0)\rho_jP_j(0)U(0,t)P_k(t))\nonumber \\
 &+\frac{g}{\eps}\sum_{l\in I}\int_0^t \tr (P_k(s)\Gamma_l(s) V(s,0)\rho_j V(0,s)\Gamma_l^*(s)P_k(s))ds+\ode(g+g^2/\eps^{2}).\nonumber
 \end{align}
The first expression on the RHS yields the Hamiltonian adiabatic transition probability between these subspaces, whereas the non-negative second term of order $g/\eps$ describes the effect of the environment. 
Note that in case $P_j(0)$ is finite rank, choosing $\rho_j=P_j(0)/\dim(P_j(0))$ yields the simpler  integrands 
\be
 \tr (P_k(s)\Gamma_l(s)P_j(s)\Gamma_l^*(s)P_k(s))/\dim(P_j(0)),
 \ee
and  by Lemma \ref{expadiab0}, $V(s,0)\rho_j V(0,s)=\tilde\rho_j(s)$ is independent of $\eps$ if $\sigma_j(t)=\{e_j(t)\}$.
\medskip

Concerning coherences of the integral term in (\ref{ordeg/e}), we have the following integration by parts result, whose proof is given in Appendix. 
\begin{lem}\label{vancor2} Assume {\bf Reg}, {\bf Spec} and let $\rho_j=P_j(0)\rho_j P_j(0)$ be a state. Suppose $\sigma_j(t)=\{ e_j(t)\}$ for all $t\in [0,1]$.
Then, for any $1\leq n\neq m\leq d$, and all $\eps>0$,
\be\label{cohva}
\frac{g}{\eps}P_n(t)\int_0^t\cV^0(t,s)\circ \cL^1_{s}\circ \cV^0(s,0) (\rho_j)ds\, P_m(t)=\ode (g).
\ee
\end{lem}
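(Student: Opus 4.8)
The plan is to prove the coherence estimate \eqref{cohva} by an integration by parts in the variable $s$, exploiting the fast phase oscillations carried by $\cV^0$ and the gap hypothesis. First I would open up the integrand using the explicit form of $\cV^0$ and Lemma \ref{decvwp}: writing $\cV^0(t,s)\circ \cL^1_s\circ \cV^0(s,0)(\rho_j)$ in terms of $V(t,s)$ and the Kato/dynamical-phase decomposition $V(t,s)=W(t,0)\Phi_\eps(t,s)W(0,s)$, and inserting the resolution $\un=\sum_{1\leq p\leq d}P_p(0)$ in the appropriate places, one sees that the relevant scalar factors are the dynamical phases $\e^{-\frac{\i}{\eps}\int_0^s(e_p(r)-e_q(r))dr}$ coming from $\Phi_\eps(t,0)$ acting on the left and right. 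Since $\rho_j$ is supported on $P_j(0)$, the $W(0,s)\rho_j W(s,0)$ piece stays in the $P_j$ sector, so the surviving phases are exactly $\e^{\pm\frac{\i}{\eps}\int_0^s e_j(r)dr}$ times phases attached to the labels of the outer projectors $P_n(t)$, $P_m(t)$.

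The key point is that after sandwiching between $P_n(t)$ on the left and $P_m(t)$ on the right with $n\neq m$, the integrand acquires an overall oscillatory factor of the form $\e^{\frac{\i}{\eps}\int_0^s \omega(r)\,dr}$ with a Bohr frequency $\omega(r)$ that is bounded away from zero uniformly in $r$ — this is where {\bf Spec} with $\sigma_j(t)=\{e_j(t)\}$ enters, guaranteeing $|e_n(r)-e_m(r)|\geq G>0$ (and, for the cross terms, still a nonzero frequency because $n\neq m$). Concretely, after pulling out the phases one is left with an integral $\int_0^t \e^{\frac{\i}{\eps}\int_0^s\omega(r)dr} M(s)\,ds$ where $M(s)$ is a $C^\infty$-in-norm operator-valued function built from $\Gamma_l$, the $P_\bullet$, and $W$, all of which are smooth and bounded by {\bf Reg}. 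Then I would invoke the abstract integration-by-parts lemma (Lemma \ref{idint} in the Appendix, the same device behind \eqref{adiab0}): dividing and multiplying by $\i\omega(s)/\eps$, integrating by parts once transfers the $1/\eps$ onto a derivative of the smooth amplitude, producing boundary terms of size $\ode(\eps)$ and a remaining integral of size $\ode(\eps)$. Multiplying by the prefactor $g/\eps$ yields $\ode(g)$, which is the claim.

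In more detail, the structure I expect is: the integrand, after conjugation by $P_n(t)$ and $P_m(t)$, is a finite sum over $l\in I$ and over internal projector labels of terms
\[
\e^{\frac{\i}{\eps}\int_0^s(e_a(r)-e_b(r))dr}\,P_n(t)W(t,0)\,\big(\text{smooth bounded operator}\big)\,W(0,t)P_m(t),
\]
with $(a,b)$ forced by the sandwiching to satisfy $a=n$ or $b=m$ in a way that makes $e_a-e_b\not\equiv 0$; the worst case is when only one of them differs from $j$, but then the nonzero Bohr frequency is still uniformly bounded below by the gap. Each such term is handled by the one-step integration by parts. I would then sum the finitely many contributions and conclude.

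The main obstacle I anticipate is bookkeeping rather than conceptual: one must check that \emph{every} term surviving the double sandwich genuinely carries a non-vanishing, uniformly-bounded-below phase, i.e. that there is no residual "diagonal" term with $\omega\equiv 0$ that would escape the integration-by-parts gain. The degenerate eigenvalue case ($\dim P_j(0)>1$) needs a little care here: within a degenerate block the dynamical phase is scalar, so the dangerous cancellation could only happen if both outer labels coincided, contradicting $n\neq m$; this should be spelled out. A secondary technical point is that the amplitude $M(s)$ involves $\partial_t$-derivatives of $W(t,0)$ and of the $P_p$, which are controlled by {\bf Reg}, but one should record that the boundary terms at $s=0$ and $s=t$ are $\ode(\eps)$ and not merely $\ode(1)$ — this is immediate once the $1/\omega$ factor is bounded. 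Everything else is a routine application of Lemma \ref{idint}.
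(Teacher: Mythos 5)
There is a genuine gap, and it lies exactly at the point you flag as ``bookkeeping'': your argument presupposes that the outer spectral subsets carry scalar Bohr frequencies, i.e.\ that $\sigma_n(t)=\{e_n(t)\}$ and $\sigma_m(t)=\{e_m(t)\}$. Lemma \ref{vancor2} only assumes $\sigma_j(t)=\{e_j(t)\}$ for the single index $j$ of the initial state; under {\bf Spec} alone the other subsets $\sigma_n(t)$, $\sigma_m(t)$ may be bands or clusters of arbitrary (even continuous) spectrum. In that case $\Phi_\eps(0,s)P_n(0)$ is \emph{not} of the form $\e^{\frac{\i}{\eps}\int_0^s e_n(r)dr}P_n(0)$ but a genuinely operator-valued unitary on the block, so after sandwiching by $P_n(t)$ and $P_m(t)$ there is no overall scalar factor $\e^{\frac{\i}{\eps}\int_0^s\omega(r)dr}$ to pull out, and the step ``divide and multiply by $\i\omega(s)/\eps$'' has no meaning. (Your correct observation that the inner phases cancel, so that $\cV^0(s,0)(\rho_j)=\tilde\rho_j(s)$ is $\eps$-independent, does use $\sigma_j(t)=\{e_j(t)\}$ and is the same first step as in the paper; the problem is only with the outer blocks.)

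The paper's proof avoids this by staying at the operator level: after inserting $P_n(s)\cdot P_m(s)$ via the intertwining relation, it writes the off-diagonal block as a commutator, $P_n(s)F(s)P_m(s)=[H(s),\cR_m(P_nFP_m)(s)]$, where $\cR_m$ is the contour-integral operator (\ref{oprb}) built from the Riesz projectors — this identity needs only the gap $G$ between $\sigma_n$ and $\sigma_m$, not eigenvalues — and then integrates by parts once (a variant of Lemma \ref{idint}), picking up the factor $\eps$ from $\i\eps\partial_s V^0(0,s)\,\cdot\,V^0(s,0)$ at the cost of boundary terms and derivatives of the smooth, $\eps$-independent amplitude. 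Your scalar-phase version is essentially the computation the paper performs in the proof of Lemma \ref{gendiag} (see (\ref{estoff})), but there the hypothesis $\sigma_k(t)=\{e_k(t)\}$ for \emph{all} $k$ is explicitly assumed. To repair your proof in the stated generality, replace the scalar division by $\omega$ with the commutator identity (\ref{idgr}) and the operator $\cR_m$; the rest of your outline (smoothness of the amplitude, $\ode(\eps)$ boundary and integral terms, multiplication by $g/\eps$) then goes through as you describe.
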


Actually, a similar result holds for each term in (\ref{expgenew}). For simplicity, we choose to express it under the supplementary condition  $\sigma_k(t)=\{e_k(t)\}$ for all $1\leq k\leq d$, and postpone its proof to the Appendix. It will allow us to make contact with the reduced dynamics later on.
\begin{lem} \label{gendiag}
Assume {\bf Reg} and {\bf Spec} with  $\sigma_k(t)=\{e_k(t)\}$ for all $1\leq k\leq d$ and all $t\in [0,1]$. Then  for all $n\geq 1$, there exists $\eps_{n}>0$ such that for $\eps<\eps_{n}$ and any state $\rho\in \cT(\cH)$ 
(with  $s_0=t$),
\begin{align}\label{diagn}
&\int_0^t\int_0^{s_1}\dots\int_0^{s_{n-1}} \cV^0(t,s_1) \cL^1_{s_1} \cV^0(s_1,s_2) \cL^1_{s_2}\dots  \cL^1_{s_n} \cV^0(s_n,0) (\rho)ds_{n}\dots ds_{2}ds_{1}\nonumber\\
&=\int_0^t\int_0^{s_1}\dots\int_0^{s_{n-1}}\cW^0(t,s_1)\cP_0(s_1)\cL^1_{s_1}
\cdots \cW^0(s_{n-1},s_n)\cP_0(s_n)\cL^1_{s_n} \cW^0(s_{n},0)\cP_0(0)(\rho)ds_{n}\dots ds_{2}ds_{1}\nonumber\\
&\hspace{3cm}+\ode_{n}(\eps).
\end{align}
\end{lem}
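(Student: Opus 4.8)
The plan is to iterate the single-step integration-by-parts result embedded in Lemma \ref{vancor2}, applying it once for each of the $n$ dissipator insertions in the nested integral, starting from the innermost one. The key structural observation is that $\cV^0(s_n,0)(\rho)=U(s_n,0)\rho U(0,s_n)$ differs from $\cW_0(s_n,0)\cP_0(0)(\rho)=W(s_n,0)\cP_0(0)(\rho)W(0,s_n)$ only by $\ode(\eps)$ (Lemma \ref{atqm} and Lemma \ref{expadiab0}, noting $\sigma_k(t)=\{e_k(t)\}$ for all $k$ is assumed), and that after inserting $\cP_0$ the phases in $\Phi_\eps$ still survive between consecutive $\cP_0$'s; these must be removed by a genuine stationary-phase/integration-by-parts argument, not by mere substitution. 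Concretely, I would first replace the rightmost $\cV^0(s_n,0)$ by $\cW_0(s_n,0)\cP_0(0)$ at a cost $\ode(\eps)$ in operator norm (the remaining factors are all contractions on $\cT(\cH)$ and the domain of integration has volume $\le 1/n!$). Then the task reduces to showing that for each $m=1,\dots,n$,
\begin{align*}
\int_0^{s_{m-1}}\!\!\cdots\int_0^{s_{n-1}}&\cV^0(t,s_1)\cL^1_{s_1}\cdots\cV^0(s_{m-1},s_m)\cL^1_{s_m}\,\cW_0(s_m,0)\cP_0(0)(\rho)\,ds_n\cdots ds_m\\
&=\int_0^{s_{m-1}}\!\!\cdots\int_0^{s_{n-1}}\cV^0(t,s_1)\cL^1_{s_1}\cdots\cP_0(s_m)\cL^1_{s_m}\,\cW_0(s_m,0)\cP_0(0)(\rho)\,ds_n\cdots ds_m+\ode(\eps),
\end{align*}
after which one re-expresses $\cV^0(s_{m-1},s_m)\cP_0(s_m)=\cP_0(s_{m-1})\cV^0(s_{m-1},s_m)\cP_0(s_m)=\cW_0(s_{m-1},s_m)\cP_0(s_m)$ by Remark \ref{v=w}, and proceeds to the next step.

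The heart of the argument is thus the insertion of $\cP_0(s_m)$ to the left of $\cL^1_{s_m}$, i.e. showing that the ``coherence part'' $(\un-\cP_0(s_m))\cL^1_{s_m}\cW_0(s_m,0)\cP_0(0)(\rho)$ integrates to $\ode(\eps)$ against the oscillating kernel $\cV^0(s_{m-1},s_m)$. This is exactly the mechanism of Lemma \ref{vancor2}: writing $\cV^0(s_{m-1},s_m)=W(s_{m-1},0)\Phi_\eps(s_{m-1},s_m)W(0,s_m)$ via Lemma \ref{decvwp}, the off-diagonal blocks $P_a(s_m)\,\cdot\,P_b(s_m)$ with $a\ne b$ carry a nonvanishing phase $\e^{-\frac{\i}{\eps}\int_{s_m}^{s_{m-1}}(e_a-e_b)}$, whose $1/\eps$-oscillation allows one integration by parts in $s_m$ (the gap hypothesis {\bf Spec} keeps $e_a-e_b$ bounded away from $0$). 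The boundary terms at $s_m=s_{m-1}$ and $s_m=0$ and the derivative term are each $O(\eps)$ times a contraction, uniformly in the remaining variables; summing and integrating over the shrinking simplex preserves the bound. One must be slightly careful that the inner object being transported, $\cW_0(s_m,0)\cP_0(0)(\rho)$, is itself diagonal (it lies in $\ran\cP_0(s_m)$ by (\ref{intertw0})), so that the ``diagonal in, off-diagonal out'' structure is precisely what $\cL^1_{s_m}$ can produce and what the phase then kills.

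The main obstacle I anticipate is bookkeeping the error propagation through the $n$ nested steps while keeping constants uniform: each integration by parts produces new integrands involving derivatives of $\Phi_\eps$, $W$, $\cP_0$ and $\cL^1$, and one has to check these remain uniformly bounded (this uses {\bf Reg}, which gives $C^\infty$ control of $H$, the $P_j$, hence $K$, $W$, $\cP_0$, and of the $\Gamma_l$). Because the step where $\cP_0(s_m)$ is inserted has already been carried out in the Appendix proof of Lemma \ref{vancor2} (and Lemma \ref{gendiag} is explicitly stated to be proved there), the cleanest route is: prove the abstract one-step lemma once, with explicit $O(\eps)$ constants depending only on $\sup\|\partial^k H\|$, $\sup\|\partial^k\Gamma_l\|$, the gap $G$, and $|I|$; then induct on $n$, absorbing the accumulated errors into a single $\eps_n$ and $\ode_n(\eps)$, exactly as the statement permits. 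No new ideas beyond Lemma \ref{atqm}, Lemma \ref{decvwp}, Remark \ref{v=w} and the integration-by-parts scheme of Lemma \ref{idint} should be needed.
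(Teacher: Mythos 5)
Your overall strategy is, at its core, the paper's own: one oscillatory integration by parts per dissipator insertion to place a $\cP_0(s_m)$ to the left of each $\cL^1_{s_m}$, the identity $\cV^0(t,s)\cP_0(s)=\cP_0(t)\cW_0(t,s)\cP_0(s)$ of Remark \ref{v=w} to trade $\cV^0$ for $\cW_0\cP_0$, and an induction on $n$ with $n$-dependent constants. The paper packages exactly this mechanism as an induction on the approximate diagonality of the iterated integrals $I_n(t)=\int_0^t\cV^0(0,s)\cL^1_s\cV^0(s,0)(I_{n-1}(s))\,ds$, with Lemma \ref{vancor2} as base case and a single integration by parts (their eq.\ for the off-diagonal blocks $P_j(0)\,\cdot\,P_k(0)$) at each level; your ``one-step lemma plus induction'' is the same argument in a different order, and your error bookkeeping (smoothness from {\bf Reg}, the gap from {\bf Spec}, constants absorbed into $\ode_n(\eps)$) matches theirs.

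There is, however, a genuine gap at your very first step. The claim that $\cV^0(s_n,0)$ can be replaced by $\cW_0(s_n,0)\cP_0(0)$ ``at a cost $\ode(\eps)$ in operator norm'' is false: on a coherence $P_a(0)\rho P_b(0)$, $a\neq b$, one has $\cV^0(s_n,0)(P_a(0)\rho P_b(0))=\e^{-\frac{\i}{\eps}\int_0^{s_n}(e_a-e_b)(u)du}\,W(s_n,0)P_a(0)\rho P_b(0)W(0,s_n)$ (also note $\cV^0$ is built from $V$, not $U$, though that difference is indeed $\ode(\eps)$), which has trace norm $\|P_a(0)\rho P_b(0)\|_1=\ode(1)$, while $\cW_0(s_n,0)\cP_0(0)$ annihilates it; Lemmas \ref{atqm} and \ref{expadiab0} give the $\ode(\eps)$ comparison only for states with $\rho=P_j(0)\rho P_j(0)$ (or, by linearity, $\rho=\cP_0(0)(\rho)$), not for a general state. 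Nor can this coherent part be removed later by your integration by parts: in the block $(a,b)$ the phase carried by the initial coherence combines with the phase of $\cV^0(s_{n-1},s_n)$ into $\e^{-\frac{\i}{\eps}\int_0^{s_{n-1}}(e_a-e_b)(u)du}$, which is independent of the integration variable $s_n$ — a resonant, non-oscillatory contribution of order one. So the innermost insertion of $\cP_0(0)$ is not a norm estimate and cannot be done ``by mere substitution'' any more than the outer ones; it has to be handled as in Lemma \ref{vancor2}, i.e.\ for initial data in the range of $\cP_0(0)$, which is also how the paper's induction (whose base case is precisely Lemma \ref{vancor2}, stated for $\rho_j=P_j(0)\rho_jP_j(0)$) actually operates and how the lemma is subsequently used. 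As written, your argument establishes the statement only for $\rho=\cP_0(0)(\rho)$, and the justification offered for the first replacement is incorrect.
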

Here $\ode_{n}(\eps)$ means a quantity bounded by $C_n\eps$, where $C_n$ depends on $n$.
Consequently, under the assumptions of Lemma \ref{gendiag}, for all $N\geq 1$, in the regime $g\ll \eps \ll 1$, we have the generalisation of Theorem \ref{asuper}
\begin{align}\label{expgenocoh}
 &\cU(t,0)=\cU^0(t,0)\\
 &+\sum_{n=1}^N\Big(\frac{g}{\eps}\Big)^n\hspace{-.15cm}\int_0^t\int_0^{s_1}\dots\int_0^{s_{n-1}}\cW^0(t,s_1)\cP_0(s_1)\cL^1_{s_1}
\cdots \cW^0(s_{n-1},s_n)\cP_0(s_n)\cL^1_{s_n} \cW^0(s_{n},0)\cP_0(0)(\rho)ds_{n}\dots ds_{1} \nonumber\\ \nonumber 
 &+\ode_N(g+(g/\eps)^{N+1}),
 \end{align}
where all integral terms are independent of $(\eps, g)$. Note that the error term in (\ref{expgenocoh}) is negligeable with respect to all explicit terms in the regime $\eps^{N/(N-1)}\ll g\ll \eps$.\\

We proceed by recalling the asymptotics of the transition probability between the spectral subspaces $P_j(0)\cH$ and $P_k(t)\cH$ under the unitarily implemented evolution $\cU^0(t,s)$ and of the coherences of $\cU^0(t,s)(\rho_j)$ depending on certain assumptions on the spectral subsets $\sigma_k(t)$, $1\leq k \leq d$.

\begin{prop}\label{propuread} Assume {\bf Reg} and {\bf Spec} and consider a state
$\rho_j=P_j(0)\rho_j P_j(0)\in \cT(\cH)$. Then, for $\sigma_j(t)=\{e_j(t)\}$, the adiabatic transition probability between $P_j(0)\cH$ and $P_k(t)\cH$, $k\neq j$,  under the Hamiltonian evolution is determined by the trace of 
\begin{align}\label{puread}
  &P_k(t)\cU^0(t,0)(\rho_j)P_k(t)\\
 &=-\frac{\eps^2}{(2\pi)^2} \Big\{P_k(t)\oint_{\gamma_j}R(t,z)P_k'(t)R(t,z)dz \ \tilde\rho_j(t) \oint_{\gamma_j}R(t,z)P_k'(t)R(t,z)dz \, P_k(t)\Big\}+\ode(\eps^3),\nonumber
\end{align}
where $\tilde\rho_j(t)=W(t,0)\rho_jW(0,t)$ and $R(t,z)=(H(t)-z)^{-1}$, for $z\in\rho(H(t))$. 

In case $\sigma_j(t)=\{e_j(t)\}$ and $\sigma_k(t)=\{e_k(t)\}$ for all $t\in [0,1]$,  
we have
\begin{align}\label{pureadev}
 P_k(t)\cU^0(t,0)(\rho_j)P_k(t)
 &=\eps^2 \Big\{\frac{P_k(t)P_k'(t)\tilde\rho_j(t)P_k'(t) P_k(t)}{(e_j(t)-e_k(t))^2}\Big\}+\ode(\eps^3).
 \end{align}
 
Further assuming  $\sigma_k(t)=\{e_k(t)\}$, for all $1\leq k \leq d$, all $t\in [0,1]$, the coherences read
\begin{align}\label{cohmn}
 P_n(t)\cU^0(t,0)(\rho_j)P_m(t)&=\eps^2 \Big\{\frac{P_m(t)P_m'(t)\tilde\rho_j(t)P_n'(t) P_n(t)}{(e_j(t)-e_m(t))(e_j(t)-e_n(t))}\Big\}+\ode(\eps^3),  \phantom{xxxxxx} \mbox{for }  m\neq j, n\neq j, \nonumber\\
 P_j(t)\cU^0(t,0)(\rho_j)P_m(t)&=\i\eps \frac{\tilde \rho_j(t)P_m'(t)P_m(t)}{e_m(t)-e_j(t)} +\ode(\eps^2),\phantom{xxxxxxxxxxxxxxxxxxxxxxx}   \mbox{for }  m\neq j. 
\end{align}
\end{prop}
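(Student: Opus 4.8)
The plan is to compute the adiabatic propagator $\cV^0(t,0)$ explicitly using the decomposition $V(t,0)=W(t,0)\Phi_\eps(t,0)$ from Lemma \ref{decvwp}, then expand $\cU^0(t,0)(\rho_j)=\cV^0(t,0)(\rho_j)+\ode(\eps)$ to second order in $\eps$. Since $\rho_j=P_j(0)\rho_jP_j(0)$, conjugating by the Kato operator gives $W(t,0)\rho_jW(0,t)=\tilde\rho_j(t)=P_j(t)\tilde\rho_j(t)P_j(t)$ by the intertwining relation (\ref{interk}), so the problem reduces to controlling the dynamical phase operator $\Phi_\eps(t,0)$ conjugated on the left and right, sandwiched between $P_k(t)$ on both sides.

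The key point is that the $\ode(\eps)$ error in $\|U(t,0)-V(t,0)\|$ is too crude here, since the transition amplitude itself is $\ode(\eps)$ and the probability $\ode(\eps^2)$; so I would instead use the next order of the adiabatic expansion. First I would write $U(t,0)=V(t,0)(\un+\eps B_1(t,0)+\ode(\eps^2))$ (or rather work directly with the first-order adiabatic correction to the intertwining: off-diagonal blocks $P_k(t)U(t,0)P_j(0)$ are $\ode(\eps)$ with a computable leading coefficient). The standard computation: from the integration-by-parts identity (Lemma \ref{idint} in the appendix, i.e. the mechanism behind Lemma \ref{atqm}), one gets
\begin{align}\label{planfirst}
P_k(t)U(t,0)P_j(0) = -\i\eps\, P_k(t)\frac{P_k(t)K(t)P_j(t)+\cdots}{e_j(t)-e_k(t)}\,(\text{phases})\,P_j(0)+\ode(\eps^2),
\end{align}
where the numerator is expressed via the resolvent as $\frac{1}{2\i\pi}\oint_{\gamma_j}R(t,z)H'(t)R(t,z)dz$ restricted appropriately; using $P_k H' P_j = (e_k-e_j)P_k P_j' P_j = (e_j - e_k) P_k(t)P_k'(t)P_j(t)$ when $\sigma_j=\{e_j\}$ gives the contour-integral form stated in (\ref{puread}). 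Then $P_k(t)\cU^0(t,0)(\rho_j)P_k(t) = (P_k(t)U(t,0)P_j(0))\,\rho_j\,(P_k(t)U(t,0)P_j(0))^*$ after inserting $P_j(0)\rho_jP_j(0)=\rho_j$, and the $\eps$-dependent phases cancel between the two factors (as in the proof of Lemma \ref{expadiab0}), leaving the $\eps^2$ expression with $\tilde\rho_j(t)=W(t,0)\rho_jW(0,t)$. In the doubly-nondegenerate case $\sigma_k(t)=\{e_k(t)\}$, the contour integral collapses to $P_k(t)P_k'(t)/(e_j(t)-e_k(t))$ by residues, yielding (\ref{pureadev}).

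For the coherences (\ref{cohmn}), I would apply the same first-order adiabatic expansion but now keep track of the phases, which do \emph{not} cancel when $n\neq m$. Writing $P_n(t)\cU^0(t,0)(\rho_j)P_m(t)=P_n(t)U(t,0)\rho_jU(0,t)P_m(t)$ and using (\ref{planfirst}) for both $P_n(t)U(t,0)P_j(0)$ and its adjoint analog $P_j(0)U(0,t)P_m(t)$: when both $n,m\neq j$ each factor contributes one power of $\eps$, giving the $\eps^2$ term with the product of two energy denominators $(e_j-e_m)(e_j-e_n)$, and the leftover dynamical phases organize into the form shown (here one must be careful that $\tilde\rho_j(t)$ already absorbs the Kato transport so only residual phase factors remain — and in fact with the choice (\ref{instev}) / the structure $\tilde\rho_j=P_j\tilde\rho_jP_j$ they combine correctly). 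When $m=j$ (or symmetrically $n=j$), only \emph{one} of the two factors is off-diagonal, so the result is $\ode(\eps)$, with the single energy denominator $e_m(t)-e_j(t)$ and an explicit factor of $\i$ from (\ref{planfirst}).

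The main obstacle is bookkeeping the first-order adiabatic correction cleanly: one needs the precise form of the $\ode(\eps)$ term in $U(t,0)-V(t,0)$ restricted to off-diagonal blocks, which is exactly what the appendix lemma (Lemma \ref{idint}) supplies, and then one must verify that all $\eps$-dependent dynamical phases either cancel (diagonal/probability case) or assemble into the stated phase-free expressions (coherences, via $\tilde\rho_j$). A secondary subtlety is the second point of assumption \textbf{Reg}, which (via Remark \ref{ctoct}) ensures there is no boundary term at $s=0$ spoiling the $\ode(\eps^2)$ and $\ode(\eps^3)$ error estimates; I would invoke it at the point where the integration by parts produces endpoint contributions. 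The resolvent manipulation turning $P_kK(t)P_j$ and $P_kH'(t)P_j$ into contour integrals is routine given (\ref{rieszj})–(\ref{multik}), so I would not belabor it.
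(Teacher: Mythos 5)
Your overall route --- compute the off-diagonal amplitude $P_k(t)U(t,0)P_j(0)$ to first order, square it against $\rho_j$, and let the common level-$j$ dynamical phase cancel --- is a legitimate alternative and would, once completed, reproduce (\ref{puread})--(\ref{cohmn}). The paper proceeds differently: it proves the proposition only in Section \ref{secgen}, via the $q=2$ superadiabatic scheme, writing $P_k(t)\cU^0(t,0)(\rho_j)P_k(t)=P_k(t)P^2_j(t)\cV^0_2(t,0)(\rho_j)P^2_j(t)P_k(t)+\ode(\eps^3)$ (exact intertwining of $\hat V_2$ with the renormalized projectors $P^2_j$ plus the $\ode(\eps^{3})$ bound of Proposition \ref{adiabH}) and then expanding $P_k(t)P^2_j(t)=P_k(t)(P^1_j(t)-P_j(t))+\ode(\eps^2)$ with $P^1_j-P_j=-\frac{\eps}{2\pi}\oint_{\gamma_j}R\,K\,R\,dz+\ode(\eps^2)$; there all the $\eps$-dependence sits in the projectors, and no phases or iterated integrations by parts ever need to be tracked.

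The genuine gap in your write-up is the error order claimed in (\ref{planfirst}). A single application of Lemma \ref{idint} (with $\cX=U$, $\cY=V$) gives, since $K(0)=0$ under {\bf Reg}, $U(t,0)-V(t,0)=\tfrac{\eps}{2}\sum_n\cR_n([K,P_n])(t)\,U(t,0)+\eps\int_0^t\{\cdots\}\,ds$, and the integral remainder is a priori only $\ode(\eps)$ --- the same size as the boundary term you keep. Hence ``$P_k(t)U(t,0)P_j(0)=\eps\,(\mbox{explicit})+\ode(\eps^2)$'' is not supplied by the appendix lemma as it stands, and it is exactly this estimate that your $\ode(\eps^3)$ errors in (\ref{puread})--(\ref{pureadev}) and the $\ode(\eps^3)$, $\ode(\eps^2)$ errors in (\ref{cohmn}) rest on. You must integrate by parts a second time on the oscillatory off-diagonal blocks of the remainder (the phases $\e^{\frac{\i}{\eps}\int(e_n-e_m)}$, $n\neq m$, are non-stationary thanks to {\bf Spec}), or equivalently pass to the $q=1,2$ hierarchy --- which is precisely why the paper defers the proof to Section \ref{secgen} instead of deducing it from Lemma \ref{atqm}. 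Once that step is supplied, the rest of your plan is sound: the $s=0$ endpoint contributions are indeed absent because $K(0)=0$, the common phase $\e^{-\frac{\i}{\eps}\int_0^t e_j}$ cancels in the populations and in the coherences with $n,m\neq j$ (and drops out of the mixed case $m=j$ as well), and the contour-integral coefficient is the correct general form when only $\sigma_j(t)$ is assumed to be a single eigenvalue; the remaining issues are signs and factors of $\i$, which do not affect the squared expression (\ref{pureadev}).
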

\begin{rem} \label{remtiltra} 
In case the assumption  $\sigma_j(t)=\{e_j(t)\}$ is dropped, $\tilde \rho_j(t)$ must be replaced by the $\eps-$dependent state 
\be
\tilde \rho_j(t, \eps)=\cV^0(t,0)(\rho_j)=V(t,0)\rho_j V(0,t)=P_j(t)\tilde \rho_j(t, \eps)P_j(t)
\ee 
in (\ref{puread}), see (\ref{defv00}). 
Similar integral expressions can be obtained for the coherences in case condition $\sigma_k(t)=\{e_k(t)\}$ for all $1\leq k\leq d$ does not hold.
\end{rem}
The proof, making use of higher order adiabatic approximations is postponed to Section \ref{secgen}.  
\\

At this point, the proof of Theorem \ref{maing<e} follows from (\ref{traproqnew}) and Proposition (\ref{propuread}): recall that $\sigma_j(t)=\{e_j(t)\}$ implies $V(t,0)\rho_j V(0,t)=W(t,0)\rho_jW(0,t)=\tilde\rho_j(t)$ by (\ref{defv00}) and Lemma \ref{expadiab0}, which yields eq. (\ref{sigjej}) from (\ref{traproqnew}). Then, (\ref{214}) follows directly from (\ref{pureadev}).

Similarly, the proof of the first line of (\ref{apupr}) in Theorem \ref{asuper} is a consequence of Corollary \ref{coruv} for $s=0$, Lemma \ref{vancor2} and Remark \ref{v=w}. The second line follows from  Lemma \ref{expadiab0} and $g \ll \eps$.

 \section{Slow drive regime $g \gg \eps$}\label{slodri}
 
As already mentioned in Section \ref{setup}, the analysis of the regime $g\gg \eps$, is all the more accurate that we control the spectral properties of the Lindbladian $\cL^{[g]}_t=\cL^0_t+g\cL^1_t$. In this section, we provide a fairly explicit approximation of $\cU(t,s)$ based on perturbation theory under the sole condition $\eps \ll g$, see Proposition \ref{je<g}, 
assuming  the Hilbert spaces $\cH$ is finite dimensional and the Hamiltonian $H(t)$ has simple eigenvalues that are generic in the sense of {\bf Gen}. 

\medskip

We recall that for $H(t)=\sum_{1\leq j\leq d}P_j(t)e_j(t)$, the explicit description of the approximate evolution operator $V(t,s)$ defined by (\ref{adev}) provided by Lemma \ref{decvwp} holds with 
\be
\Phi_\eps(t,s)=\sum_{j=1}^dP_j(0)\e^{-\frac{\i}{\eps}\int_s^te_j(r)dr},
\ee 
irrespectively of the dimension of the projectors $P_j(t)$, see (\ref{fidyn}).  
Under {\bf Reg} and {\bf Gen}, we consider $\{\ffi_j(t)\}_{1\leq j\leq d}$, the canonical smooth orthonormal basis  of eigenvectors of $H(t)$ defined in (\ref{instev}) that satisfies
\be\label{instev2}
\ffi_j(t)=W(t,0)\ffi_j, \ \ \mbox{s.t.}  \ \ \bra \ffi_j(t) | \ffi_j'(t)\ket\equiv 0,
\ee
so that $P_j(t)=|\ffi_j(t)\ket\bra \ffi_j(t)|$.
 Introducing the { Hilbert-Schmidt} scalar product on $\cB(\cH)$
 \be\label{BRAKET}
 \BRA A,B\KET=\tr (A^*B),
 \ee
the vectors (\ref{instev2}) yield in turn an instantaneous orthonormal eigenbasis of $\cL_t^0$ with respect to $\BRA \cdot, \cdot \KET$:
\begin{lem} For all $t\in[0,1]$, the family  of rank one operators on $\cH$, $\{|\ffi_j(t)\ket\bra \ffi_k(t)|\}_{1\leq j,k\leq d}$, is a smooth orthonormal basis of $\cB(\cH)$ equipped with $\BRA \cdot, \cdot \KET$ such that
\begin{align}\label{specl0}
&\cL_t^0(|\ffi_j(t)\ket\bra \ffi_k(t)|)=-\i (e_j(t)-e_k(t)) |\ffi_j(t)\ket\bra \ffi_k(t)|:=\lambda_{jk}(t)  |\ffi_j(t)\ket\bra \ffi_k(t)|
\end{align}
\end{lem}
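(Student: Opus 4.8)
The plan is to check the three assertions of the lemma in turn — orthonormality of the family $E_{jk}(t):=|\ffi_j(t)\ket\bra\ffi_k(t)|$ together with the fact that it spans $\cB(\cH)$, its smoothness in $t$, and the eigenvalue relation \eqref{specl0} — none of which requires more than a short computation. The only point deserving a little care is that the vectors $\ffi_j(t)$, although defined via the parallel transport operator $W(t,0)$ in \eqref{instev2}, are genuine eigenvectors of $H(t)$.

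For orthonormality I would use $E_{jk}(t)^*=|\ffi_k(t)\ket\bra\ffi_j(t)|$ and the linearity of the trace to get
\be
\BRA E_{jk}(t),E_{lm}(t)\KET=\tr\big(|\ffi_k(t)\ket\bra\ffi_j(t)|\ffi_l(t)\ket\bra\ffi_m(t)|\big)=\bra\ffi_j(t)|\ffi_l(t)\ket\,\bra\ffi_m(t)|\ffi_k(t)\ket=\delta_{jl}\delta_{km},
\ee
the last step using that $\{\ffi_j(t)\}_{1\leq j\leq d}$ is an orthonormal basis of $\cH$, being the image under the unitary $W(t,0)$ of the orthonormal eigenbasis $\{\ffi_j\}_{1\leq j\leq d}$ of $H(0)$ (see \eqref{instev2} and the unitarity of $W$ recorded after \eqref{katop}). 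Since $\dim\cB(\cH)=d^2$ and the $d^2$ operators $E_{jk}(t)$, $1\leq j,k\leq d$, form an orthonormal set, they constitute an orthonormal basis of $(\cB(\cH),\BRA\cdot,\cdot\KET)$.

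Smoothness is obtained by following the chain $H\in C^\infty\Rightarrow P_j\in C^\infty\Rightarrow K\in C^\infty\Rightarrow W(\cdot,0)\in C^\infty$ granted by {\bf Reg}, \eqref{rieszj}, \eqref{multik} and \eqref{katop}; hence $\ffi_j(t)=W(t,0)\ffi_j$ is $C^\infty$ and so is $E_{jk}(t)$, being sesquilinear in smooth vectors. For the eigenvalue relation, the intertwining property \eqref{interk} together with $\ffi_j\in P_j(0)\cH$ gives $\ffi_j(t)\in P_j(t)\cH$, and since {\bf Gen} forces $\sigma_j(t)=\{e_j(t)\}$, the range of $P_j(t)$ is the eigenspace of $H(t)$ for $e_j(t)$; thus $H(t)\ffi_j(t)=e_j(t)\ffi_j(t)$ and, $H(t)$ being self-adjoint, $\bra\ffi_k(t)|H(t)=e_k(t)\bra\ffi_k(t)|$. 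Therefore
\be
\cL_t^0(E_{jk}(t))=-\i[H(t),E_{jk}(t)]=-\i\big(e_j(t)-e_k(t)\big)E_{jk}(t),
\ee
which is \eqref{specl0}. I expect no real obstacle here; the one subtlety to handle properly is exactly this eigenvector identification, i.e. invoking \eqref{interk} and the simplicity of $e_j(t)$ from {\bf Gen}, rather than manipulating $W(t,0)$ only formally.
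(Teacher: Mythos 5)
Your proof is correct, and since the paper states this lemma without proof (treating it as a standard verification), your argument is exactly the natural one the paper implicitly relies on: orthonormality and spanning via the Hilbert--Schmidt inner product and $\dim\cB(\cH)=d^2$, smoothness through the chain $H\Rightarrow P_j\Rightarrow K\Rightarrow W$, and the eigenvalue relation from the intertwining property \eqref{interk} plus simplicity of the eigenvalues under {\bf Gen}. Nothing is missing; in particular you correctly isolate the one point needing care, namely that $\ffi_j(t)=W(t,0)\ffi_j$ is a genuine eigenvector of $H(t)$ for $e_j(t)$.
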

Consequently, the spectrum of $\cL^0_t$ sits on the imaginary axis, is simple except for the eigenvalue $0$ which is $d-$fold degenerate, and 
\be
\ker \cL^0_t = \spa \{P_j(t), j=1,\dots, d\}.
\ee
The corresponding spectral decomposition is written as
\be\label{specdec0}
\cL_t^0=\sum_{j\neq k} \lambda_{jk}(t) \cP_{jk}(t)+ 0 \cP_0(t),
\ee
where
\begin{align}\label{specpro}
\cP_{jk}(t)(\rho)&=P_j(t)\rho P_k(t)=\bra \ffi_j(t)| \rho \ffi_k(t)\ket  |\ffi_j(t)\ket\bra \ffi_k(t)|\nonumber\\
\cP_{0}(t)(\rho)&=\sum_{1\leq j\leq d}{P_j(t)\rho P_j(t)}.
\end{align}
With $\cQ_0(t)=\un-\cP_0(t)$, we have $\cL_t^0=\cQ_0(t) \circ \cL_t^0= \cL_t^0\circ\cQ_0(t)$.

\subsection{Perturbation theory of $\cL_t^{[g]}$}

We now address the spectral properties of $\cL_t^{[g]}=\cL_t^0+g\cL^1_t$ in the perturbative regime $g\ra 0$. Before doing so, we recall that $\cL_t^{[g]}$ being 
a Lindblad operator, this imposes the following structural constraints on its spectrum, for each $t\in[0,1]$: 
\be\label{speclind}
0\in \sigma(\cL_t^{[g]})=\overline{\sigma(\cL_t^{[g]})}\subset\{z\in \C \ | \Re z\leq 0\}, \ \ \forall \ g\in \R^+.
\ee 
Indeed, for fixed $t\in[0,1]$, $(e^{s\cL_t^{[g]}})_{s\geq 0}$ being a contraction semigroup on $\cT(\cH)$, it follows that $\Re\sigma(\cL_t^{[g]})\leq 0$
and that all eigenvalues sitting on the imaginary axis are semisimple, ${i.e.}$ there is no eigennilpotent (Jordan block) corresponding to those eigenvalues in the spectral decomposition of $\cL_t^{[g]}$. Since $e^{s\cL_t^{[g]}}$ is a CPTP map  for all $s\geq 0$, it admits $1$ as an eigenvalue, see e.g. \cite{Schr}, so that $0$ is an eigenvalue of the generator $\cL_t^{[g]}$. The symmetry $ \cL_t^{[g]}(\rho^*)= (\cL_t^{[g]}(\rho))^*$ which holds for all $\rho\in \cT(\cH)$, implies the symmetry of the spectrum, when applied to eigenvectors of $\cL_t^{[g]}$.

Therefore, following 
Chapter II \S 2 \cite{K2} and dropping the variable $t$ from the notation,  we denote by $\cP_{jk}^{[g]}$ the spectral projector of $\cL^{[g]}$ associated with the  eigenvalues emanating from the unperturbed eigenvalue $-\i (e_j-e_k)$, and by $\cP_0^{[g]}$ the spectral projector on the set of eigenvalues emanating from the $d-$fold degenerate unperturbed eigenvalue $0$, the so-called  $\lambda-$group of eigenvalues, with $\lambda=0$. Since $\cP_{jk}$ is one dimensional, $\cP_{j,k}^{[g]}$ is one dimensional and analytic in $g\in \C$, $|g|$ small enough. On the other hand, $\dim \cP_{0}=d$ implies that the degenerate eigenvalue $0$ may split for non zero $g$ and while $\cP_0^{[g]}$ is analytic in $g\in \C$, for $|g|$ small enough, it might not be the case for the projections on the individual eigenvalues emanating from $0$. 

Let $\{\lambda_j^{[g]}\}_{0\leq j\leq m}$, $m\leq d-1$, be the set of eigenvalues in the  $0-$ group for $g\in \C\setminus\{0\}$ with $|g|$ small enough. Each $\{\lambda_j^{[g]}\}_{0\leq j\leq m}$ is an analytic functions of a (fractional) power of $g$ that tend to zero as $g\ra 0$ and  may be permanently degenerate. For the structural reasons recalled above, one of these eigenvalues, we  denote by $\lambda_0^{[g]}$, is identically equal to zero, $\lambda_0^{[g]}\equiv 0$, $\forall g\in\C\setminus\{0\}$. In case $\lambda_0^{[g]}$ is degenerate, it is semisimple. 

Let us analyse the splitting of the $0$-group of eigenvalues.
We have
\be\label{0gpespero}
\cP_0^{[g]}=-\frac{1}{2i\pi}\int_{\gamma_0}(\cL^{[g]}-z)^{-1}dz=\cP_0+g\cP_1
+\ode(g^2), 
\ee 
for $|g|$ small, where  $\gamma_0$ is a circle of small radius, independent of $g$, centered at the origin. 
We mention for completeness, that since $0$ is a semisimple eigenvalue of $\cL^0$, 
\be\label{q1}
\cP_1=- \cP_0\cL^1 \cS_0-\cS_0\cL^1 \cP_0=\cP_0\cP_1\cQ_0+\cQ_0\cP_1\cP_0,
\ee
 where $\cS_0$ is the reduced resolvent of $\cL_0$ at $0$, satisfying $\cS_0\cP_0=\cP_0\cS_0=0$ and $\cS_0\cL^0=\cL^0\cS_0=\cQ_0$. 
  
The analytic operator that describes the splitting  reads
\begin{align}\label{deftl1}
 \cP_0^{[g]}\cL^{[g]}\cP_0^{[g]} 
&=(\cP_0+g\cP_1+\ode(g^2))(\cL^0+g\cL^1)(\cP_0+g\cP_1+\ode(g^2)) \nonumber \\
&= g\cP_0 \cL^1 \cP_0 +  \ode(g^2),
\end{align}
where we used $\cL_0\cP_0=\cP_0\cL_0=0$.

\medskip

The restriction to ${\ker \cL^0}$ of the leading order term $\tilde \cL^1=\cP_0 \cL^1 \cP_0$ admits a matrix representation $\tilde L$ in the ordered orthonormal basis of rank one projectors $\{P_1, P_2, \dots, P_d\}$ of $\cP_0 \cB(\cH)$  whose elements are determined by the expressions (recall (\ref{specpro}) and (\ref{defl1}))
\be
P_k \cL^1(P_j)P_k=\sum_{l\in I} \left(P_k\Gamma_l^*P_j\Gamma_l P_k-\delta_{jk}P_k\Gamma_l^*\Gamma_l P_k\right).
\ee
We set $\tilde L=\sum_{l\in I}\tilde L(l)\in M_d(\R)$, where the matrix elements $\tilde L(l)_{jk}$, $1\leq j,k\leq d$, read 
\begin{align}
\tilde L(l)_{kj}=|\bra\ffi_k|\Gamma_l\ffi_j\ket|^2-\delta_{jk}\|\Gamma_l\ffi_k\|^2,
\end{align}
so that for any $\rho=\cP_0 (\rho)=\sum_{1\leq j\leq d}\rho_j P_j$, with $\rho_j=\bra\ffi_j | \rho \ffi_j\ket$,
\be\label{L1M1}
\tilde \cL^1(\rho)=\cP_0 \cL^1 \cP_0(\rho)=\sum_{1\leq j, k\leq d} P_k \tilde L_{kj}\rho_j.
\ee
In other words, 
\begin{align}\label{mattildel}
\tilde \cL^1|_{\ker \cL^0}\simeq \tilde L=\sum_{l\in I}
\begin{pmatrix}
|\bra\ffi_1|\Gamma_l\ffi_1\ket|^2 - \|\Gamma_l\ffi_1\|^2 & |\bra\ffi_1|\Gamma_l\ffi_2\ket|^2 & & |\bra\ffi_1|\Gamma_l\ffi_d\ket|^2 \cr 
|\bra\ffi_2|\Gamma_l\ffi_1\ket|^2  & |\bra\ffi_2|\Gamma_l\ffi_2\ket|^2 - \|\Gamma_l\ffi_2\|^2& & |\bra\ffi_2|\Gamma_l\ffi_d\ket|^2 \cr
  & & \ddots & \cr
|\bra\ffi_d|\Gamma_l\ffi_1\ket|^2  & |\bra\ffi_d|\Gamma_l\ffi_2\ket|^2 & & |\bra\ffi_d|\Gamma_l\ffi_d\ket|^2 - \|\Gamma_l\ffi_d\|^2
\end{pmatrix}.
\end{align}
Note that the real matrices $\tilde L(l)$ have non negative off diagonal elements and satisfy 
\be\label{trastoch}
\sum_{1\leq j\leq d}\tilde L(l)_{jk}=0, 
\ee
so that the same properties hold for $\tilde L$. This is a reflection of the fact that $\cL^1$ being a lindbladian, we have $\tr (\cL^1(\rho))=0$ for any $\rho\in\cB(\cH)$, where for $\rho=\cP_0(\rho)$, 
\be\label{traprol1}
\tr (\cL^1(\rho))=\tr \{\cP_0 \cL^1 \cP_0(\rho)\}=\sum_{1\leq j, k\leq d} \tr (P_k) \tilde L_{kj}\rho_j=\sum_{1\leq j, k\leq d} \tilde L_{kj}\rho_j.
\ee
In particular $0\in \sigma(\tilde L)$. Sufficient and necessary conditions for  $\ker \tilde L$ to be one dimensional are given in { \cite{N} or  \cite{D2}, Chapter 12.}.

Note that in the language of classical Markov processes, (\ref{trastoch}) makes the transpose of $\tilde L$ a time-dependent transition rate matrix, or generator, of a Markov process, see {\it e.g.} \cite{YZ}. We further comment on this in section \ref{secmarkov}.\\

We suppose that the splitting induced by $\cL^1$ is maximal by assuming {\bf Split}, {\it i.e.} that $\tilde \cL^1_t|_{\ker \cL_t^0}$ has simple spectrum.

\begin{rem}\label{gershg}
Assumption {\bf Split} and Gershgorin Theorem imply that for any $t\in[0,1]$ and $g>0$ small enough, 
$\Re \sigma( \cP_0(t)^{[g]}\cL_t^{[g]}\cP_0(t)^{[g]}){\setminus\{0\}}<0$.
\end{rem}

Under this hypothesis on the efficiency of the dissipator, we have the spectral decomposition
\be\label{sectilde}
\tilde \cL^1_t=0 \tilde \cP_0(t)+\sum_{j=1}^{d-1}\tilde \lambda_j(t) \tilde \cP_j(t),
\ee
where the distinct eigenvalues $\tilde \lambda_j(t)$ and eigenprojectors $\tilde \cP_j(t)$ are smooth in $t\in [0,1]$. \\

Assumption {\bf Split} ensures the spectrum of $\cL_t^{[g]}$ is simple for { small} $g>0$, and its eigenprojectors are all  regular as $g\ra 0^+$, despite $\cL_t^{[g]}$ is not normal:
\begin{prop}\label{regperpro} Assume {\bf Reg}, {\bf Gen}, and {\bf Split}. Then, there exists $g_0>0$ such that for all $t\in [0,1]$,  for all $g\in\C\setminus \{0\}$ with $|g|<g_0$, $\cL_t^{[g]}$ admits  $d$ distinct eigenvalues $\{\lambda_j^{[g]}(t)\}_{0\leq j\leq d-1}$, with corresponding one dimensional eigenprojectors $\cP_{j0}^{[g]}(t)$, that are $C^\infty$ in $t$ and analytic in $g$. Moreover, $\lambda_0^{[g]}(t)\equiv 0$,  and $\lim_{g\ra 0}\lambda_j^{[g]}(t)/g= \tilde \lambda_j(t)$ and $\lim_{g\ra 0}\cP_{j0}^{[g]}(t)=\tilde \cP_j(t)$, where $\{\tilde \lambda_j(t)\}$ and $\tilde \cP_j(t)$ are the spectral data (\ref{sectilde}) of $\tilde \cL_t^1$ .
\end{prop}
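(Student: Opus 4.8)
The plan is to combine the standard analytic perturbation theory of Kato (Chapter II of \cite{K2}) for the isolated group of eigenvalues emanating from $0$ with the extra structural constraint \fer{speclind} that forces the eigenvalue $\lambda_0^{[g]}(t)\equiv 0$ to persist. First I would fix $t$ and observe that, by \fer{specdec0}, the eigenvalue $0$ of $\cL_t^0$ is isolated (the gap to the rest of the spectrum is controlled below by the Bohr frequencies, uniformly in $t$ by compactness of $[0,1]$ and assumption \textbf{Gen}), so that the Riesz projector $\cP_0^{[g]}(t)=-\frac{1}{2\i\pi}\oint_{\gamma_0}(\cL_t^{[g]}-z)^{-1}dz$ is well defined and analytic in $g$ for $|g|<g_0(t)$, and, via the $C^\infty$ dependence of all data on $t$ together with the uniform gap, the radius $g_0$ can be chosen independent of $t\in[0,1]$ and the projector is jointly $C^\infty$ in $t$ and analytic in $g$. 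Next, the reduced operator $\cP_0^{[g]}(t)\cL_t^{[g]}\cP_0^{[g]}(t)$ restricted to the $d$-dimensional range of $\cP_0^{[g]}(t)$ has, by \fer{deftl1}, the expansion $g\,\tilde\cL_t^1+\ode(g^2)$; after dividing by $g$ and using that $\tilde\cL_t^1|_{\ker\cL_t^0}$ has \emph{simple} spectrum by \textbf{Split}, the standard perturbation result (Kato, Chapter II, Theorem 5.4, or the reduction process of \S 2) applies to the analytic family $g\mapsto \frac{1}{g}\cP_0^{[g]}(t)\cL_t^{[g]}\cP_0^{[g]}(t)|_{\ran\cP_0^{[g]}(t)}$, transported to a fixed space by the Kato transformation associated with $\cP_0^{[g]}(t)$: this yields $d$ distinct eigenvalues $\mu_j^{[g]}(t)$, analytic in $g$ near $0$, with $\mu_j^{[0]}(t)=\tilde\lambda_j(t)$ and one-dimensional eigenprojectors analytic in $g$ converging to $\tilde\cP_j(t)$ as $g\to 0$. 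Setting $\lambda_j^{[g]}(t)=g\,\mu_j^{[g]}(t)$ and $\cP_{j0}^{[g]}(t)$ the corresponding transported projectors gives the claimed objects; simplicity of the $\lambda_j^{[g]}(t)$ for $g\ne 0$ small and of the full spectrum of $\cL_t^{[g]}$ follows since the remaining eigenvalues stay near the nonzero $\lambda_{jk}$'s, which are themselves simple.

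The point requiring care is the identity $\lambda_0^{[g]}(t)\equiv 0$: a priori, simplicity of $\tilde\cL_t^1|_{\ker\cL_t^0}$ only tells us the $d$ branches $\mu_j^{[g]}(t)$ are distinct and analytic, not that one of them vanishes identically. Here I would invoke the structural fact recalled after \fer{speclind}: for every $g\ge 0$ and every $t$, $\e^{s\cL_t^{[g]}}$ is CPTP, hence $0\in\sigma(\cL_t^{[g]})$, and moreover $0$ is a semisimple eigenvalue. For $g>0$ small enough this eigenvalue lies inside $\gamma_0$ (since $\|g\cL_t^1\|_\tau\to 0$), so it must be one of the $d$ eigenvalues of the reduced operator; being analytic in $g$ and equal to $0$ for a sequence of $g$ accumulating at $0$ (in fact for all small $g>0$), the corresponding branch $\lambda_0^{[g]}(t)$ is identically zero by the identity theorem for analytic functions. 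The matching eigenprojector $\cP_{00}^{[g]}(t)$ is then the Riesz projector of $\cL_t^{[g]}$ at $0$, analytic in $g$, and its limit as $g\to 0$ is the eigenprojector of $\tilde\cL_t^1$ at its simple eigenvalue $0$, namely $\tilde\cP_0(t)$.

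For the uniformity in $t$, I would note that all the perturbation-theoretic constructions (the contour $\gamma_0$, the radius $g_0$, the bounds on resolvents $(\cL_t^{[g]}-z)^{-1}$ for $z\in\gamma_0$) depend only on the uniform spectral gap of $\cL_t^0$ around $0$ — which holds by \textbf{Gen} and compactness of $[0,1]$ — and on $\sup_{t}\|\cL_t^1\|_\tau=L_1<\infty$; hence $g_0$ can be taken independent of $t$. Smoothness in $t$ of $\lambda_j^{[g]}(t)$ and $\cP_{j0}^{[g]}(t)$ follows because $\cL_t^{[g]}$ is $C^\infty$ in $t$ in norm (by \textbf{Reg}), the resolvent on $\gamma_0$ is then $C^\infty$ in $t$, and the Kato reduction process preserves this regularity; simplicity of $\tilde\cL_t^1|_{\ker\cL_t^0}$ for every $t$ (assumption \textbf{Split}) guarantees the branches do not cross as $t$ varies, so each $\lambda_j^{[g]}(t)$ and $\cP_{j0}^{[g]}(t)$ is globally $C^\infty$ on $[0,1]$. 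Finally $\lim_{g\to0}\lambda_j^{[g]}(t)/g=\tilde\lambda_j(t)$ and $\lim_{g\to0}\cP_{j0}^{[g]}(t)=\tilde\cP_j(t)$ are read off from the expansion \fer{deftl1} together with first-order analytic perturbation theory applied to $\tilde\cL_t^1$ with its simple spectrum \fer{sectilde}. I expect the main obstacle to be presenting the $\lambda_0^{[g]}(t)\equiv 0$ argument cleanly — making sure the semisimplicity and the a priori location of the zero eigenvalue inside $\gamma_0$ are correctly chained — rather than any hard estimate; the rest is a careful but routine application of Kato's theory.
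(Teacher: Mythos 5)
Your proposal is correct and follows essentially the same route as the paper, whose proof simply invokes analytic perturbation theory for the analytic family $\frac1g\,\cP_0^{[g]}\cL^{[g]}\cP_0^{[g]}$, whose order-$g^0$ term is $\tilde\cL^1$ with simple spectrum on $\ker\cL^0$ by \textbf{Split}. The extra care you devote to $\lambda_0^{[g]}(t)\equiv 0$ (the CPTP structural fact \fer{speclind} combined with the identity theorem) corresponds to the discussion the paper places in the text just before the proposition, so nothing in your argument departs from, or is missing relative to, the paper's treatment.
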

\proof
It is a direct consequence of analytic perturbation theory, since for $t\in[0,1]$ fixed (omitted in the notation), the operator $ \frac1g\cP_0^{[g]}\cL^{[g]}\cP_0^{[g]} $ for $g\in\C\setminus \{0\}$ with $|g|<g_0$ admits an analytic extension to $\{g\in \C, |g|<g_0\}$ with term of order $g^0$ given by $\tilde \cL^1$ thanks to (\ref{deftl1}), with $\sigma(\tilde \cL^1|_{\ker \cL^0})$ simple. \qed

\subsection{Adiabatics and perturbation theory}

Under the hypotheses of the previous proposition and for $g\in \C\setminus \{0\}$, $|g|$ small enough, $\sigma(\cL_t^{[g]})$ is simple and its spectral decomposition reads
\be\label{specdeclg}
\cL_t^{[g]}=0\cP_{00}^{[g]}(t)+\sum_{1\leq j\leq d-1} \lambda_j^{[g]}(t)\cP_{j0}^{[g]}(t)+\sum_{1\leq j\neq k\leq d}\lambda_{jk}^{[g]}(t)\cP_{jk}^{[g]}(t),
\ee
with analytic data in $g$, where  $\lambda_{jk}^{[0]}(t)=-\i (e_j(t)-e_k(t))$, $\cP_{jk}^{[0]}(t)=\cP_{jk}(t)$, see (\ref{specdec0}).  For $g>0$ 
$\sigma(\cL_t^{[g]})\setminus\{0\}\subset \{z | \Re z<  0\}$. Moreover
\be\label{2pea}
\cP_{0}^{[g]}(t)=\sum_{j=0}^{d-1}\cP_{j0}^{[g]}(t).
\ee
Accordingly, for $0<g<g_0$ fixed,  we introduce $\cV(t,s)_{0\leq s\leq t\leq 1}$  by
\be\label{decompvwp}
\cV(t,s)=\cW(t,0)\Psi_\eps(t,s)\cW^{-1}(s,0),
\ee
in keeping with  Lemma \ref{decvwp}, where $\cW(t,s)$ is the solution to
\bea\label{wkg}
&&\partial_t\cW(t,s)=\Big\{\cP_{00}^{[g]}{}'(t)\cP_{00}^{[g]}(t)+\sum_{1\leq j\leq d-1}\cP_{j0}^{[g]}{}'(t)\cP_{j0}^{[g]}(t)+\sum_{1\leq j\neq k\leq d}\cP_{jk}^{[g]}{}'(t)\cP_{jk}^{[g]}(t) \Big\}\cW(t,s),\nonumber\\
&&\phantom{xxxxxxx}:= \cK^{[g]}_t\cW(t,s),\nonumber\\
&&\cW(s,s)=\un, \ \ 0\leq s,t\leq 1,
\eea
which satisfies the intertwining property with respect to all spectral projectors of $\cL_t^{[g]}$ by construction, and
\be
\Psi_\eps(t,s)=\cP_{00}^{[g]}(0)+\sum_{1\leq j\leq d-1}\cP_{j0}^{[g]}(0)\e^{\int_s^t\lambda_j^{[g]}(u)du/\eps}+\sum_{1\leq j\neq k\leq d}\cP_{jk}^{[g]}(0)\e^{\int_s^t\lambda_{jk}^{[g]}(u)du/\eps}.
\ee
Note that $\cW(t,s)$ is independent of $\eps$ and since its generator is analytic in  $g\in \C\setminus \{0\}$, for $|g|$ small enough,
we have
\be
\sup_{g_0>g> 0}\|\cW(t,s)\|\leq C_W,
\ee
uniformly in $0\leq s\leq t\leq 1$, as revealed by straightforward estimates of the Dyson series expansion of the solution to (\ref{wkg}). 
Also, $\Re \sigma(\cL_t^{[g]})\leq 0$ for all $t\in [0,1]$, implies that for all $0\leq s\leq t\leq 1$,
\be
\sup_{\eps>0, g_0>g> 0}\| \Psi_\eps(t,s) \|\leq C_\Psi,
\ee
uniformly in $0\leq s\leq t\leq 1$.
One checks that the operator $\cV(t,s)$ satisfies
\bea
\left\{
\begin{matrix}
\eps\partial_t \cV(t,s) =(\cL_t^{[g]}+\eps\cK^{[g]}_t)\cV(t,s),\\
\cV(s,s)=\un,\hfill
\end{matrix}
\right.
\eea
and, 
\be\label{vunifbdd}
\sup_{\eps>0, g > 0}\|\cV(t,s) \|_\tau\leq C_V.
\ee
uniformly in $0\leq s\leq t\leq 1$. Recall also $\|\cU(t,s)\|_\tau= 1$. 

As expected, $(\cV(t,s))_{0\leq s\leq t\leq 1}$ approximates the propagator $(\cU(t,s))_{0\leq s\leq t\leq 1}$ solution to (\ref{ulind}) in the slow drive regime $\eps\ra 0$, $g\ra 0$, $\eps \ll g$:
\begin{prop}\label{je<g}Assume {\bf Reg}, {\bf Gen} and {\bf Split}. Then, there exists $g_0>0$ and $C<\infty$ such that for $g < g_0$,  and all $0\leq s\leq t\leq 1$
\be
\|\cU(t,s)-\cV(t,s)\|_\tau\leq C \eps/g.
\ee
\end{prop}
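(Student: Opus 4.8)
The plan is to compare $\cU(t,s)$ and $\cV(t,s)$ via the standard Duhamel trick adapted to the adiabatic setting, exactly as in the proof of the Hamiltonian adiabatic theorem (Lemma \ref{atqm}), but now with the effective ``adiabatic'' propagator $\cV(t,s)$ built from the full Lindbladian $\cL_t^{[g]}$ rather than from $\cL_t^0$. First I would write, for fixed $s$, the difference using the two evolution equations: since $\eps\partial_t\cU(t,s)=\cL_t^{[g]}\cU(t,s)$ and $\eps\partial_t\cV(t,s)=(\cL_t^{[g]}+\eps\cK_t^{[g]})\cV(t,s)$, Duhamel gives
\be
\cU(t,s)-\cV(t,s)=-\int_s^t \cU(t,u)\,\cK_u^{[g]}\,\cV(u,s)\,du.
\ee
Because $\|\cU(t,u)\|_\tau=1$ and $\|\cV(u,s)\|_\tau\le C_V$ (eq.\ (\ref{vunifbdd})), a crude bound gives only $\ode(1)$; the gain comes from integrating by parts using the intertwining property of $\cV$ with the spectral projectors of $\cL_u^{[g]}$.

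The key step is the integration-by-parts lemma from the Appendix (Lemma \ref{idint}), in the spirit of the Kato/ASY argument. Write $\cK_u^{[g]}=\sum_\alpha \cP_\alpha^{[g]}{}'(u)\cP_\alpha^{[g]}(u)$ where $\alpha$ ranges over all spectral labels $\{00\}\cup\{j0\}_{1\le j\le d-1}\cup\{jk\}_{j\ne k}$. For each ``off-diagonal'' piece connecting eigenvalues $\lambda_\alpha^{[g]}(u)$ and $\lambda_\beta^{[g]}(u)$ of $\cL_u^{[g]}$ with $\lambda_\alpha^{[g]}(u)\ne\lambda_\beta^{[g]}(u)$, one uses the identity
\be
\cP_\beta^{[g]}(u)\,\cK_u^{[g]}\,\cP_\alpha^{[g]}(u)
=\frac{\cP_\beta^{[g]}(u)\big(\cL_u^{[g]}\cK_u^{[g]}-\cK_u^{[g]}\cL_u^{[g]}\big)\cP_\alpha^{[g]}(u)}{\lambda_\beta^{[g]}(u)-\lambda_\alpha^{[g]}(u)}
\ee
to trade the factor $\cK_u^{[g]}$ for a commutator with $\cL_u^{[g]}$, which under the integral sign can be moved onto $\cU$ and $\cV$ and turned into a $\partial_u$ (producing the $\eps$ in front of $\cU-\cV$) at the cost of one more $u$-derivative of the spectral data, plus boundary terms. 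The crucial quantitative point is that every such denominator is bounded below: for the pairs within the $0$-group the separation is $|\lambda_j^{[g]}(u)-\lambda_k^{[g]}(u)|\asymp g\,|\tilde\lambda_j(u)-\tilde\lambda_k(u)|$, which is $\gtrsim g$ uniformly in $u$ by assumption {\bf Split} (simplicity of $\sigma(\tilde\cL_u^1|_{\ker\cL_u^0})$), while for pairs involving the Bohr-frequency eigenvalues the separation stays $\gtrsim G$ by {\bf Gen} for $g$ small. Hence each inverse gap contributes at most $\ode(1/g)$, and the smoothness of $\cP_\alpha^{[g]}(u)$ in $u$ (uniformly for $g<g_0$, from Proposition \ref{regperpro} and the analyticity in $g$) controls the derivatives. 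Collecting the boundary terms and the remaining integral, and using $\|\cU\|_\tau=1$, $\|\cV\|_\tau\le C_V$, $\|\cW\|\le C_W$, $\|\Psi_\eps\|\le C_\Psi$, one obtains $\|\cU(t,s)-\cV(t,s)\|_\tau\le C\,\eps/g$.

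The main obstacle — and the one point where this differs from the textbook argument — is the uniformity in $g$ as $g\to0$: the eigenvalues $\lambda_j^{[g]}(u)$ of the $0$-group collapse to $0$ at rate $g$, so the resolvent / reduced-resolvent factors that appear in the integration by parts blow up like $1/g$, and one must check that no \emph{second} inverse power of $g$ sneaks in (e.g.\ from a double integration by parts, or from differentiating $\lambda_j^{[g]}/g$ in $u$). Here I would exploit that, after factoring, the relevant object is $\cP_0^{[g]}(u)\cL_u^{[g]}\cP_0^{[g]}(u)/g$, which by (\ref{deftl1}) extends analytically in $g$ to $g=0$ with value $\tilde\cL_u^1$ having \emph{simple} spectrum uniformly in $u$; thus the spectral projectors $\cP_{j0}^{[g]}(u)$ and the renormalized eigenvalues $\lambda_j^{[g]}(u)/g$ and all their $u$-derivatives are bounded uniformly for $u\in[0,1]$ and $0<g<g_0$. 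Only a single denominator of size $g$ then survives per off-diagonal term, giving the claimed $\ode(\eps/g)$ and no worse. The off-diagonal-to-$0$-group transitions (labels $jk$ versus $j0$) are the easy case since those gaps are $\ode(1)$, so the whole error budget is set by the within-$0$-group terms.
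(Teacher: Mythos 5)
Your proposal is correct and follows essentially the same route as the paper: Duhamel plus the Kato/ASY integration by parts of Lemma \ref{idint} applied with $\cG=\cL^{[g]}$, $\cK=\cK^{[g]}$, using the uniform (in $g$) regularity of the spectral data from Proposition \ref{regperpro}, the lower bound of order $g$ on gaps within the $0$-group (and order $1$ otherwise), and the observation that $u$-derivatives of the $0$-group eigenvalue differences are themselves $\ode(g)$, so only a single factor $1/g$ appears and the error is $\ode(\eps/g)$.
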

\begin{rem}
i) In case some eigenvalues are permanently degenerate in (\ref{specdeclg}), the same result holds, {\it mutatis mutandis}. This is the case for  dephasing Lindbladians, as proven in \cite{AFGG} for $g$ fixed; see also \cite{J2} for results along these lines in an analytic context.\\
ii) The condition $\eps\ll g$ to get a useful approximation stems from the operators $\cR_j(B)$ and their derivatives in the integration by parts formula that contain differences of eigenvalues at the denominators, see (\ref{rbdis}), and hence have norms of order $1/g$.\\
iii) The approximation $\cV(t,s)$ depends on both $g$ and $\eps$ and requires the spectral data of $\cL_t^{[g]}$.
\end{rem}
\proof
This is a direct application of the integration by parts argument presented in Appendix, Lemma \ref{idint}, keeping track of the dependence in $g>0$ of the estimates; one makes use  of  the regularity of the spectral data of $\cL_t^{[g]}$ as $g\ra 0^+$ proven in Proposition \ref{regperpro}, and of the fact that both $\cU(t,s)$ and $\cV(t,s)$ are uniformly bounded in $\eps$ and $g$. 
More precisely, dropping the arguments in the notation, in the expression provided by Lemma \ref{idint} for $\cU-\cV$ with $\cX=\cU$,  $\cY=\cV$,  $\cG=\cL^{[g]}$, and $\cK=\cK^{[g]}$, 
the operators $\cX$  and $\cY$ are uniformly bounded in $\eps$ and $g$. The other operators appearing in (\ref{formulibp}) only depend on $g$ and involve $\cL^{[g]}$, $\cK^{[g]}$ and operators of the form 
$\cR_j(\cB)$ and $\partial_t\cR_j(\cB)$, where, see Remark \ref{rem81},
\be\label{41}
\cR_j(\cB)=\sum_{ k\neq j} \frac{\cP_j\cB\cP_k+\cP_k\cB\cP_j}{g_k-g_j}, 
\ee
with $\{g_j\}=\sigma(\cG)= \sigma(\cL^{[g]})$ with corresponding eigenprojectors $\cP_j$, and $\cB=[\cK^{[g]},\cP_j]$.  Note also
\be\label{42}
\partial_t\cR_j(\cB)=\cR_j(\partial_t\cB)+\sum_{ k\neq j} (\partial_t (g_j-g_k))\frac{\cP_j\cB\cP_k+\cP_k\cB\cP_j}{(g_k-g_j)^2}+ \frac{\cP_j'\cB\cP_k+\cP_k'\cB\cP_j+\cP_j\cB\cP_k'+\cP_k\cB\cP_j'}{g_k-g_j}.
\ee
By Proposition \ref{regperpro}, $\cL^{[g]}$, $\cK^{[g]}$, $\cP_j$ and their derivatives are uniformly bounded as $g\ra 0$, hence the same holds for $\cB$. Also, the denominators in (\ref{41}), (\ref{42}) never vanish for small $g>0$, and there exists $c>0$, such that  $\inf_{t\in [0,1], g>0}|g_j-g_k|\geq cg$ if both $g_j$ and $g_k$ stem from $\ker \cL^0$, while $\inf_{t\in [0,1], g>0}|g_j-g_k|\geq c$ otherwise. In the former case, 
$\sup_{t\in [0,1], g>0}|\partial_t(g_j-g_k)/(g_j-g_k)^2|\leq c/g$. 
 Altogether, this shows that $\max(\|\cR_j(\cB)\|, \|\partial_t \cR_j(\cB)\|)=\ode (1/g)$, which implies in turn $\|\cU-\cV\| =\ode (\eps/g).$
\qed

The next task  to get Theorem \ref{betcont} is to make more explicit the dependence in $(g,\eps)$ of the result above in the regime $g\gg \eps$. This goal can actually be achieved as a Corollary of another approximation of $\cU(t,s)$ derived in the next section under more general spectral assumptions, in the transition regime  $g \ll \sqrt \eps \ll 1$, see Corollaries \ref{slowdrive} and \ref{corevolad}.

\section{Transition Regime $g \ll \sqrt \eps \ll 1$}\label{trareg}

We work here under the assumptions  {\bf Reg} and {\bf Spec} with $\sigma_j(t)=\{e_j(t)\}$ for all $1\leq j\leq d$, all $t\in[0,1]$, so that $H(t)=\sum_{1\leq k\leq d}e_k(t)P_k(t)$, where $\dim P_k(t)\leq \infty$.\medskip

With 
$\lambda_{jk}(t)=-\i (e_j(t)-e_k(t))$, $\cP_{jk}(t)(\cdot)=P_{j}(t)\cdot P_k(t)$, see (\ref{specl0}) and (\ref{specpro}), and $ \cP_0(t)(\cdot)=\sum_{1\leq k\leq d}P_k(t)\cdot P_k(t)$ given by (\ref{projoker}), 
we denote by $\{\lambda_1(t), \lambda_2(t), \dots, \lambda_r(t)\}$ the non-zero distinct values in $\{\lambda_{jk}(t)\}_{1\leq j, k\leq d}$, where $2\leq r \leq d(d-1)$. Accordingly we define $r$ projectors on $\cB(\cH)$ by
\begin{align}
\cP_n(t)&=\hspace{-.2cm}\sum_{j\neq k \ {\rm s.t.} \atop  \lambda_{jk}(t)= \lambda_{n}(t)} \hspace{-.2cm}\cP_{jk}(t), \ \ \ \ 1\leq n\leq r.
\end{align}
Then, regardless of the dimension of the projectors $P_k(t)$, we have the smooth spectral decomposition
\be\label{specdeclog}
\cL_t^0=\sum_{1\leq n\leq r}\lambda_{n}(t) \cP_{n}(t)+ 0 \cP_0(t), \ \ \mbox{and} \ \ \sigma(\cL_t^0)=\{0\}\cup\{\lambda_{n}(t)\}_{1\leq n\leq r}.
\ee
The spectral projectors $\cP_n(t)$ have arbitrary dimension, possibly infinite. In particular, 
\be
\ker \cL_t^0=\Big\{A\in \cB(\cH)\ {\rm s.t.} \ A=\sum\nolimits_{1\leq j\leq d} P_j(t)AP_j(t)\Big\},
\ee
the set of diagonal operators with respect to $\{P_j(t)\}_{1\leq j\leq d}$.\medskip

Since we are interested in the transitions between spectral projectors of the Hamiltonian, that belong to the kernel of the Lindbladian at zero coupling, we focus on the evolution restricted to the projector $\cP_0^{[g]}(t)$ associated with the piece of spectrum of $\cL_t^{[g]}=\cL_t^0+g\cL_t^1$ a distance  of order $g$ away from zero, see (\ref{2pea}), by perturbation theory.

Let $\cW^{[g]}_0(t,s)$ be the Kato operator defined by
\bea\label{wgup}
\left\{\begin{matrix}
\partial_t \cW^{[g]}_0(t,s)=[\cP_0^{[g]}{}'(t),\cP_0^{[g]}(t)]\cW^{[g]}_0(t,s),\\
 \cW^{[g]}_0(s,s)=\un,\hfill
 \end{matrix}\right.
\eea
and $\Psi_\eps^{[g]}(t,s)$, $0\leq s\leq t\leq 1$, be the dynamical phase operator solution to
\bea\label{phigeps}
\left\{\begin{matrix}
\eps\partial_t \Psi^{[g]}_\eps(t,s)={\cW^{[g]}_0}(0,t)\cL_t^{[g]}\cW^{[g]}_0(t,0)\Psi^{[g]}_\eps(t,s),\\
 \Psi^{[g]}_\eps(s,s)=\un,\hfill
 \end{matrix}\right.
\eea
which commutes with $\cP_0^{[g]}(0)$.
Similarly to (\ref{decompvwp}), we set for $0\leq s\leq t\leq 1$,
\be\label{decadwp}
\cV_0^{[g]}(t,s)=\cW^{[g]}_0(t,0)\Psi^{[g]}_\eps(t,s)\cW^{[g]}_0(0,s).
\ee
The generators in these evolution equations being bounded on $\cT(\cH)$, the corresponding propagators have finite operator norms on $\cT(\cH)$ as well. We have the estimates
\begin{lem}\label{adiag0} Assume {\bf Reg} and {\bf Spec} with $\sigma_j(t)=\{e_j(t)\}$ for all $1\leq j\leq d$. 
There exist $C<\infty$, $C_{\Psi}<\infty$, $\eps_0>0$ and $g_0>0$ such that for all $\eps<\eps_0$ and $g<g_0$, 
\begin{align} \label{unifg}
&\|\cU(t,s)-\cV^{[g]}_0(t,s)\|_\tau \leq C\eps,\\
&\sup_{0\leq s\leq t\leq 1}\|\Psi^{[g]}_\eps(t,s)\|_\tau\leq C_{\Psi}, \label{estnormphi}
\end{align}
where 
\be
\cV^{[g]}_0(t,0)\cP_0^{[g]}(0)=\cP_0^{[g]}(t)\cV^{[g]}_0(t,0).
\ee
\end{lem}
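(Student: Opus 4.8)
\emph{Proof plan.} The plan is to first collect the structural properties of $\cV_0^{[g]}$, then read off the intertwining identity and the bound (\ref{estnormphi}), and finally derive the adiabatic estimate (\ref{unifg}) from the abstract integration by parts of Lemma~\ref{idint}; the recurring issue throughout is to keep all constants uniform in $g$. Under {\bf Reg} and {\bf Spec} with $\sigma_j(t)=\{e_j(t)\}$, the spectrum $\sigma(\cL^0_t)=\{0\}\cup\{\lambda_n(t)\}_{1\le n\le r}$ of (\ref{specdeclog}) satisfies $|\lambda_n(t)|\ge G$, so $0$ is an isolated point of $\sigma(\cL^0_t)$ with gap $\ge G$ uniformly in $t\in[0,1]$. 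Analytic perturbation theory (Ch.~II of \cite{K2}) then provides $g_0>0$ such that, for $g<g_0$ and all $t$, the total Riesz projector of the $0$-group,
\be
\cP_0^{[g]}(t)=-\frac{1}{2\i\pi}\oint_{\gamma_0}(\cL^{[g]}_t-z)^{-1}\,dz=\cP_0(t)+\ode(g),
\ee
is well defined, $C^\infty$ in $t$ and analytic in $g$, with $\cP_0^{[g]}{}'(t)$ and $\cP_0^{[g]}{}''(t)$ bounded uniformly in $(t,g)$; hence so are $\cK^{[g]}_0(t):=[\cP_0^{[g]}{}'(t),\cP_0^{[g]}(t)]$ and $\partial_t\cK^{[g]}_0(t)$. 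The Dyson series for (\ref{wgup}) gives $\|\cW^{[g]}_0(t,s)\|_\tau\le C_W$ uniformly in $(\eps,g,s,t)$, likewise for $\cW^{[g]}_0(s,t)$, and $\cW^{[g]}_0$ obeys the intertwining and propagation relations as in (\ref{intertw0})--(\ref{proprel}). Since $\cP_0^{[g]}(t)$ commutes with $\cL^{[g]}_t$, the generator of $\Psi^{[g]}_\eps$ in (\ref{phigeps}) commutes with $\cW^{[g]}_0(0,t)\cP_0^{[g]}(t)\cW^{[g]}_0(t,0)=\cP_0^{[g]}(0)$, so $[\Psi^{[g]}_\eps(t,s),\cP_0^{[g]}(0)]\equiv 0$; and differentiating (\ref{decadwp}) exactly as in Lemma~\ref{decvwp} shows that $\cV_0^{[g]}$ solves $\eps\partial_t\cV_0^{[g]}(t,s)=(\cL^{[g]}_t+\eps\,\cK^{[g]}_0(t))\cV_0^{[g]}(t,s)$, $\cV_0^{[g]}(s,s)=\un$.

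The identity $\cV_0^{[g]}(t,0)\cP_0^{[g]}(0)=\cP_0^{[g]}(t)\cV_0^{[g]}(t,0)$ then follows by writing $\cV_0^{[g]}(t,0)=\cW^{[g]}_0(t,0)\Psi^{[g]}_\eps(t,0)$ and commuting $\cP_0^{[g]}(0)$ first past $\Psi^{[g]}_\eps(t,0)$ and then past $\cW^{[g]}_0(t,0)$ via the Kato intertwining. For (\ref{estnormphi}) I bound $\cV_0^{[g]}$ directly: recalling from (\ref{ulind}) that $\cU$ solves $\eps\partial_t\cU=\cL^{[g]}_t\cU$ and is CPTP, so $\|\cU(t,s)\|_\tau= 1$, Duhamel's formula against $\cU$, the factor $\eps$ cancelling the $\eps\cK^{[g]}_0$, gives
\be
\cV_0^{[g]}(t,s)=\cU(t,s)+\int_s^t\cU(t,r)\,\cK^{[g]}_0(r)\,\cV_0^{[g]}(r,s)\,dr ;
\ee
with $\kappa:=\sup_{r,g}\|\cK^{[g]}_0(r)\|_\tau<\infty$, Gr\"onwall's inequality yields $\|\cV_0^{[g]}(t,s)\|_\tau\le e^{\kappa}$, uniformly in $(\eps,g,s,t)$. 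Hence $\Psi^{[g]}_\eps(t,s)=\cW^{[g]}_0(0,t)\cV_0^{[g]}(t,s)\cW^{[g]}_0(s,0)$ satisfies $\|\Psi^{[g]}_\eps(t,s)\|_\tau\le C_W^2 e^{\kappa}=:C_\Psi$, which is (\ref{estnormphi}).

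It remains to prove (\ref{unifg}), for which I invoke Lemma~\ref{idint} with $\cX=\cU$, $\cY=\cV_0^{[g]}$, $\cG=\cL^{[g]}$ and $\cK=\cK^{[g]}_0$, the Kato generator of the two-block decomposition $\{\cP_0^{[g]}(t),\un-\cP_0^{[g]}(t)\}$; note $[\cK^{[g]}_0(t),\cP_0^{[g]}(t)]=\cP_0^{[g]}{}'(t)$ and that $\cK^{[g]}_0(t)$ is off-diagonal for this decomposition. The resulting formula for $\cU-\cV_0^{[g]}$ features the uniformly bounded $\cU$ and $\cV_0^{[g]}$ together with $\cL^{[g]}_t$, $\cK^{[g]}_0(t)$, $\partial_t\cK^{[g]}_0(t)$ and the reduced-resolvent operators $\cR(\cB)$, $\partial_t\cR(\cB)$ with $\cB=[\cK^{[g]}_0(t),\cP_0^{[g]}(t)]$, associated with the separation of $\sigma(\cL^{[g]}_t)$ into the $0$-group and its complement. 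The decisive point — and the place that requires care — is that we split off the $0$-group only as a single block, so the denominators entering $\cR(\cB)$ are differences between an eigenvalue of size $\ode(g)$ and one of size $\ge G-\ode(g)$, hence $\ge G/4$ for $g<g_0$; consequently $\|\cR(\cB)\|$ and $\|\partial_t\cR(\cB)\|$ are $\ode(1)$ \emph{uniformly in $g$}, with no $1/g$ blow-up. Each explicit term then carries a factor $\eps$, so for $\eps<\eps_0$ the remainder is bounded by $C\eps$ with $C$ independent of $g$. Resolving instead the $\ode(g)$-small eigenvalues inside the $0$-group, as is unavoidable in the slow-drive analysis of Section~\ref{slodri}, would reintroduce the $1/g$ factor responsible for the weaker $\ode(\eps/g)$ estimate of Proposition~\ref{je<g}; staying at the level of the coarse splitting is exactly what keeps $C$, and with it $C_\Psi$, uniform in $g$.
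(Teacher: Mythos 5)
Your proof is correct and follows essentially the paper's route: (\ref{unifg}) is obtained from the integration-by-parts machinery of Lemma \ref{idint}/Corollary \ref{bddsmall} applied with $\cX=\cU$, $\cY=\cV^{[g]}_0$, $\cG=\cL^{[g]}$, $\cK=[\cP_0^{[g]}{}',\cP_0^{[g]}]$, and the uniformity in $g$ rests exactly on the observation you emphasize, namely that the $0$-group is separated from the rest of $\sigma(\cL^{[g]}_t)$ by a gap of order one, so that $\cP_0^{[g]}=\cP_0+\ode(g)$ (also for derivatives) and no $1/g$ enters the reduced resolvents. Your only deviations — bounding $\cV^{[g]}_0$ directly via Duhamel/Gr\"onwall against the contraction $\cU$, and $\cW^{[g]}_0$ via its Dyson series, instead of the paper's use of Corollary \ref{bddsmall} and the comparison $\cW^{[g]}_0=\cW_0+\ode(g)$ — are cosmetic, and your explicit verification of the intertwining relation is a harmless addition.
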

\begin{proof}
The integration by parts argument in Appendix,  more precisely 
Corollary \ref{bddsmall} with $\cX=\cU$, $\cG=\cL^{[g]}$ and $\cY=\cV_0$, $\cK= [\cP_0^{[g]}{}',\cP_0^{[g]}]$, yield (\ref{unifg}) since
\be
\eps\partial_t \cV^{[g]}_0(t,s)=(\cL_t^{[g]}+\eps [\cP_0^{[g]}{}'(t),\cP_0^{[g]}(t)])\cV^{[g]}_0(t,s),
\ee
 and $\|\cU(t,0)\|_\tau=1$.
The uniformity  in $g>0$ of the estimate   is ensured by perturbation theory: the fact that $\cP_0^{[g]}(t)$ is associated with a piece of the spectrum of $\cL_t^{[g]}=\cL_t^0+g\cL_t^1$, of size of order $g$ and  separated by a gap of order $1$ from the rest of the spectrum, implies that for all $t\in [0,1]$
\be\label{perpo}
\cP_0^{[g]}(t)=\cP_0(t)+\ode(g).
\ee
This estimate remains true for derivatives with respect to $t$, so that
\be
[\cP_0^{[g]}{}'(t),\cP_0^{[g]}(t)]=[\cP_0'(t),\cP_0(t)]+\ode(g),
\ee
which yields uniformity in $g$ of the estimate so that $\sup_{0\leq s\leq t\leq 1 \atop 0 <\eps<\eps_0, 0< g<g_0}\|\cV^{[g]}_0(t,s)\|_\tau<\infty$. Moreover, uniformly in $s,t\in [0,1]$,
\be\label{pertwg}
\cW^{[g]}_0(t,s)=\cW_0(t,s)+\ode(g),
\ee
where $\cW_0(t,s)$ is defined by (\ref{w0}), as a consequence of Duhamel formula
\be
\cW^{[g]}_0(t,s)=\cW_0(t,s)+\int_{s}^t \cW_0(t,u)([\cP_0^{[g]}{}'(u),\cP_0^{[g]}(u)]-[\cP_0'(u),\cP_0(u)])\cW^{[g]}_0(u,s)du.
\ee
The estimate (\ref{pertwg}) and (\ref{decadwp}), together with $\cW_0(t,s)^{-1}=\cW_0(s,t)$, see (\ref{proprel}), imply (\ref{estnormphi}). \qed
\end{proof}

\subsection{Reduced dynamics}
We now consider the reduced dynamics within $\cP_0(0)\cB(\cH)$, depending on a time scale $1/\delta$ and driven by the splitting operator, that will approximate $\Psi^{[g]}_\eps(t,0)\cP_0^{[g]}(0)$ in certain regimes.\\

Let $\tilde \cL_t^1=\cP_0(t) \cL_t^1 \cP_0(t)$ and recall, see (\ref{auxad}), that $\tilde \Psi_{\delta}(t,s)$, $0\leq s\leq t\leq 1$ is defined for $\delta>0$ by
\bea\label{auxad2}
\left\{\begin{matrix}
\delta \partial_t \tilde \Psi_{\delta}(t,s)={\cW_0}(0,t)\tilde \cL_t^1\cW_0(t,0)\tilde \Psi_{\delta}(t,s),\\
\tilde \Psi_{\delta}(s,s)=\un.\hfill 
\end{matrix}\right.
\eea
Note that the Dyson series for  $\tilde \Psi_{\delta}(t,0)$ has the same integral terms as those provided in Lemma \ref{gendiag}. Also by definition, { recall $\cQ_0(t)=\un -\cP_0(t)$,}
\be\label{cpartpsi}
\tilde \Psi_{\delta}(t,s)=\cP_0(0)\tilde \Psi_{\delta}(t,s)\cP_0(0)+\cQ_0(0),
\ee
where $\|\cQ_0(0)\|_\tau\leq 2$, since $\cP_0(0)$ is CPTP.

\medskip

The next Proposition is the main technical step regarding the approximation of the evolution $\cU(t,s)$ in the transition regime considered, which holds regardless of the dimension of the projectors $P_j(t)$.
\begin{prop}\label{tech} Assume {\bf Reg} and {\bf Spec} with $\sigma_j(t)=\{e_j(t)\}$, for all $1\leq j\leq d$.
There exist $\tilde C$, $\tilde C_\Psi<\infty$ and $g_0$, $\eps_0$, $\alpha_0>0$ such that for all $g\leq g_0$, $\eps\leq \eps_0$,  $g^2/\eps \leq \alpha_0$, and $0\leq s\leq  t\leq 1$,
\be
\|\Psi^{[g]}_\eps(t,s)\cP_0^{[g]}(0)-\tilde \Psi_{\eps/g}(t,s)\cP_0^{[g]}(0)\|_\tau\leq \tilde C (t-s)g^2/\eps,
\ee
and $\|\tilde \Psi_{\eps/g}(t,s)\|_\tau\leq \tilde C_\Psi$.
\end{prop}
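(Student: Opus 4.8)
\medskip\noindent
The plan is to compare the two propagators after rescaling time by $\delta=\eps/g$ and working throughout on $\ran\cP_0^{[g]}(0)$. Since $\cL_t^{[g]}$ commutes with its spectral projector $\cP_0^{[g]}(t)$ and $\cW_0^{[g]}(t,0)$ intertwines $\cP_0^{[g]}(0)$ with $\cP_0^{[g]}(t)$ (see (\ref{wgup})), the generator $\eps^{-1}\cW_0^{[g]}(0,t)\cL_t^{[g]}\cW_0^{[g]}(t,0)$ of $\Psi_\eps^{[g]}$, applied to $\Psi_\eps^{[g]}(t,s)\cP_0^{[g]}(0)$, acts as $\delta^{-1}B^{[g]}(t)$, where $B^{[g]}(t):=g^{-1}\cW_0^{[g]}(0,t)\,\cP_0^{[g]}(t)\cL_t^{[g]}\cP_0^{[g]}(t)\,\cW_0^{[g]}(t,0)$. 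By standard perturbation theory $\cP_0^{[g]}(t)\cL_t^{[g]}\cP_0^{[g]}(t)=g\,\tilde\cL_t^1+\ode(g^2)$ uniformly in $t$ (cf. (\ref{deftl1})), so $B^{[g]}$ is bounded uniformly in $g$, and together with (\ref{pertwg}), $\cW_0^{[g]}(t,s)=\cW_0(t,s)+\ode(g)$, this gives $B^{[g]}(t)=A(t)+\ode(g)$ with $A(t):=\cW_0(0,t)\tilde\cL_t^1\cW_0(t,0)$, which is exactly $\delta$ times the generator of $\tilde\Psi_\delta$ in (\ref{auxad2}). Setting $X(t)=\Psi_\eps^{[g]}(t,s)\cP_0^{[g]}(0)$ and $Y(t)=\tilde\Psi_{\eps/g}(t,s)\cP_0^{[g]}(0)$, I thus have $\delta\partial_tX=B^{[g]}X$, $\delta\partial_tY=AY$ and $X(s)=Y(s)=\cP_0^{[g]}(0)$.

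From $\delta\partial_t(X-Y)=B^{[g]}(X-Y)+(B^{[g]}-A)Y$ and Duhamel's formula against the propagator $\Phi^{[g]}(t,u)$ generated by $\delta^{-1}B^{[g]}$, one obtains $X(t)-Y(t)=\delta^{-1}\int_s^t\Phi^{[g]}(t,u)\big(B^{[g]}(u)-A(u)\big)Y(u)\,du$. The claimed bound, of order $(t-s)g/\delta=(t-s)g^2/\eps$, then follows from three uniform estimates: $\|B^{[g]}(u)-A(u)\|_\tau=\ode(g)$ (just obtained); $\|\Phi^{[g]}(t,u)\|_\tau\le C$; and $\|Y(u)\|_\tau\le C$. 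For $\Phi^{[g]}$, note that $B^{[g]}(t)=\cP_0^{[g]}(0)B^{[g]}(t)\cP_0^{[g]}(0)$ by the intertwining property, so $\delta^{-1}B^{[g]}$ vanishes on $\ran(\un-\cP_0^{[g]}(0))$; since $\Psi_\eps^{[g]}$ commutes with $\cP_0^{[g]}(0)$, this forces $\Phi^{[g]}(t,u)=\Psi_\eps^{[g]}(t,u)\cP_0^{[g]}(0)+(\un-\cP_0^{[g]}(0))$, the first term being $\le C_\Psi$ by (\ref{estnormphi}) of Lemma \ref{adiag0} and the second $\le2+\ode(g)$ since $\cP_0^{[g]}(0)=\cP_0(0)+\ode(g)$ with $\cP_0(0)$ CPTP.

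The remaining and most delicate ingredient is a bound on $\|\tilde\Psi_\delta(t,s)\|_\tau$ uniform in $\delta>0$, which also yields $\|Y(u)\|_\tau\le C$ via (\ref{cpartpsi}) and $\cP_0^{[g]}(0)=\cP_0(0)+\ode(g)$. For $A\in\ran\cP_0(0)$ I would set $B(t):=\cW_0(t,0)\tilde\Psi_\delta(t,s)A$; using (\ref{w0}) and (\ref{proprel}) one finds $\partial_tB=[\cP_0'(t),\cP_0(t)]B+\delta^{-1}\tilde\cL_t^1B$, and $B(t)\in\ran\cP_0(t)$ by (\ref{cpartpsi}) and (\ref{intertw0}). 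The key claim is that $e^{\tau\tilde\cL_t^1}$ is a trace-norm contraction on $\ran\cP_0(t)$ for all $\tau\ge0$: on the invariant subspace $\ran\cP_0^{[g]}(t)$ the operator $\cP_0^{[g]}(t)\cL_t^{[g]}\cP_0^{[g]}(t)$ agrees with $\cL_t^{[g]}$, so $e^{(\tau/g)\cP_0^{[g]}(t)\cL_t^{[g]}\cP_0^{[g]}(t)}$ is a contraction there, and letting $g\to0$ with $g^{-1}\cP_0^{[g]}(t)\cL_t^{[g]}\cP_0^{[g]}(t)\to\tilde\cL_t^1$ and $\cP_0^{[g]}(t)\to\cP_0(t)$ in operator norm transfers this to $e^{\tau\tilde\cL_t^1}$ on $\ran\cP_0(t)$. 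Since $\tilde\cL_t^1$ is then dissipative on $\ran\cP_0(t)$, a standard estimate gives $\|B(t)\|_1\le e^{\int_s^t\|[\cP_0'(u),\cP_0(u)]\|_\tau du}\|B(s)\|_1$, and as $\cW_0$ is a trace isometry between $\ran\cP_0(0)$ and $\ran\cP_0(t)$ (Lemma \ref{wpcptp}), this yields $\|\tilde\Psi_\delta(t,s)A\|_1=\|B(t)\|_1\le\tilde C_\Psi\|A\|_1$, hence $\|\tilde\Psi_\delta(t,s)\cP_0(0)\|_\tau\le\tilde C_\Psi$ and, via (\ref{cpartpsi}), the second assertion of the proposition. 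The conditions $g\le g_0$, $\eps\le\eps_0$ only serve the perturbative estimates on $\cP_0^{[g]}$, $\cW_0^{[g]}$ and Lemma \ref{adiag0}, while $g^2/\eps\le\alpha_0$ simply keeps the resulting error below a fixed size. The main obstacle, beyond the routine adiabatic bookkeeping, is precisely this $\delta$-uniform control of the reduced dynamics, which rests on the a priori non-obvious trace-norm contractivity of the pinched generator $\tilde\cL_t^1$ on the diagonal subspace $\ran\cP_0(t)$.
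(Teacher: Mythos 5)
Your argument is correct in substance, but it takes a genuinely different route from the paper at the crucial point, namely the bound on the reduced dynamics uniform in $\delta=\eps/g$. The paper does not prove this bound directly: it writes the same kind of Duhamel identity you do (with the roles of the two propagators interchanged, $\tilde\Psi_{\eps/g}$ on the left and $\Psi^{[g]}_\eps\cP_0^{[g]}(0)$ on the right), and then \emph{bootstraps} the bound on $\sup\|\tilde\Psi_{\eps/g}(t,s)\|_\tau$ from the uniform bound (\ref{estnormphi}) on $\Psi^{[g]}_\eps\cP_0^{[g]}(0)$, using the decomposition $\tilde\Psi_{\eps/g}=\cQ_0(0)+\Psi^{[g]}_\eps\cP_0^{[g]}(0)+(\tilde\Psi_{\eps/g}\cP_0^{[g]}(0)-\Psi^{[g]}_\eps\cP_0^{[g]}(0))+\tilde\Psi_{\eps/g}(\cP_0(0)-\cP_0^{[g]}(0))$ and absorbing the last two terms for $g$ and $g^2/\eps$ small; this yields $\tilde C_\Psi$ only in the stated regime, but with no structural input on $\tilde\cL^1_t$ beyond boundedness. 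You instead establish directly that $\tilde\Psi_\delta(t,s)\cP_0(0)$ is bounded uniformly in \emph{all} $\delta>0$, by transferring the trace-norm contractivity of $\e^{s\cL_t^{[g]}}$ on the invariant spectral subspace $\ran\cP_0^{[g]}(t)$ to $\e^{\tau\tilde\cL^1_t}$ on $\ran\cP_0(t)$ in the limit $g\to0$, and then controlling the transported quantity $B(t)=\cW_0(t,0)\tilde\Psi_\delta(t,s)A\in\ran\cP_0(t)$ by a Gronwall-type estimate with rate $\|[\cP_0'(t),\cP_0(t)]\|_\tau$. This is a nice, more structural argument (close in spirit to \cite{AFGG}), and it buys a $\delta$-uniform statement that the paper only gets within the hypotheses of the proposition; your identification $\Phi^{[g]}(t,u)=\Psi^{[g]}_\eps(t,u)\cP_0^{[g]}(0)+(\un-\cP_0^{[g]}(0))$ and the estimate $\|B^{[g]}-A\|_\tau=\ode(g)$ are correct and match the paper's $g^2\Lambda(t,g)$ bound.

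One step in your version is stated too quickly: the inequality $\|B(t)\|_1\le \e^{\int_s^t\|[\cP_0'(u),\cP_0(u)]\|_\tau du}\|B(s)\|_1$ is not an off-the-shelf consequence of ``dissipativity'', because the contraction property of $\e^{\tau\tilde\cL^1_t}$ holds only on the \emph{moving} subspace $\ran\cP_0(t)$ and the generator carries the factor $\delta^{-1}$, so any Taylor remainder in $h$ is not $o(h)$ uniformly in $\delta$. It can be completed, e.g., by a Dini-derivative argument: for fixed $\delta$, compare $B(t+h)$ with $\e^{(h/\delta)\tilde\cL^1_t}B(t)+h[\cP_0'(t),\cP_0(t)]B(t)$, use the contraction on $\ran\cP_0(t)\ni B(t)$ (this is where your observation that $B(t)$ stays in the moving range is essential), and note that the $\delta$-dependent remainders vanish as $h\to0^+$ for each fixed $\delta$, leaving the $\delta$-independent differential inequality $D^+\|B(t)\|_1\le\|[\cP_0'(t),\cP_0(t)]\|_\tau\,\|B(t)\|_1$, and then apply Gronwall. (Alternatively, the contraction of $\e^{\tau\tilde\cL^1_t}$ on $\ran\cP_0(t)$ also follows more elementarily from a product formula $(\cP_0(t)\e^{(\tau/n)\cL^1_t}\cP_0(t))^n$, each factor being a composition of CPTP maps, without invoking the $g\to0$ limit.) With that step spelled out, your proof is complete.
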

\begin{rem}
The ratio $\eps/g$ which determines the time scale in the reduced dynamics $\tilde \Psi_{\eps/g}(t,s)$ is not required to be small here.
\end{rem}
\proof
Recall (\ref{deftl1}) which states that, uniformly in $0\leq t\leq 1$,
\begin{align} \label{recall}\cL_t^{[g]}\cP_0^{[g]}(t) = \cP_0^{[g]}(t)\cL_t^{[g]}\cP_0^{[g]}(t) 
= g\cP_0(t) \cL_t^1 \cP_0(t) +  \ode(g^2),
\end{align}
and the intertwining relation $\cW^{[g]}_0(t,s)\cP_0^{[g]}(s)=\cP_0^{[g]}(t)\cW^{[g]}_0(t,s)$, consequence of the definition (\ref{wgup}). Composing (\ref{recall}) by 
$\cW^{[g]}_0(0,t)$ and $\cW^{[g]}_0(t,0)$ and using 
 (\ref{perpo}) and (\ref{pertwg}), we get that
 the generator of $\Psi^{[g]}_\eps(t,s)\cP_0^{[g]}(0)$, see (\ref{phigeps}),  satisfies
\be
\cW^{[g]}_0(0,t)\cL_t^{[g]}\cW^{[g]}_0(t,0)\cP_0^{[g]}(0)-g\cW_0(0,t)\cP_0(t)\cL_t^1\cP_0(t)\cW_0(t,0)\cP_0^{[g]}(0)=g^2\Lambda(t,g), 
\ee
where $\|\Lambda(t,g)\|_\tau\leq C_\Lambda$, uniformly in $0\leq t\leq 1$ and $g>0$ small enough.
Therefore, making use of $[\Psi^{[g]}_\eps(t,s),\cP_0^{[g]}(0)]\equiv 0$, Duhamel formula yields
\begin{align}
\Psi^{[g]}_\eps(t,s)\cP_0^{[g]}(0)&=\tilde \Psi_{\eps/g}(t,s)\cP_0^{[g]}(0)+\frac{g^2}{\eps}\int_s^t\tilde \Psi_{\eps/g}(t,r)\Lambda(r,g)\Psi^{[g]}_\eps(r,s)\cP_0^{[g]}(0)ds.
\end{align}
Hence
\be\label{phiphi}
\| \Psi^{[g]}_\eps(t,s)\cP_0^{[g]}(0)-\tilde \Psi_{\eps/g}(t,s)\cP_0^{[g]}(0)\|_\tau\leq C_\Lambda (t-s) \frac{g^2}{\eps} \sup_{0\leq r\leq t\leq 1}\|\tilde \Psi_{\eps/g}(t,r)\|_\tau \sup_{0\leq r\leq s\leq 1}\|\Psi^{[g]}_\eps(r,s)\cP_0^{[g]}(0)\|_\tau.
\ee
Now, $\sup_{0\leq s\leq t\leq 1}\| \Psi^{[g]}_\eps(t,s)\cP_0^{[g]}(0)\|_\tau:=C_{\Psi P}$ is uniformly bounded for $\eps>0$ and $g>0$ small enough, see (\ref{estnormphi}). Moreover, thanks to (\ref{cpartpsi})  and (\ref{perpo}), there exists $c<\infty$ such that
\begin{align}
&\| \tilde \Psi_{\eps/g}(t,s)\|_\tau\leq \|\tilde \Psi_{\eps/g}(t,s)\cP_0(0)\|_\tau +2,\\ 
&\| \tilde \Psi_{\eps/g}(t,s)\cP_0^{[g]}(0)-\tilde \Psi_{\eps/g}(t,s)\cP_0(0)\|_\tau\leq c g\|\tilde \Psi_{\eps/g}(t,s)\|_\tau.
\end{align}
Consequently, { making use of the identity (recall (\ref{cpartpsi}))
\begin{align}
 \tilde \Psi_{\eps/g}(t,s)&=\cQ_0(0)+\Psi_{\eps}^{[g]}(t,s) \cP_0^{[g]}(0) \nonumber \\
 &+(\tilde \Psi_{\eps/g}(t,s)\cP_0^{[g]}(0)- \Psi^{[g]}_\eps(t,s)\cP_0^{[g]}(0))+ \tilde \Psi_{\eps/g}(t,s)( \cP_0(0)-\cP_0^{[g]}(0)).
\end{align}
we get with the above and} (\ref{phiphi}), 
\be
\sup_{0\leq s\leq t\leq 1}\|\tilde \Psi_{\eps/g}(t,s)\|_\tau\leq 2+ C_{\Psi P}\Big( 1+  C_\Lambda \frac{g^2}{\eps} \sup_{0\leq r\leq t\leq 1}\|\tilde \Psi_{\eps/g}(t,r)\|_\tau \Big)+cg \sup_{0\leq s\leq t\leq 1}\|\tilde \Psi_{\eps/g}(t,s)\|_\tau.
\ee
Therefore, there exists $0<\tilde C_\Psi<\infty$ such that if $\eps>0$, $g>0$ and $g^2/\eps$ are small enough, 
\be
\sup_{0\leq s\leq t\leq 1}\|\tilde \Psi_{\eps/g}(t,s)\|_\tau\leq\frac{2+C_{\Psi P}}{1-C_{\Psi P}C_\Lambda \frac{g^2}{\eps}-cg }\leq \tilde C_\Psi,
\ee
irrespectively of the value of $\eps/g$. By inserting this estimate into  (\ref{phiphi}), we get the first statement with  $\tilde C=C_\Lambda \tilde C_\Psi C_{\Psi P}$.
\qed\\

We are now in a position to approximate the evolution $\cU(t,s)$ and the transition probabilities between the spectral projectors $P_j(t)$ within $\ker \cL_t^0$, which is the content of Theorem \ref{thmtra}:
\begin{cor} \label{cor1} Assume  {\bf Reg} and {\bf Spec} with $\sigma_j(t)=\{e_j(t)\}$, for all $1\leq j\leq d$.  There exists $C_0<\infty$, $\eps_0>0$, $g_0>0$, and $\alpha_0>0$ such that for all $0\leq t\leq 1$, $\eps<\eps_0$, $g\leq g_0$, $g^2/\eps<\alpha_0$,
\be\label{appUtrag}
\|\cU(t,0)\cP_0(0)-\cW_0(t,0)\tilde \Psi_{\eps/g}(t,0)\cP_0(0)\|_\tau\leq C_0(\eps+g+g^2/\eps).
\ee
Consequently, for any state $\rho_j=P_j(0)\rho_j P_j(0)\in\cT(\cH)$,  for any $1\leq j,k\leq d$, 
\begin{align}
\tr \{P_k(t)\cU(t,0)(\rho_j)\}
&=\tr \{P_k(0) \tilde \Psi_{\eps/g}(t,0)(\rho_j)\}+\ode(\eps+g+g^2/\eps).
\end{align}
Also, for any $\delta>0$ and any $0\leq t\leq 1$, the map $\tilde \Psi_\delta(t,0)\cP_0(0)$ is CPTP on $\cT(\cH)$.
\end{cor}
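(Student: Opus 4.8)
The plan is to chain together Lemma~\ref{adiag0}, Proposition~\ref{tech}, and the elementary perturbative estimates $\cP_0^{[g]}(0)=\cP_0(0)+\ode(g)$ and $\cW^{[g]}_0(t,0)=\cW_0(t,0)+\ode(g)$ coming from (\ref{perpo}) and (\ref{pertwg}), and then to read off the CPTP property by a limiting argument along the diagonal $g=\eps/\delta$.

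For the estimate (\ref{appUtrag}), I would start from Lemma~\ref{adiag0}, which gives $\cU(t,0)\cP_0(0)=\cV^{[g]}_0(t,0)\cP_0(0)+\ode(\eps)$ since $\|\cP_0(0)\|_\tau=1$. Using the decomposition (\ref{decadwp}) together with $\cW^{[g]}_0(0,0)=\un$ one has $\cV^{[g]}_0(t,0)\cP_0(0)=\cW^{[g]}_0(t,0)\Psi^{[g]}_\eps(t,0)\cP_0(0)$; I would then replace $\cP_0(0)$ by $\cP_0^{[g]}(0)$, absorbing the $\ode(g)$ error via the uniform bound (\ref{estnormphi}) on $\Psi^{[g]}_\eps$ and the uniform boundedness of $\cW^{[g]}_0$. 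Proposition~\ref{tech} replaces $\Psi^{[g]}_\eps(t,0)\cP_0^{[g]}(0)$ by $\tilde\Psi_{\eps/g}(t,0)\cP_0^{[g]}(0)$ at the cost of $\ode(g^2/\eps)$, and the uniform bound $\|\tilde\Psi_{\eps/g}(t,0)\|_\tau\le\tilde C_\Psi$ supplied there lets me finally swap $\cW^{[g]}_0(t,0)$ for $\cW_0(t,0)$ via (\ref{pertwg}) and $\cP_0^{[g]}(0)$ for $\cP_0(0)$ via (\ref{perpo}), each costing $\ode(g)$. Collecting the errors yields (\ref{appUtrag}). The transition‑probability formula then follows by applying (\ref{appUtrag}) to the state $\rho_j=P_j(0)\rho_jP_j(0)=\cP_0(0)(\rho_j)$, pairing with the bounded operator $P_k(t)$, and observing that $\sigma:=\tilde\Psi_{\eps/g}(t,0)(\rho_j)\in\ran\cP_0(0)$ by (\ref{cpartpsi}), so that Lemma~\ref{lemw} gives $\cW_0(t,0)(\sigma)=W(t,0)\,\sigma\,W(0,t)$ and, by cyclicity of the trace and the intertwining relation (\ref{interk}) (which yields $W(0,t)P_k(t)W(t,0)=P_k(0)$), $\tr\{P_k(t)\cW_0(t,0)(\sigma)\}=\tr\{P_k(0)\sigma\}$.

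For the CPTP assertion, I would fix $\delta>0$, set $g=\eps/\delta$, and let $\eps\to0^+$: then $g\to0$ and $g^2/\eps=\eps/\delta^2\to0$, so for $\eps$ small the hypotheses of (\ref{appUtrag}) hold and the error $\ode(\eps+g+g^2/\eps)\to0$. Hence the Lindblad propagator $\cU(t,0)$ (for the pair $(\eps,\eps/\delta)$) composed with $\cP_0(0)$ converges in $\|\cdot\|_\tau$ to $\cW_0(t,0)\tilde\Psi_{\delta}(t,0)\cP_0(0)$. Each approximant is CPTP, being a Lindblad propagator composed with the CPTP map $\cP_0(0)$, and the set of CPTP maps is closed for $\|\cdot\|_\tau$ (positivity of all ampliations and trace preservation pass to the limit), so $\cW_0(t,0)\tilde\Psi_{\delta}(t,0)\cP_0(0)$ is CPTP. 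To descend to $\tilde\Psi_\delta(t,0)\cP_0(0)$ itself, I would use (\ref{cpartpsi}), the intertwining relation (\ref{intertw0}) and the propagation relation (\ref{proprel}) to check that $\tilde\Psi_\delta(t,0)\cP_0(0)=\cW_0(0,t)\cP_0(t)\,\cW_0(t,0)\tilde\Psi_\delta(t,0)\cP_0(0)$; since $\cW_0(0,t)\cP_0(t)$ is CPTP by Lemma~\ref{wpcptp}, this writes $\tilde\Psi_\delta(t,0)\cP_0(0)$ as a composition of CPTP maps, hence CPTP.

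I expect the only genuinely delicate point to be the bookkeeping in the first part: tracking which uniform bounds (in $\eps$, and in small $g$) are available to promote each $\ode(g)$ or $\ode(g^2/\eps)$ replacement — all of which are supplied by Lemma~\ref{adiag0}, Proposition~\ref{tech} and the perturbative estimates (\ref{perpo})--(\ref{pertwg}). Once (\ref{appUtrag}) is established, the CPTP statement is essentially free.
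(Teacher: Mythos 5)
Your proposal is correct and follows essentially the same route as the paper: chaining Lemma \ref{adiag0}, Proposition \ref{tech} and the perturbative estimates (\ref{perpo})--(\ref{pertwg}) for (\ref{appUtrag}), using Lemma \ref{lemw} with the intertwining relation for the trace formula, and obtaining the CPTP property by taking the limit along $\eps=\delta g$ and then composing with the CPTP map $\cW_0(0,t)\cP_0(t)$ of Lemma \ref{wpcptp}. Your version merely spells out the bookkeeping that the paper's proof leaves implicit.
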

\proof
The first statement follows immediately from Lemma \ref{adiag0}, Proposition \ref{tech} and estimate (\ref{perpo}). \\
To access the transition probabilities $\tr (P_k(t)\cU(t,0)(\rho_j))$, we get the action of $\cW_0(t,0)$ on the projector $\cP_0(0)$ by means of Lemma \ref{lemw}.
Thus, using $\tilde \Psi_{\eps/g}(t,0)\cP_0(0)=\cP_0(0)\tilde \Psi_{\eps/g}(t,0)\cP_0(0)$, we get  the transition probabilities in terms of the reduced dynamics in the regime $g^2\ll \eps\ll 1$
\begin{align}
\tr \{P_k(t)\cU(t,0)(P_j(0))\}&=\tr \{P_k(t) \cW_0(t,0)\circ \tilde \Psi_{\eps/g}(t,0)(P_j(0))\}+\ode(\eps+g+g^2/\eps)\nonumber\\
&=\tr \{P_k(0) \tilde \Psi_{\eps/g}(t,0)(P_j(0))\}+\ode(\eps+g+g^2/\eps).
\end{align}
Finally,  given $\delta>0$,  (\ref{appUtrag}) for $\eps=\delta g$ yields $\cW_0(t,0)\tilde \Psi_{\delta}(t,0)\cP_0(0)=\lim_{g\ra 0} \cU(t,0)\cP_0(0)$ in $\| \cdot \|_\tau$-norm , where 
$\cU(t,0)\cP_0(0)$ is CPTP on $\cT(\cH)$, so the same is true for $\cW_0(t,0)\tilde \Psi_{\delta}(t,0)\cP_0(0)$. Since $\cW_0(0,t)\cP_0(t)$ is CPTP as well, see Lemma \ref{wpcptp}, and $\tilde \Psi_{\delta}(t,0)\cP_0(0)=\cP_0(0)\tilde \Psi_{\delta}(t,0)\cP_0(0)$, 
\be
\tilde \Psi_{\delta}(t,0)\cP_0(0)=\cW_0(0,t)\cW_0(t,0)\cP_0(0)\tilde \Psi_{\delta}(t,0)\cP_0(0)=\cW_0(0,t)\cP_0(t)\cW_0(t,0)\tilde \Psi_{\delta}(t,0)\cP_0(0)
\ee
is CPTP, as a composition of such maps.
\qed
\subsection{Associated Markov Process}\label{secmarkov}
Let us proceed with a remark about the generic finite dimensional case. If {\bf Reg} and {\bf Gen} hold (without condition on the Bohr frequencies, actually),  the generator $\cW_0(0,t)\tilde\cL_t^1\cW_0(t,0)$ of the reduced dynamics $\tilde \Psi_{\delta}(t,0)$ has a matrix expression in the fixed basis $\{P_1(0), P_2(0), \dots, P_d(0)\}$ given by the time dependent matrix $\tilde L(t)$  (\ref{mattildel}). Indeed, (\ref{L1M1}) and Remark \ref{remwp} i), yield
\be
\cW_0(0,t)\tilde\cL_t^1\cW_0(t,0)(P_j(0))=\cW_0(0,t)\cP_0(t)\cL_t^1(P_j(t))=\sum_{1\leq k\leq d}P_k(0)\tilde L_{kj}(t).
\ee
In other words, $\cW_0(0,t)\tilde\cL_t^1\cW_0(t,0)\simeq \tilde L(t)$, where $\sum_{1\leq k\leq d}\tilde L_{kj}(t)=0$ for any $1\leq j\leq d$, recall (\ref{trastoch}). Hence, the matrix representation of the reduced dynamics in the same basis,  $\tilde \Psi_{\delta}(t,0)|_{\spa \{P_1(0),\dots, P_d(0)\}}$, is such that its transpose  is a stochastic matrix, see {\it e.g.} \cite{YZ}. 
Therefore, we can associate  to the reduced dynamics a $d$-state classical continuous-time Markov process:
\begin{lem}\label{markov} Under {\bf Reg} and {\bf Gen}, the reduced dynamics $\tilde \Psi_{\delta}(t,0)\cP_0(0)$ is associated to a continuous-time Markov process $(X_t)_{t\geq 0}$ in the state space $\{P_1(0), \dots, P_d(0)\} := \{1, \dots, d\}$ by the relation for any $t\geq 0$
\be
\P(X_t=j | X_0=i)=\tr \big(P_j(0)\tilde \Psi_{\delta}(t,0)(P_i(0))\big).
\ee
\end{lem}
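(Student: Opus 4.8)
The plan is to reduce the statement to a matrix ODE on the diagonal subspace and then invoke a standard fact about finite continuous-time Markov chains; most of the content is already in Section~\ref{secmarkov}. First I would work on the invariant subspace $\ran \cP_0(0)=\spa\{P_1(0),\dots,P_d(0)\}$, which under \textbf{Gen} is exactly the $d$-dimensional space of operators diagonal in the eigenbasis of $H(0)$. The generator $\cW_0(0,t)\tilde\cL_t^1\cW_0(t,0)=\cW_0(0,t)\cP_0(t)\cL_t^1\cP_0(t)\cW_0(t,0)$ of the reduced dynamics maps this subspace into itself by the intertwining relation~\fer{intertw0}, and $[\tilde\Psi_\delta(t,s),\cP_0(0)]\equiv 0$; hence $\tilde\Psi_\delta(t,s)$ leaves $\spa\{P_i(0)\}$ invariant. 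I can therefore define a $d\times d$ matrix $M(t,s)$ by $\tilde\Psi_\delta(t,s)(P_i(0))=\sum_{1\le j\le d}M_{ji}(t,s)\,P_j(0)$, i.e.\ $M_{ji}(t,s)=\tr\big(P_j(0)\,\tilde\Psi_\delta(t,s)(P_i(0))\big)$.

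Next I would turn~\fer{auxad2} into an ODE for $M$. Using the identity $\cW_0(0,t)\tilde\cL_t^1\cW_0(t,0)\simeq\tilde L(t)$ in the basis $\{P_1(0),\dots,P_d(0)\}$ recorded in Section~\ref{secmarkov}, one gets $\delta\,\partial_t M(t,s)=\tilde L(t)M(t,s)$ with $M(s,s)=\I_d$, equivalently $\delta\,\partial_t M(t,s)^{\mathrm T}=M(t,s)^{\mathrm T}\tilde L(t)^{\mathrm T}$, $M(s,s)^{\mathrm T}=\I_d$. By~\fer{trastoch} the matrix $\tilde L(t)^{\mathrm T}$ has non-negative off-diagonal entries and rows summing to zero, so $\frac1\delta\tilde L(t)^{\mathrm T}$ is, for each $t$, the infinitesimal generator of a continuous-time Markov chain on $\{1,\dots,d\}$; since $t\mapsto\tilde L(t)$ is $C^\infty$, the associated time-inhomogeneous transition function exists, is unique, and coincides with $M(t,s)^{\mathrm T}$, this being precisely the (rescaled) Kolmogorov forward equation, see e.g.~\cite{YZ}. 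Moreover the two-parameter family $\tilde\Psi_\delta$, being the propagator of a bounded time-dependent generator, satisfies $\tilde\Psi_\delta(t,r)=\tilde\Psi_\delta(t,s)\tilde\Psi_\delta(s,r)$, whence $M(t,r)=M(t,s)M(s,r)$ and thus $M(t,r)^{\mathrm T}=M(s,r)^{\mathrm T}M(t,s)^{\mathrm T}$; together with stochasticity of each $M(t,s)^{\mathrm T}$ and $M(s,s)^{\mathrm T}=\I_d$, this is the full consistency data, and the Kolmogorov extension theorem yields a Markov process $(X_t)$ with $\P(X_t=j\mid X_s=i)=(M(t,s)^{\mathrm T})_{ij}=M_{ji}(t,s)$. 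Taking $s=0$ gives the claimed identity.

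Stochasticity of $M(t,s)^{\mathrm T}$ can be obtained either from the ODE (standard positivity preservation and conservation of probability for $\dot P=PQ$ with $Q$ a rate matrix) or, as a cross-check, directly from Corollary~\ref{cor1}: $\tilde\Psi_\delta(t,0)\cP_0(0)$ is CPTP, hence trace preserving, giving $\sum_j M_{ji}(t,0)=\tr(\tilde\Psi_\delta(t,0)(P_i(0)))=\tr P_i(0)=1$, and positive, giving $M_{ji}(t,0)=\tr\big(P_j(0)\,\tilde\Psi_\delta(t,0)(P_i(0))\,P_j(0)\big)\ge 0$. The only genuinely delicate point is clerical: keeping straight the transpose and the forward-versus-backward convention so that $\tilde L(t)^{\mathrm T}$ — not $\tilde L(t)$ — is identified as the rate matrix and the direction of the Kolmogorov equation matches~\fer{auxad2}; the sole ``analytic'' input, namely existence and uniqueness of the inhomogeneous chain for a continuous family of finite rate matrices, is classical.
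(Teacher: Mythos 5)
Your proposal is correct and follows essentially the same route as the paper: Section~\ref{secmarkov} identifies the generator $\cW_0(0,t)\tilde\cL_t^1\cW_0(t,0)$ restricted to $\spa\{P_1(0),\dots,P_d(0)\}$ with the matrix $\tilde L(t)$ of (\ref{mattildel}), and then uses (\ref{trastoch}) (non-negative off-diagonal entries, columns summing to zero) to conclude that the transpose of the matrix of $\tilde\Psi_\delta(t,0)$ is a stochastic transition matrix, exactly as you do. You merely spell out the standard details (Kolmogorov forward equation, Chapman--Kolmogorov via the propagator property, stochasticity via the rate-matrix ODE or the CPTP property from Corollary~\ref{cor1}) that the paper delegates to \cite{YZ}.
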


\subsection{Back to the slow drive regime}

Specialising to the simpler generic  framework given by assumptions {\bf Gen}, and supposing the dissipator splits $\ker \cL_t^0$ maximally, we can further approximate the reduced evolution $\tilde \Psi_{\delta}(t,s)$ for $\delta=\eps/g\ll 1$. \\

By Assumption {\bf Split}, Remark \ref{gershg} and (\ref{sectilde}), the generator  of $\tilde \Psi_{\delta}(t,0)$ reads
\be\label{effad}
{\cW_0}(0,t)\tilde \cL^1_t{\cW_0}(t,0)=\cP_0(0)\Big(0 \tilde \cQ_0(t)+\sum_{j=1}^{d-1}\tilde \lambda_j(t) \tilde \cQ_j(t)\Big)\cP_0(0) + 0\cQ_0(0),
\ee
with rank one spectral projectors $\tilde \cQ_j(t)={\cW_0}(0,t)\tilde \cP_j(t){\cW_0}(t,0)$ and corresponding eigenvalues $\tilde \lambda_j(t)$ with negative or zero real parts.  
Thus $\tilde \Psi_{\delta}(t,s)\cQ_0(0)\equiv \cQ_0(0)$, and $[\tilde \Psi_{\delta}(t,s),  \cP_0(0)]\equiv 0$. 
In case the time scale $1/\delta$ is large,  the following adiabatic approximation holds: There exists $\delta_0>0$ and $\tilde c<\infty $ such that for all $\delta<\delta_0$, and all  
$0\leq s \leq t\leq 1$, 
\be\label{adphiaux}
\Big\|\tilde \Psi_{\delta}(t,s)\cP_0(0)-\tilde \cW(t,0)\Big(\sum_{j=0}^{d-1}\e^{\int_s^t\tilde \lambda_j(r)dr/\delta} \tilde \cQ_j(0)\Big)\tilde \cW(0,s)\cP_0(0)\Big\| \leq \tilde c \delta,
\ee
where $\tilde \cW(t,s)$ is defined by
\begin{align}\label{katilde}
\left\{\begin{matrix}
 \partial_t   \tilde \cW(t,s) = \Big( \sum_{j=0}^{d-1} \tilde \cQ_j'(t)\tilde \cQ_j(t)  \Big) \tilde \cW(t,s)  \\
 \tilde \cW(s,s)=\un, \hfill
\end{matrix} \right.
\end{align}
so that $\tilde \cW(t,s)\cQ_0(0)=\cQ_0(0)\tilde \cW(t,s)\equiv \cQ_0(0)$ and the following non trivial intertwining relations  hold for $0\leq j\leq d-1$
\be\label{intertilde}
\tilde \cW(t,s)\tilde \cQ_j(s)=\tilde \cQ_j(t)\tilde \cW(t,s).
\ee
Indeed, the integration by parts argument Lemma \ref{idint} applies with $\eps=\delta$ to $\cX(t,s)=\tilde\Psi_\delta(t,s)\cP_0(0)$, $\cY(t,s)=\tilde \cW(t,0)\Big(\sum_{j=0}^{d-1}\e^{\int_s^t\tilde \lambda_j(r)dr/\delta} \tilde \cQ_j(0)\Big)\tilde \cW(0,s)\cP_0(0)$, $\cG(t)={\cW_0}(0,t)\tilde \cL^1_t{\cW_0}(t,0)\cP_0(0)$,  and $\cK(t)=\sum_{j=0}^{d-1} \tilde \cQ_j'(t)\tilde \cQ_j(t) $. Since $\cY(t,s)$ is uniformly bounded in $\delta$, Corollary \ref{bddsmall} yields estimate (\ref{adphiaux}).

Hence, in the restricted slow drive regime $\eps\ll g\ll \sqrt \eps\ll 1$,  we can take advantage of (\ref{adphiaux}) to express $\tilde \Psi_{\eps/g}(t,0)$ in terms of the spectral data of $\tilde \cL_t^1$, in the framework given by assumption {\bf Gen} { to approximate $\cU(t,0)\cP_0(0)$.  Indeed, making use of (\ref{appUtrag}), (\ref{adphiaux}) for $s=0$, and $\cP_0(0) \tilde \cQ_j(0)= \tilde \cQ_j(0)$, for all $0\leq j\leq d-1$,  and taking into account the regime considered, we immediately get the} 
\begin{cor}\label{corevolad} Assume {\bf Reg},  {\bf Gen}  and {\bf Split}. 
There exist $C_1<\infty$, $\eps_0>0$, $g_0>0$, $\alpha_0>0$ and $\beta_0>0$ such that for all $0\leq t\leq 1$, $\eps\leq\eps_0$, $g\leq g_0$, $g^2/\eps\leq \alpha_0$, $\eps/g\leq \beta_0$
\be
\Big\|\cU(t,0)\cP_0(0)-\cW_0(t,0)\tilde \cW(t,0)\Big(\sum_{j=0}^{d-1}\e^{\frac{g}{\eps}\int_0^t\tilde \lambda_j(r)dr } \tilde \cQ_j(0)\Big)\Big\|\leq C_1(g^2/\eps+\eps/g).
\ee
\end{cor}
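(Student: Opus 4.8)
The plan is to obtain Corollary~\ref{corevolad} by chaining the transition-regime estimate of Corollary~\ref{cor1} with the adiabatic approximation~(\ref{adphiaux}) of the reduced dynamics, and then absorbing the accumulated errors into $C_1(g^2/\eps+\eps/g)$ using the two smallness constraints. No new machinery is needed: the restricted slow-drive regime $\eps\ll g\ll\sqrt\eps$ is exactly the overlap of the domain $g^2/\eps\le\alpha_0$ required by Corollary~\ref{cor1} and the domain $\eps/g\le\beta_0$ required by~(\ref{adphiaux}), and {\bf Gen} already forces {\bf Spec} with $\sigma_j(t)=\{e_j(t)\}$, so Corollary~\ref{cor1} applies.

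First I would write, from Corollary~\ref{cor1},
\[
\|\cU(t,0)\cP_0(0)-\cW_0(t,0)\,\tilde\Psi_{\eps/g}(t,0)\,\cP_0(0)\|_\tau\le C_0\big(\eps+g+g^2/\eps\big),
\]
which reduces everything to controlling $\tilde\Psi_{\eps/g}(t,0)\cP_0(0)$. For that I would invoke~(\ref{adphiaux}) with $\delta=\eps/g$ and $s=0$ (so $\tilde\cW(0,0)=\un$), giving, for $\eps/g\le\beta_0:=\delta_0$,
\[
\Big\|\tilde\Psi_{\eps/g}(t,0)\cP_0(0)-\tilde\cW(t,0)\Big(\sum_{j=0}^{d-1}\e^{\frac{g}{\eps}\int_0^t\tilde\lambda_j(r)\,dr}\,\tilde\cQ_j(0)\Big)\cP_0(0)\Big\|_\tau\le\tilde c\,\frac{\eps}{g}.
\]
Since $\tilde\cQ_j(0)=\tilde\cP_j(0)$ is a spectral projector of $\tilde\cL^1_0$ inside $\ran\cP_0(0)$, one has $\cP_0(0)\tilde\cQ_j(0)=\tilde\cQ_j(0)=\tilde\cQ_j(0)\cP_0(0)$, so the trailing $\cP_0(0)$ is harmless. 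I would then left-multiply by $\cW_0(t,0)$: both operators in the displayed difference have range in $\ran\cP_0(0)$ — the first by~(\ref{cpartpsi}), the second because $\tilde\cW(t,0)$ commutes with $\cQ_0(0)$ (hence preserves $\ran\cP_0(0)$ by~(\ref{katilde})--(\ref{intertilde})) and each $\tilde\cQ_j(0)$ maps into $\ran\cP_0(0)$ — so by Lemma~\ref{wpcptp}, which makes $\cW_0(t,0)$ act isometrically in trace norm on $\ran\cP_0(0)$, the $\tilde c\,\eps/g$ bound passes through undegraded.

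Combining the two estimates by the triangle inequality leaves the error $C_0(\eps+g+g^2/\eps)+\tilde c\,\eps/g$, and the last step is to bound this by $C_1(g^2/\eps+\eps/g)$. Taking $\alpha_0,\beta_0\le1$, I would use the elementary inequalities $\eps=g\cdot(\eps/g)\le g_0\,(\eps/g)$ and $g=(g^2/\eps)\cdot(\eps/g)\le\min(g^2/\eps,\eps/g)\le g^2/\eps+\eps/g$, so every term on the right is $O(g^2/\eps+\eps/g)$ and $C_1=C_0(g_0+2)+\tilde c$ works. I do not anticipate a genuine obstacle; the only points needing a little care are keeping track that both sides of the reduced-dynamics comparison live in $\ran\cP_0(0)$ (so that nothing is lost when passing through $\cW_0(t,0)$), and verifying that {\bf Gen} and {\bf Split} supply precisely the hypotheses of both~(\ref{adphiaux}) and Corollary~\ref{cor1}.
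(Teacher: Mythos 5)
Your proposal is correct and follows essentially the same route as the paper, which obtains the corollary by combining the transition-regime estimate (\ref{appUtrag}) of Corollary \ref{cor1} with the adiabatic approximation (\ref{adphiaux}) of the reduced dynamics at $s=0$, using $\cP_0(0)\tilde\cQ_j(0)=\tilde\cQ_j(0)=\tilde\cQ_j(0)\cP_0(0)$ and the constraints $g^2/\eps\leq\alpha_0$, $\eps/g\leq\beta_0$ to absorb the error terms. Your extra bookkeeping (boundedness of $\cW_0(t,0)$ in trace norm and the elementary inequalities $\eps\leq g_0\,\eps/g$, $g=(g^2/\eps)(\eps/g)\leq g^2/\eps+\eps/g$) is exactly what the paper leaves implicit in "taking into account the regime considered."
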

In order to compute the transition probability between the eigenprojectors of the Hamiltonian, we make explicit  $\tilde \cQ_j(t)$, $0\leq j\leq d-1$, the rank one eigenprojectors of ${\cW_0}(0,t)\tilde \cL^1_t{\cW_0}(t,0)$.  
In keeping with (\ref{BRAKET}), for $A,B\in \cB(\cH)$, we define a rank one operator on  $\cB(\cH)$ by
\be
\big| A\KET \BRA B\big| : C \mapsto \BRA B, C\KET A= \tr (B^* C)A.
\ee
Hence there exist $\nu_j(t), \mu_j(t) \in \cP_0(0)\cB(\cH)=\spa \{P_k(0), k\in\{1, \dots, d\}\}$ such that
\begin{align}\label{normqt}
&\tilde \cQ_j(t)=\big|\nu_j(t)\KET \BRA \mu_j(t)\big|, \  \ \mbox{where}  \ \ 
\nonumber \\ &
\BRA \mu_j(t), \nu_j(t)\KET\equiv 1, \ \ \ \BRA  \mu_j(t), \mu_j(t)\KET\equiv d.
\end{align}
where the last identity serves normalisation purposes. 
Since ${\cW_0}(0,t)\tilde \cL^1_t{\cW_0}(t,0)$ is smooth, these operators can be chosen smooth as well. 
Moreover, for $\tilde \cW(t,s)$ defined by (\ref{katilde}), we have for all $j\in\{0,1,\dots, d-1\}$
\be
\tilde \cW(t,s)(\nu_j(s))=\nu_j(t)\e^{-\int_s^t \small \langle \hspace{-.05cm} \small \langle \mu_j(u),\partial_u \nu_j(u) \small \rangle \hspace{-.05cm} \small \rangle du}.
\ee
This identity follows from (\ref{intertilde}) together with $\tilde \cQ_j(t) \partial_t \{\tilde \cW(t,s)(\nu_j(s))\}\equiv 0$. 

\medskip

In particular, for $j=0$, the eigenprojector $\cQ_0(t)$ associated with $\lambda_0(t)\equiv 0$ takes the form
\be
\tilde \cQ_0(t)=\big|\nu_0(t)\KET \BRA \un \big|,
\ee
where $\nu_0(t)=\cP_0(0)(\nu_0(t))\in \ker {\cW_0}(0,t)\tilde \cL^1_t{\cW_0}(t,0)$.  Moreover, (\ref{normqt}) implies
\be
\tilde \cW(t,s)(\nu_0(s))=\nu_0(t),
\ee
and the justification that $\mu_0(t)=\un$ stems from $\tr \cL_t^1(\rho)\equiv 0$, see (\ref{traprol1}).
Equivalently, by Lemma \ref{lemw}, $\nu_0(t)$ is characterised by 
\be\label{eqcharnu}
\tilde \cL^1_t(\tilde \nu_0(t))=0 \ \ \mbox{where} \ \ 
\tilde \nu_0(t)=\cW_0(t,0)(\nu_0(t))=W(t,0) \nu_0(t) W(0,t).
 \ee

Note that since $\Re \tilde \lambda_j(t)<0$ for $j\neq 0$, we get that  for all fixed $t>0$, $\e^{\frac{g}{\eps}\int_0^t\tilde \lambda_j(r)dr } =\ode ((\eps/g)^\infty)$. Hence, from Corollary \ref{corevolad} and (\ref{eqcharnu}), for all fixed $0<t<1$, and all $P_j(0)$, we have
\be
\cU(t,0)(P_j(0))=\tilde \nu_0(t)+\ode(g^2/\eps+\eps/g),
\ee
which is the statement of Theorem \ref{betcont}.

\medskip

Let us turn to the transition probabilities. Writing for any $1\leq l\leq d$
\be
\nu_l(t)=\sum_{1\leq k\leq d} \tr (P_k(0)\nu_l(t)) P_k(0):=\sum_{1\leq k\leq d} \nu_l^k(t) P_k(0),
\ee
we thus have
\be
\tr \{P_k(0)\tilde \cW(t,0)\circ \tilde \cQ_l(0)(P_j(0))\}=\bar \mu_l^j(0) \nu_l^k(t)\e^{-\int_0^t \small \langle \hspace{-.05cm} \small \langle \mu_l(s),\nu_l'(s) \small \rangle \hspace{-.05cm} \small \rangle ds}.
\ee
We are in a position to estimate the transition probabilities in the adiabatic regime for the reduced evolution, to complete Theorem \ref{betcont}:
\begin{cor}\label{slowdrive}
Assume {\bf Reg}, {\bf Gen}  and {\bf Split}.  Then  $\forall t\in[0,1]$, $j\neq k$, we have for $\eps\ll g\ll \sqrt \eps\ll 1$,
\be
\tr (P_k(t) \cU(t,0) (P_j(0)))=\sum_{0\leq l\leq d-1}\e^{g/\eps \int_0^t\tilde \lambda_j(s)ds}\,\bar \mu_l^j(0) \nu_l^k(t)\e^{-\int_0^t \small \langle \hspace{-.05cm} \small \langle \mu_l(s),\nu_l'(s) \small \rangle \hspace{-.05cm} \small \rangle ds}+\ode(g^2/\eps+\eps/g).
\ee
In particular, for any fixed $t>0$, we have in the same regime
\be
\tr (P_k(t) \cU(t,0) (P_j(0)))=\tr(P_k(t)\tilde \nu_0(t))+\ode(g^2/\eps+\eps/g),
\ee
where $\tilde \nu_0(t)$ is uniquely defined by $\tilde \cL_t^1(\tilde \nu_0(t))=0$ and $\tr  (\tilde \nu_0(t))=1$.
\end{cor}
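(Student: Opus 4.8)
The plan is to read the statement off Corollary \ref{corevolad} by making explicit the action of the rank one spectral operators $\tilde\cQ_l(0)$ and of the transport operators $\tilde\cW(t,0)$ and $\cW_0(t,0)$. First I would note that under {\bf Gen} each $P_j(0)$ is one dimensional, hence a state satisfying $P_j(0)=\cP_0(0)(P_j(0))$, so that $\cU(t,0)(P_j(0))=\cU(t,0)\cP_0(0)(P_j(0))$ and Corollary \ref{corevolad} applies directly. Pairing with $\tr\{P_k(t)\,\cdot\,\}$ and using that tracing against $P_k(t)$ and evaluating on the state $P_j(0)$ are bounded operations, the claim reduces to
\be
\tr\{P_k(t)\cU(t,0)(P_j(0))\}=\sum_{l=0}^{d-1}\e^{\frac{g}{\eps}\int_0^t\tilde\lambda_l(r)dr}\,\tr\{P_k(t)\cW_0(t,0)\tilde\cW(t,0)\tilde\cQ_l(0)(P_j(0))\}+\ode(g^2/\eps+\eps/g),
\ee
so that it remains to compute each trace on the right-hand side.

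Next I would carry out that evaluation in three steps. Using the rank one form $\tilde\cQ_l(0)=|\nu_l(0)\KET\BRA\mu_l(0)|$ of (\ref{normqt}) together with the Hilbert--Schmidt pairing (\ref{BRAKET}) and $\mu_l(0)=\sum_k\mu_l^k(0)P_k(0)$, one gets $\tilde\cQ_l(0)(P_j(0))=\BRA\mu_l(0),P_j(0)\KET\,\nu_l(0)=\overline{\mu_l^j(0)}\,\nu_l(0)$, where $\mu_l^j(0)=\tr(P_j(0)\mu_l(0))$. Then the identity $\tilde\cW(t,0)(\nu_l(0))=\nu_l(t)\,\e^{-\int_0^t\BRA\mu_l(u),\partial_u\nu_l(u)\KET du}$ recorded after (\ref{katilde}) handles the middle factor. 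Finally, since $\nu_l(t)=\sum_m\nu_l^m(t)P_m(0)$ lies in $\cP_0(0)\cB(\cH)$, Lemma \ref{lemw} together with the intertwining relation (\ref{interk}) gives $\cW_0(t,0)(\nu_l(t))=\sum_m\nu_l^m(t)\,W(t,0)P_m(0)W(0,t)=\sum_m\nu_l^m(t)P_m(t)$, whence $\tr\{P_k(t)\cW_0(t,0)(\nu_l(t))\}=\nu_l^k(t)$. Assembling the three steps, term $l$ equals $\overline{\mu_l^j(0)}\,\nu_l^k(t)\,\e^{-\int_0^t\BRA\mu_l(s),\nu_l'(s)\KET ds}$, and summing with the weights $\e^{\frac{g}{\eps}\int_0^t\tilde\lambda_l(r)dr}$ and the error from Corollary \ref{corevolad} yields the first displayed formula.

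For the \emph{in particular} assertion I would fix $t>0$ and separate $l=0$ from $l\neq0$. For $l\neq0$, the key point is strict negativity of $\Re\tilde\lambda_l(r)$ on $[0,1]$: the matrix $\tilde L(r)$ in (\ref{mattildel}) has nonnegative off-diagonal entries and zero column sums, so by Gershgorin's theorem $\sigma(\tilde L(r))$ lies in a union of closed discs each tangent to the imaginary axis at the origin; hence $\Re\tilde\lambda_l(r)\le0$, with equality forcing $\tilde\lambda_l(r)=0$, which is impossible under {\bf Split} since $\tilde\lambda_0(r)\equiv0$ is simple. Continuity and compactness then give $\int_0^t\Re\tilde\lambda_l(r)dr<0$, so $\e^{\frac{g}{\eps}\int_0^t\tilde\lambda_l(r)dr}=\ode((\eps/g)^\infty)$ in the regime $\eps/g\to0$ and these modes are absorbed into the error. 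For $l=0$ one has $\tilde\lambda_0\equiv0$, $\mu_0(t)=\un$ and $\tilde\cQ_0(t)=|\nu_0(t)\KET\BRA\un|$ (last paragraphs before the statement), hence $\overline{\mu_0^j(0)}=\tr(P_j(0))=1$ and $\BRA\mu_0(s),\nu_0'(s)\KET=\frac{d}{ds}\tr(\nu_0(s))=0$ because $\tr(\nu_0(s))=\BRA\mu_0(s),\nu_0(s)\KET\equiv1$ by (\ref{normqt}); thus the $l=0$ term equals $\nu_0^k(t)=\tr(P_k(0)\nu_0(t))$. By (\ref{eqcharnu}), $\tilde\nu_0(t)=W(t,0)\nu_0(t)W(0,t)$, so (\ref{interk}) gives $\tr(P_k(t)\tilde\nu_0(t))=\tr(P_k(0)\nu_0(t))=\nu_0^k(t)$ and $\tr(\tilde\nu_0(t))=\tr(\nu_0(t))=1$, while uniqueness of $\tilde\nu_0(t)$ follows from one-dimensionality of $\ker\tilde\cL_t^1|_{\ker\cL_t^0}$ under {\bf Split}. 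This gives the second formula.

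The one step that needs genuine care is the strict negativity $\Re\tilde\lambda_l(r)<0$ for $l\neq0$ on all of $[0,1]$, including near $r=0$; this is exactly what allows us to discard the $d-1$ non-equilibrium modes for any fixed $t>0$, and everything else is bookkeeping with the rank one operators and the transport identities already established above.
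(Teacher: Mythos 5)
Your proposal is correct and follows essentially the same route as the paper: it reads the first formula off Corollary \ref{corevolad} via the rank-one form (\ref{normqt}) of $\tilde\cQ_l(0)$, the transport identities for $\tilde\cW$ and Lemma \ref{lemw} (exactly the computation the paper records just before the corollary), and then, for fixed $t>0$, discards the $l\neq 0$ modes using $\Re\tilde\lambda_l<0$ (the paper's Remark \ref{gershg}, which you correctly rederive from Gershgorin plus {\bf Split}) and identifies the surviving $l=0$ term with $\tr(P_k(t)\tilde\nu_0(t))$ through (\ref{eqcharnu}) and the intertwining/trace-preservation of the Kato transport. You even fix the paper's index slip ($\tilde\lambda_l$ rather than $\tilde\lambda_j$ in the exponent), so no gaps to report.
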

\begin{rem}
The first statement stems from Corollary \ref{corevolad}, while the second one takes advantage of $\e^{g/\eps \int_0^t\tilde \lambda_j(s)ds}=~\ode((\eps/g)^\infty)$ for $j>0$ if $t>0$ is independent of $\eps/g$, since $\Re \tilde \lambda_j(s)<0$ for such $j$'s. The reformulation of the leading order is a consequence of the considerations above and $\cW_0(t,s)\cP_0(s)$ being trace preserving.
\end{rem}

\subsection{Back to the perturbative regime}

Finally, we briefly check that Corollary \ref{cor1} reduces to a statement of Proposition \ref{propweak} in the perturbative regime $g\ll \eps$. We note that the definition (\ref{auxad2}) allows for an approach of $\tilde \Psi_{\delta}$ via Dyson series which gives for $\delta=\eps/g\gg 1$,
\be
\tilde \Psi_\delta(t,0)=\un +\frac{1}{\delta}\int_0^t \cW_0(0,s)\tilde \cL_s^1 \cW_0(s,0)ds +\ode(1/\delta^2),
\ee
since $\|\tilde\Psi_{\delta}(t,s)\|_\tau$ is uniformly bounded in $0\leq s\leq t\leq 1$ and $\delta$.  Hence, given the definition of $\tilde \cL_s^1$,
\begin{align}
\cW_0(t,0)\tilde \Psi_{\eps/g}(t,0)\cP_0(0)&=\cW_0(t,0)\cP_0(0)+\cW_0(t,0)\frac{g}{\eps}\int_0^t \cW_0(0,s)\tilde \cL_s^1 \cW_0(s,0)\cP_0(0)ds+\ode(g^2/\eps^2)\nonumber\\
&=\cW_0(t,0)\cP_0(0)+\frac{g}{\eps}\cW_0(t,0)\int_0^t \cP_0(0)\cW_0(0,s) \cL_s^1 \cW_0(s,0)\cP_0(0)ds+\ode(g^2/\eps^2).
\end{align}
Since $\sigma_j(t)=\{e_j(t)\}$ for all $1\leq j\leq d$ and all $t\in [0,1]$, Remark \ref{v=w} {   applies which, noting the intertwining relation (\ref{intertw0}), yields}
 for any $0\leq s\leq t\leq 1$,
\be\label{vpwf}
\cW_0(t,s)\cP_0(s)=\cV^0(t,s)\cP_0(s),
\ee
 where $\cV^0(t,s)$ is defined in (\ref{defv00}).
 Thus, further assuming $g\ll \eps$ in Corollary \ref{cor1}, we recover the perturbative regime estimate of Proposition (\ref{propweak}) for $N=1$ under the form
\begin{cor}\label{recovpereg} Assume  {\bf Reg} and {\bf Spec} with $\sigma_j(t)=\{e_j(t)\}$, for all $1\leq j\leq d$ and all $t\in [0,1]$. In the regime $(\eps, g)\ra (0,0)$ and $g/\eps\ra 0$, (\ref{appUtrag}) yields
\be
\cU(t,0)\cP_0(0)=\cV^0(t,0)\cP_0(0)+\frac{g}{\eps}\int_0^t \cP_0(t)\cV^0(t,s) \cL_s^1 \cV^0(s,0)\cP_0(0)ds+\ode(\eps+(g/\eps)^2),
\ee
where $\cV^0(t,s)\cP_0(s)=\cP_0(t)\cV^0(t,s)=\cP_0(t)\cW_0(t,s)\cP_0(s)$ is independent of $\eps$.
\end{cor}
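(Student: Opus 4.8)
The plan is to unwind the approximation (\ref{appUtrag}) of Corollary \ref{cor1} in the sub-regime $g\ll\eps$: expand the reduced dynamics $\tilde\Psi_{\eps/g}$ to first order in $g/\eps$, and then replace $\cW_0$ by $\cV^0$ on $\ran\cP_0$ by means of Remark \ref{v=w}. For the first part, recall that by Proposition \ref{tech} the norm $\|\tilde\Psi_\delta(t,s)\|_\tau$ is bounded uniformly in $0\le s\le t\le 1$ and in $\delta>0$; iterating the integral form of (\ref{auxad2}) once therefore gives, for $\delta=\eps/g\gg 1$,
\[
\tilde\Psi_\delta(t,0)=\un+\frac1\delta\int_0^t\cW_0(0,s)\,\tilde\cL_s^1\,\cW_0(s,0)\,ds+\ode(1/\delta^2),
\]
the remainder being $\ode(1/\delta^2)$ since $\sup_s\|\tilde\cL_s^1\|_\tau<\infty$ and $\tilde\Psi_\delta$ is uniformly bounded.

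Next I would compose on the left with $\cW_0(t,0)$ and on the right with $\cP_0(0)$, and use $\tilde\cL_s^1=\cP_0(s)\cL_s^1\cP_0(s)$ together with the intertwining relation (\ref{intertw0}), the propagation relation (\ref{proprel}), and the identity $\cP_0(t)\cW_0(t,s)=\cP_0(t)\cW_0(t,s)\cP_0(s)$ (a consequence of (\ref{intertw0}) and $\cP_0\cQ_0=0$) to reorganise the various factors of $\cP_0$, obtaining
\[
\cW_0(t,0)\tilde\Psi_{\eps/g}(t,0)\cP_0(0)=\cW_0(t,0)\cP_0(0)+\frac g\eps\int_0^t\cP_0(t)\cW_0(t,s)\cL_s^1\,\cW_0(s,0)\cP_0(0)\,ds+\ode((g/\eps)^2).
\]
At this stage the standing hypothesis $\sigma_j(t)=\{e_j(t)\}$ for all $j$ enters: by Remark \ref{v=w}, {\it i.e.} (\ref{vpwf}), one has $\cW_0(t,s)\cP_0(s)=\cV^0(t,s)\cP_0(s)=\cP_0(t)\cV^0(t,s)$, and this operator carries no $\eps$-dependence because the dynamical phases of $\cV^0(t,s)$ cancel on $\ker\cL_s^0$. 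Substituting $\cP_0(t)\cW_0(t,s)\cP_0(s)=\cP_0(t)\cV^0(t,s)$ and $\cW_0(s,0)\cP_0(0)=\cV^0(s,0)\cP_0(0)$ and combining the leftover $\cW_0$ factors via (\ref{proprel}), the right-hand side becomes
\[
\cV^0(t,0)\cP_0(0)+\frac g\eps\int_0^t\cP_0(t)\cV^0(t,s)\cL_s^1\cV^0(s,0)\cP_0(0)\,ds+\ode((g/\eps)^2).
\]

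Finally I would feed this into (\ref{appUtrag}) and observe that in the regime $g\ll\eps$ one has $g+g^2/\eps=\ode(\eps)$ while $(g/\eps)^2$ must be kept, which yields the claimed estimate with error $\ode(\eps+(g/\eps)^2)$. No serious obstacle is anticipated: the only delicate point is the bookkeeping of the intermediate projectors above, which is routine once one keeps track of (\ref{intertw0}), of $\cP_0\cQ_0=0$, and of the fact that under $\sigma_j(t)=\{e_j(t)\}$ the relation (\ref{vpwf}) lets $\cW_0\cP_0$ and $\cP_0\cW_0$ be traded for $\cV^0$ wherever they occur. In essence, this corollary is the consistency statement that Corollary \ref{cor1}, specialised to $g\ll\eps$, reproduces the $N=1$ term of the adiabatic Dyson expansion of Proposition \ref{propweak}.
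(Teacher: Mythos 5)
Your proposal is correct and follows essentially the same route as the paper: a first-order Dyson expansion of $\tilde\Psi_{\eps/g}$ in $g/\eps$ (with the remainder controlled by the uniform bound on $\tilde\Psi_\delta$ from Proposition \ref{tech}), rearrangement of the projectors via the intertwining and propagation relations, replacement of $\cW_0\cP_0$ by $\cV^0\cP_0$ through Remark \ref{v=w}/(\ref{vpwf}), and insertion into (\ref{appUtrag}) with $g+g^2/\eps=\ode(\eps)$ in the regime $g\ll\eps$. No gaps.
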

\begin{rem} i) The error term is smaller than the explicit integral term for $\eps^2\ll g \ll \eps$.\\
ii) We recover this way the second statement of Theorem \ref{asuper}.
\end{rem}

\section{Example}\label{ex}

We consider here a two-level system or Qubit, for which the reduced dynamics $\tilde \Psi_\delta(t,s)$ can be computed explicitly under some symmetry of the Lindblad operators. Beyond its intrinsic interest, this example allows us to illustrate the different regimes we encountered in the general case.
\medskip

We assume {\bf Reg} and {\bf Gen} and with the notations introduced so far, for $\cH=\C^2$, we consider the two-level Hamiltonian 
\be
H(t)=\sum_{1\leq j\leq 2} e_j(t)P_j(t),\ \  \mbox{with} \ \ P_j(t)=|\ffi_j(t)\ket\bra\ffi_j(t)|,
\ee 
 and $\ffi_j(t)=W(t,0)\ffi_j(0)$, $W(t,0)$ being the unitary Kato operator.
We assume the dissipator
\be\label{diss2}
\cL_t^1(\cdot)=\sum_{l\in I}\Big(\Gamma_l(t)\cdot \Gamma_l^*(t)-\frac12\big\{\Gamma_l^*(t)\Gamma_l(t), \cdot\big\}\Big),
\ee
has jump operators in $\cB(\C^2)$ satisfying the symmetry condition
\be\label{symlindop}
\sum_{l\in I}|\bra \ffi_1(t)|\Gamma_l(t)\ffi_2(t)\ket|^2=\sum_{l\in I}|\bra \ffi_2(t)|\Gamma_l(t)\ffi_1(t)\ket|^2.
\ee
This is the case in particular if all jump operators are self-adjoint. Note that condition (\ref{symlindop}) is independent of the normalised basis of eigenvectors of $H(t)$ used to express it.

Then we have:
\begin{prop}
Let $\cH=\C^2$, and assume {\bf Reg} and {\bf Gen} for $d=2$. Further suppose the dissipator satisfies the symmetry condition (\ref{symlindop}) and set $\gamma(t)=\sum_{l\in I}|\bra \ffi_1(t)|\Gamma_l(t)\ffi_2(t)\ket|^2\in \R^+$. Then, for any $\delta>0$, the reduced dynamics takes the explicit form in the ordered basis $\{P_1(0), P_2(0)\}$ of $\cP_0(0)$:
\begin{align}
\tilde\Psi_\delta(t,s)|_{\spa \{P_1(0), P_2(0)\}}
=\frac12\begin{pmatrix}
1 &1  \cr 
1  & 1 \cr
\end{pmatrix}+\e^{-\frac2\delta\int_s^t\gamma(u)du} \frac12\begin{pmatrix}
1 &-1  \cr 
-1  & 1 \cr
\end{pmatrix}.
\end{align}
Hence, in the regime $g\ll \sqrt \eps \ll 1$,  for any initial state $\rho_0=\rho_1(0)P_1(0)+\rho_2(0)P_2(0)$, any $t\in [0,1],$
\begin{align}\label{qubitapp}
\cU(t,0)(\rho_0)=\tilde \rho_1(t)P_1(t)+\tilde \rho_2(t)P_2(t)+\ode(\eps+g+g^2/\eps),
\end{align}
where 
\begin{align}
\tilde \rho_1(t)&=\frac12\Big(1+\e^{-2\frac{g}{\eps}\int_0^t\gamma(s)ds}(\rho_1(0)-\rho_2(0))\Big),\nonumber\\
\tilde \rho_2(t)&=\frac12\Big(1+\e^{-2\frac{g}{\eps}\int_0^t\gamma(s)ds}(\rho_2(0)-\rho_1(0))\Big).
\end{align}
In particular, the transition probabilities read in the same regime
\begin{align}\label{qubtrapro}
\tr (P_2(t)\cU(t,0)(P_1(0)))=\tr (P_1(t)\cU(t,0)(P_2(0)))=\frac12\Big(1-\e^{-2\frac{g}{\eps}\int_0^t\gamma(s)ds}\Big)+\ode(\eps+g+g^2/\eps).
 \end{align}
\end{prop}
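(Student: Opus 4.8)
The plan is to compute the splitting matrix $\tilde L(t)$ explicitly for $d=2$, solve the resulting reduced dynamics in closed form, and then invoke Corollary~\ref{cor1} (equivalently Theorem~\ref{thmtra}). First I would read off the matrix representation (\ref{mattildel}) of $\tilde\cL^1_t|_{\ker\cL^0_t}$ in the ordered basis $\{P_1(t),P_2(t)\}$. The off-diagonal entries are $\tilde L_{12}(t)=\sum_{l\in I}|\bra\ffi_1(t)|\Gamma_l(t)\ffi_2(t)\ket|^2$ and $\tilde L_{21}(t)=\sum_{l\in I}|\bra\ffi_2(t)|\Gamma_l(t)\ffi_1(t)\ket|^2$, both equal to $\gamma(t)$ by the symmetry hypothesis (\ref{symlindop}). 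For the diagonal entries I would use Parseval's identity in the orthonormal basis $\{\ffi_1(t),\ffi_2(t)\}$, $\|\Gamma_l(t)\ffi_j(t)\|^2=\sum_{k}|\bra\ffi_k(t)|\Gamma_l(t)\ffi_j(t)\ket|^2$, which gives $\tilde L_{jj}(t)=-\sum_{l\in I}|\bra\ffi_k(t)|\Gamma_l(t)\ffi_j(t)\ket|^2=-\gamma(t)$ for $k\neq j$, again by (\ref{symlindop}). Hence $\tilde L(t)=\gamma(t)\,M$ with $M=\left(\begin{smallmatrix}-1&1\\1&-1\end{smallmatrix}\right)$; note this derivation uses only {\bf Reg} and {\bf Gen} and does not require {\bf Split}, whose role is here played automatically since $\sigma(\tilde L(t))=\{0,-2\gamma(t)\}$.

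Next I would use the discussion of Section~\ref{secmarkov} (the identity $\cW_0(0,t)\tilde\cL^1_t\cW_0(t,0)\simeq\tilde L(t)$ in the fixed basis $\{P_1(0),P_2(0)\}$ of $\cP_0(0)\cB(\cH)$), so that $\tilde\Psi_\delta(t,s)$ restricted to $\spa\{P_1(0),P_2(0)\}$ solves $\delta\,\partial_t\tilde\Psi_\delta(t,s)=\tilde L(t)\tilde\Psi_\delta(t,s)$ with $\tilde\Psi_\delta(s,s)=\un$. Since all the matrices $\tilde L(u)$ are scalar multiples of the fixed matrix $M$ they commute, so the solution is $\tilde\Psi_\delta(t,s)=\exp\!\big(\tfrac1\delta\big(\int_s^t\gamma(u)\,du\big)M\big)$. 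Diagonalising $M$ — eigenvalue $0$ with spectral projector $\tfrac12\left(\begin{smallmatrix}1&1\\1&1\end{smallmatrix}\right)$, eigenvalue $-2$ with spectral projector $\tfrac12\left(\begin{smallmatrix}1&-1\\-1&1\end{smallmatrix}\right)$ — gives $\exp(aM)=\tfrac12\left(\begin{smallmatrix}1&1\\1&1\end{smallmatrix}\right)+e^{-2a}\tfrac12\left(\begin{smallmatrix}1&-1\\-1&1\end{smallmatrix}\right)$ with $a=\tfrac1\delta\int_s^t\gamma(u)\,du$, which is precisely the claimed closed form.

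Finally I would feed this into Corollary~\ref{cor1}: in the regime $g\ll\sqrt\eps\ll1$ (hence $g^2/\eps\to0$) one has $\cU(t,0)\cP_0(0)=\cW_0(t,0)\tilde\Psi_{\eps/g}(t,0)\cP_0(0)+\ode(\eps+g+g^2/\eps)$. For a diagonal initial state $\rho_0=\rho_1(0)P_1(0)+\rho_2(0)P_2(0)=\cP_0(0)(\rho_0)$ with $\rho_1(0)+\rho_2(0)=1$, applying the explicit matrix above with $\delta=\eps/g$ to the coordinate vector $(\rho_1(0),\rho_2(0))$ yields $\tilde\Psi_{\eps/g}(t,0)(\rho_0)=\tilde\rho_1(t)P_1(0)+\tilde\rho_2(t)P_2(0)$ with $\tilde\rho_j(t)$ as stated; then $\cW_0(t,0)(P_k(0))=P_k(t)$ (Remark~\ref{remwp}~i)) gives (\ref{qubitapp}). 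Specialising to $\rho_0=P_1(0)$ and to $\rho_0=P_2(0)$ and pairing with $P_2(t)$, resp.\ $P_1(t)$, using $\tr(P_k(t)P_j(t))=\delta_{jk}$, produces (\ref{qubtrapro}). The only step carrying genuine content is the evaluation of $\tilde L(t)$ from (\ref{symlindop}); the remainder is an elementary $2\times2$ matrix exponential and bookkeeping, so I do not anticipate a real obstacle.
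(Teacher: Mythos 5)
Your proposal is correct and follows essentially the same route as the paper: compute $\tilde L(t)=\gamma(t)\left(\begin{smallmatrix}-1&1\\1&-1\end{smallmatrix}\right)$ from (\ref{mattildel}) and the symmetry (\ref{symlindop}) (the paper rewrites the diagonal entries via (\ref{trastoch}), which is equivalent to your Parseval argument), exponentiate the commuting family to get the stated $2\times2$ form, and conclude with Corollary \ref{cor1} and Remark \ref{remwp} i). No gaps.
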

\begin{rem}
i) For $0<t \leq1$ fixed such that $\int_0^t\gamma(s)ds>0$, if $\eps \ll g\ll \sqrt \eps \ll 1$, 
\be
\cU(t,0)(\rho_0)=\frac12\un+\ode(g^2/\eps+(\eps/g)^\infty),
\ee
which corresponds to Theorem \ref{betcont}. Note that $\un$ spans $\ker \tilde \cL_t^1|_{\cP_0(t)\cH}$.\\
ii) If $ g\ll \eps \ll \sqrt \eps \ll 1$, 
\begin{align}
\cU(t,0)(\rho_0)=&\rho_1(0)P_1(t)+\rho_2(0)P_2(t)\nonumber\\
 &-\frac{g}{\eps}\int_0^t\gamma(s)ds\,  (\rho_1(0)-\rho_2(0)) (P_1(t) - P_2(t))
+\ode(\eps+g^2/\eps^2)
\end{align}
which corresponds to Corollary \ref{recovpereg} and Theorem \ref{asuper}.\\
iii) The  state $\cU(t,0)(\rho_0)$ is determined by the asymptotics of the scalar factor $\e^{-2\frac{g}{\eps}\int_0^t\gamma(s)ds}$. In case $g\simeq \eps$, {\it i.e.} $g=\alpha \eps$, for some fixed $\alpha>0$, the leading order of $\cU(t,0)(\rho_0)$ takes the form of any diagonal state with respect to the eigenbasis of $H(t)$, depending on the value  of $\e^{-2\alpha\int_0^t\gamma(s)ds}$.\\
iv) The Markov process interpretation of Lemma \ref{markov} remains in force here, with $\tilde\Psi_\delta(t,0)|_{\spa \{\cP_0(0)\cB(\C^2)\}}$ being bistochastic.
\end{rem}
\begin{proof}
The arguments leading to Lemma \ref{markov} show that the generator of the reduced dynamics $\cW_0(0,t)\tilde\cL_t^1\cW_0(t,0)$ has the following matrix form in the basis $\{P_1(0), P_2(0)\}$
(dropping the variable $t$ from the notation)
\begin{align}
\tilde L=&\sum_{l\in I}
\begin{pmatrix}
|\bra\ffi_1|\Gamma_l\ffi_1\ket|^2 - \|\Gamma_l\ffi_1\|^2 & |\bra\ffi_1|\Gamma_l\ffi_2\ket|^2  \cr 
|\bra\ffi_2|\Gamma_l\ffi_1\ket|^2  & |\bra\ffi_2|\Gamma_l\ffi_2\ket|^2 - \|\Gamma_l\ffi_2\|^2 \cr
\end{pmatrix}=\sum_{l\in I}
\begin{pmatrix}
-|\bra\ffi_2|\Gamma_l\ffi_1\ket|^2  & |\bra\ffi_1|\Gamma_l\ffi_2\ket|^2  \cr 
|\bra\ffi_2|\Gamma_l\ffi_1\ket|^2  & -|\bra\ffi_1|\Gamma_l\ffi_2\ket|^2 \cr
\end{pmatrix},
\end{align}
thanks to property (\ref{trastoch}). The assumed symmetry  (\ref{symlindop}) allows us to get (restoring the time variable)
\begin{align}
\tilde L(t)=&
\sum_{l\in I}|\bra\ffi_2(t)|\Gamma_l(t)\ffi_1(t)\ket|^2
\begin{pmatrix}
-1 &1  \cr 
1  & -1 \cr
\end{pmatrix}= \gamma(t) \begin{pmatrix}
-1 &1  \cr 
1  & -1 \cr
\end{pmatrix}.
\end{align}
Therefore, in the same basis, the reduced dynamics
solution  to (\ref{auxad2}) with $\delta>0$ reads
\begin{align}
\tilde\Psi_\delta(t,s)|_{\spa \{P_1(0), P_2(0)\}}=\frac12\begin{pmatrix}
1 &1  \cr 
1  & 1 \cr
\end{pmatrix}+\e^{-\frac2\delta\int_s^t\gamma(u)du} \frac12\begin{pmatrix}
1 &-1  \cr 
-1  & 1 \cr
\end{pmatrix}.
\end{align}
Then, Corollary \ref{cor1} together with Remark \ref{remwp} i), yield (\ref{qubitapp}) and (\ref{qubtrapro}). \qed
\end{proof}
 
\section{Generalisation}\label{secgen}

We present here a generalisation of the results concerning the perturbative regime to arbitrary high order in the adiabatic parameter. This is made possible by the use of a systematic improvement of  the adiabatic approximations of the Schr\"odinger propagator $U(t,s)$ (\ref{schr}), allowed by our general setup, see {\it e.g.} \cite{ASY, N2, JP}. We briefly present here the approach of \cite{JP} based on a hierarchy labelled by $q\in \N$ of smooth hamiltonians in $\cB(\cH)$, before spelling out the improvement it brings to the leading order results of Section \ref{weakcoup}.

\subsection{Higher order adiabatics}

Set
\bea
H^0(t)&=&H(t)\\
P_j^{0}(t)&=&P_j(t), \ \forall 1\leq j\leq d \\
K^0(t)&=&K(t).
\eea
and define the self-adjoint operator
\be
H^1(t)=H(t)-\i\eps K^0(t).
\ee
For $\eps $ small enough, the gap hypothesis {\bf Spec} holds for all $t\in [0,1]$,
and we set for  all $j\in \{1,\dots, d\}$, $\eps $ small enough, with $'$ denoting the time derivative,
\bea
P_j^1(t)&=&-\frac{1}{2\pi i}\int_{\gamma_j}(H^1(t)-z)^{-1}\, dz,  \\
K^1(t)&=&\sum_{1\leq j\leq d}{P_j^{1}}'(t)P_j^{1}(t).
\eea
Note that $H^1$, $P^1$,  and $K^1$ are $\eps$-dependent and smooth on $[0,1]$.
We define inductively, for $\eps$ small enough, the following hierarchy of operators
for $q\geq 1$, and all $j\in \{1,\dots, d\}$
\bea\label{hier}
H^q(t)&=&H(t)-\i \eps K^{q-1}(t)\\ \label{pq}
P_j^q(t)&=&-\frac{1}{2\pi i}\int_{\gamma_j}(H^q(t)-z)^{-1}\, dz, \\
K^q(t)&=&\sum_{1\leq j\leq d}{P_j^{q}}'(t)P_j^{q}(t).
\eea

It is proven in \cite{JP}, see also \cite{JP2}, that in our $C^\infty$ framework, the
following holds:
\begin{prop}\label{jpjmp}
For any $q\in \N^*$, there exists $\eps_q>0$ and $C_q, \kappa_q<\infty$,  such that for all $\eps\leq \eps_q$, $t\in [0,1]$, $j\in \{1,\dots, d\}$, 
$H^n(t), P_j^n(t), K^n(t)$ are well defined and smooth for all $0\leq n\leq q$. Moreover, these operators and all their derivatives admit an asymptotic expansion in powers of $\eps$ 
and the following estimates hold
\bea
&&\|K^{q}(t) -K^{q-1}(t)\|\leq C_q \eps^q\\
&&\|K^q(t)\|\leq \kappa_q.
\eea
\end{prop}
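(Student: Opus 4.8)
The plan is to prove the statement by induction on $q$, following the superadiabatic construction of \cite{JP}; the only substantive points are that the gap hypothesis {\bf Spec} persists for the perturbed Hamiltonians $H^q$ when $\eps$ is small, and that the $\eps^q$–smallness is correctly propagated through one step of the recursion. The base case is $q=1$: under {\bf Reg} and {\bf Spec} the operator $K^0=K$ of (\ref{multik}) is $C^\infty$ on $[0,1]$ (with one–sided derivatives at the endpoints) and, by compactness of $[0,1]$, it and all its $t$–derivatives are uniformly bounded; moreover $\sum_j P_j=\un$ together with $P_j'P_j+P_jP_j'=P_j'$ gives $\sum_j P_jP_j'=-K$, hence $(K^0)^*=-K^0$, so $H^1=H-\i\eps K^0$ is self-adjoint valued and smooth. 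Since $\sup_t\|H^1(t)-H(t)\|\leq\eps\sup_t\|K^0(t)\|$, ordinary perturbation theory for bounded self-adjoint operators gives $\mathrm{dist}(\sigma(H^1(t)),\sigma(H(t)))\leq\eps\sup_t\|K^0(t)\|$ uniformly in $t$, so for $\eps$ small each $\gamma_j$ still encircles exactly the part of $\sigma(H^1(t))$ coming from $\sigma_j(t)$; thus $P_j^1(t)$ is well defined by (\ref{pq}), smooth in $t$ by differentiation under the Riesz integral, self-adjoint, and $\sum_jP_j^1=\un$, so $K^1=\sum_j(P_j^1)'P_j^1$ is smooth, uniformly bounded and skew-adjoint.

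For the inductive step I would assume that for all $n\leq q-1$ the operators $H^n,P_j^n,K^n$ are well defined, $C^\infty$ on $[0,1]$, uniformly bounded together with all their $t$–derivatives, with $K^n$ skew-adjoint and $\sum_jP_j^n=\un$, that they and all their $t$–derivatives admit asymptotic expansions in powers of $\eps$, and — the crucial strengthening — that for every $m\in\N$ there are $C_{n,m},\eps_{n,m}$ with $\|\partial_t^m(K^n-K^{n-1})(t)\|\leq C_{n,m}\eps^{\,n}$ for $\eps\leq\eps_{n,m}$, $t\in[0,1]$. Then $H^q=H-\i\eps K^{q-1}$ is self-adjoint, smooth and $O(\eps)$–close to $H$, so as in the base case the gap survives and $P_j^q(t)$ is well defined, smooth and uniformly bounded; writing
\[
P_j^q-P_j^{q-1}=-\frac{1}{2\pi\i}\oint_{\gamma_j}(H^q-z)^{-1}(H^{q-1}-H^q)(H^{q-1}-z)^{-1}\,dz,
\]
with $\|H^{q-1}-H^q\|=\eps\|K^{q-1}-K^{q-2}\|\leq C_{q-1,0}\,\eps^{\,q}$ and the resolvents uniformly bounded on the fixed contour $\gamma_j$, gives $\|P_j^q-P_j^{q-1}\|=O(\eps^{\,q})$; differentiating this identity in $t$ and feeding in the induction hypothesis on $\partial_t^m(K^{q-1}-K^{q-2})$ (which controls $\partial_t^m(H^q-H^{q-1})$ up to a further factor $\eps$) yields $\|\partial_t^m(P_j^q-P_j^{q-1})\|=O(\eps^{\,q})$ for every $m$. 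Inserting these into the Leibniz expansion of $K^q-K^{q-1}=\sum_j\big((P_j^q)'P_j^q-(P_j^{q-1})'P_j^{q-1}\big)$ and its $t$–derivatives produces $\|\partial_t^m(K^q-K^{q-1})\|\leq C_{q,m}\eps^{\,q}$, which contains the first claimed estimate at $m=0$. The asymptotic expansions for $H^q$ follow from that of $K^{q-1}$, those for $(H^q-z)^{-1}$ from the Neumann series $(H^q-z)^{-1}=(H-z)^{-1}\sum_{k\geq0}\big(\i\eps K^{q-1}(H-z)^{-1}\big)^k$, convergent uniformly for $z\in\gamma_j$ and $\eps$ small, and hence (after integration and differentiation under the integral) for $P_j^q$ and $K^q$. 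Finally the uniform bound is obtained by telescoping: $\|K^q(t)\|\leq\sup_t\|K^0(t)\|+\sum_{n=1}^qC_{n,0}\eps^{\,n}\leq\sup_t\|K^0(t)\|+\sum_{n=1}^qC_{n,0}\eps_q^{\,n}=:\kappa_q$ once $\eps\leq\eps_q:=\min_{n\leq q}\eps_{n,0}$.

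I expect the main obstacle to be exactly the bookkeeping just highlighted: estimating $K^q-K^{q-1}$ forces one to estimate a $t$–derivative of $P_j^q-P_j^{q-1}$, hence of $H^q-H^{q-1}$, hence of $K^{q-1}-K^{q-2}$, and iterating this backwards shows that controlling $m$ derivatives at level $q$ requires controlling progressively more derivatives at the lower levels. This is why the inductive claim must be formulated with \emph{all} $t$–derivatives carrying the $\eps^{\,n}$ factor, rather than the operators alone; once that is done the per-step estimates are routine resolvent manipulations. I would also note that the vanishing condition $\partial_t^kH(t)|_{t=0}=0$ in {\bf Reg} plays no role in this proposition — it is used only afterwards, to identify $P_j^q(0)$ with $P_j(0)$ and the corresponding adiabatic evolutions at $t=0$ — so it may be omitted from the hypotheses here. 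The full details are carried out in \cite{JP}.
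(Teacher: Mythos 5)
Your proposal is correct and follows the same route the paper relies on: the paper gives no proof of this proposition but simply delegates it to \cite{JP, JP2}, and your induction — gap persistence for the self-adjoint $H^q=H-\i\eps K^{q-1}$, Riesz-integral/resolvent estimates for $P_j^q-P_j^{q-1}$ and its $t$-derivatives, Neumann series for the asymptotic expansions, and the strengthened hypothesis that \emph{all} $t$-derivatives of $K^n-K^{n-1}$ carry the factor $\eps^{\,n}$ — is precisely the superadiabatic iteration carried out in those references. Your side remark that the condition $\partial_t^kH(t)|_{t=0}=0$ in {\bf Reg} is not needed for this statement (it only serves later to identify $P_j^q(0)$ with $P_j(0)$) is also accurate.
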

\begin{rem}\label{remiter} i) The hierarchy above was actually designed to reach exponential accuracy in the adiabatic approximation, in an analytic framework, in which case it provides an estimate on the behaviour in $q$ of the constant $C_q$.\\
ii) At $t=0$, the second point of assumption {\bf Reg} ensures that for all $q\in \N^*$, $H^q(0)=H(0)$ and $P^q(0)=P(0)$.\\
iii) For any time $t$, $q\in\N^*$, $1\leq j\leq d$,  $P_j^q(t)=P_j(t)+\ode(\eps)$, by perturbation theory.
\end{rem}

Let $\eps<\eps_q$ and consider the unitary propagator $V_q(t,s)_{0\leq s,t\leq 1}$, defined as the solution to
\bea\label{vq}
\left\{
\begin{matrix}
\i\eps \partial_t V_q(t,s)=(H^q(t)+\i \eps K^q(t))V_q(t,s), \\ 
 V_q(s,s)=\mathbb I, \ \ 0\leq s, t \leq 1.\hfill
\end{matrix}\right.
\eea
As is well known, see \cite{K2, Kr}, $V_q$ also satisfies
\be\label{interq}
V_q(t,s)P_j^q(s)=P_j^q(t)V_q(t,s),\ \ 0\leq s, t\leq 1.
\ee 
Note that since $H^q=H-\i\eps K^{q-1}$, we get that 
\be
H^q(t)+\i\eps K^q(t)=H(t)+\i\eps (K^q(t)-K^{q-1}(t))
\ee
is a smooth perturbation of $H(t)$. Thus, the difference between $U(t,s)$ and $V_q(t,s)$ reads
\be\label{intform}
U(t,s)-V_q(t,s)=-\int_s^t V_q(t,r) (K^q(r)-K^{q-1}(r))U(r,s) dr.
\ee
This identity and the previous proposition immediately yield 
\be\label{adq}
\|U(t,s)-V_q(t,s)\| \leq C_q|t-s|\eps^q.
\ee
We improve the error term to $\ode(\eps^{q+1})$ by performing an integration by parts on (\ref{intform}) (see the Appendix), at the cost of slightly altering the definition of $V_q$: 
Let $\hat V_q(t,s)$ be the unitary  solution to 
\bea\label{vqhat}
\left\{
\begin{matrix}
\i\eps \partial_t \hat V_q(t,s)=(H^q(t)+\i \eps (K^q(t))+\cD_qK^{q-1}(t))\hat V_q(t,s), \\ 
 \hat V_q(s,s)=\mathbb I, \ \ 0\leq s, t \leq 1,\hfill
\end{matrix}\right.
\eea
where 
\be
\cD_qK^{q-1}(t)= \sum_{1\leq j\leq d}P^q_j(t)K^{q-1}(t)P^q_j(t),
\ee
and, by convention, $K^{-1}= 0$ to recover $\hat V_0=V$.
This allows us to get the following generalisations of (\ref{adiab0}) and (\ref{inter}), see \cite{JP2}. 
\begin{prop}\label{adiabH} Under assumptions {\bf Reg} and {\bf Spec},  for all $q\in\N$, there exists $\eps_q>0$ and $c_q<\infty$ such that  for all $\eps<\eps_q$, all $j\in \{1,\dots, d\}$, and for all  $(t,s)\in [0,1]^2$
\bea\label{interqr}
&&\hat V_q(t,s)P_j^q(s)=P_j^q(t)\hat V_q(t,s)\nonumber \\
&&\|U(t,s)-\hat V_q(t,s)\| \leq c_q\eps^{q+1}
\eea
\end{prop}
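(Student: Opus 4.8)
The two assertions are the higher-order counterparts of Lemma \ref{atqm} and of the intertwining relation (\ref{inter}), and I would follow the scheme of \cite{JP, JP2}. The inputs are the regularity and gap hypotheses {\bf Reg}, {\bf Spec} together with Proposition \ref{jpjmp}, which for $\eps<\eps_q$ gives the smoothness and uniform-in-$\eps$ boundedness of $H^n,P^n_j,K^n$ for $n\le q$, and the estimate $\|K^q-K^{q-1}\|=\ode(\eps^q)$, with the same for all derivatives; for $q=0$ one has $\hat V_0=V$ and the statement is just Lemma \ref{atqm}.

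First I would settle the intertwining relation, which is the soft part. Both $t\mapsto\hat V_q(t,s)P_j^q(s)$ and $t\mapsto P_j^q(t)\hat V_q(t,s)$ solve the linear equation $\i\eps\partial_t X=(H^q+\i\eps K^q+\cD_qK^{q-1})X$ with the common value $P_j^q(s)$ at $t=s$, hence coincide by uniqueness; for the second map one only has to verify $[H^q(t)+\i\eps K^q(t)+\cD_qK^{q-1}(t),P_j^q(t)]=\i\eps(P_j^q)'(t)$. This splits into three elementary facts: $[H^q(t),P_j^q(t)]=0$ since $P_j^q(t)$ is a spectral projector of $H^q(t)$; $[\cD_qK^{q-1}(t),P_j^q(t)]=0$ since $\cD_qK^{q-1}(t)=\sum_kP_k^q(t)K^{q-1}(t)P_k^q(t)$ is block-diagonal with respect to $\{P_k^q(t)\}$ by construction; and the Kato identity $[K^q(t),P_j^q(t)]=(P_j^q)'(t)$, which comes from $P(t)P'(t)P(t)\equiv0$ for any smooth projector, exactly as for (\ref{interk}).

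For the error bound I would start from the exact formula (\ref{intform}), $U(t,s)-V_q(t,s)=-\int_s^t V_q(t,r)\big(K^q(r)-K^{q-1}(r)\big)U(r,s)\,dr$, and decompose $K^q(r)-K^{q-1}(r)$ relative to the spectral projectors $\{P_j^q(r)\}$ of $H^q(r)$ into its block-diagonal and block-off-diagonal parts. Because $K^q$ is itself block-off-diagonal (again by $PP'P\equiv0$), the block-diagonal part of $K^q-K^{q-1}$ is precisely $-\cD_qK^{q-1}(r)$. The block-off-diagonal part is treated by the integration-by-parts mechanism of the Appendix (Lemma \ref{idint}, Corollary \ref{bddsmall}): the gap {\bf Spec} lets one solve the associated commutator equation with the fast generator $H^q$ by a bounded operator, so the corresponding integral becomes boundary terms plus a new integral, each carrying one extra power of $\eps$; combined with the $\ode(\eps^q)$ size from Proposition \ref{jpjmp} and the unitarity of $U$ and $V_q$, this part contributes $\ode(\eps^{q+1})$. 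The block-diagonal part $-\cD_qK^{q-1}$, on the other hand, commutes with the fast generator, is not improvable by integration by parts, and is what produces the mere $\ode(\eps^q)$ in (\ref{adq}); it is cancelled by replacing $V_q$ by $\hat V_q$, whose generator carries the extra term $+\cD_qK^{q-1}$. Writing the Duhamel formula for $U-\hat V_q$ relative to $\hat V_q$, invoking the intertwining relation of the first step so that the dynamical phases flanking $\cD_qK^{q-1}$ recombine, and closing the estimate with Gronwall's inequality (both $U$ and $\hat V_q$ being unitary, hence uniformly bounded), yields $\|U(t,s)-\hat V_q(t,s)\|\le c_q\eps^{q+1}$.

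The main obstacle is exactly the bookkeeping of this block-diagonal defect in the last step: one must check that, once $\cD_qK^{q-1}$ has been absorbed into the generator, the residual diagonal contribution in the Duhamel expansion is genuinely of order $\eps^{q+1}$ rather than only $\eps^q$. This is where block-diagonality of $\cD_qK^{q-1}$ with respect to the projectors that $\hat V_q$ intertwines, and the $\eps^q$-smallness of $K^q-K^{q-1}$ together with all its derivatives from Proposition \ref{jpjmp}, are used in an essential way; everything else is routine. (As recorded in Remark \ref{remiter}, the vanishing of the $t$-derivatives of $H$ at $t=0$ in {\bf Reg} moreover gives $P_j^q(0)=P_j(0)$, the property that makes this hierarchy usable in the adiabatic Dyson expansion of Section \ref{weakcoup}.)
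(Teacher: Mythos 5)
Your plan follows the same route the paper intends: the paper itself gives no detailed proof of Proposition \ref{adiabH} beyond setting up (\ref{intform})--(\ref{adq}) and citing \cite{JP2}, and your two steps (intertwining by ODE uniqueness from $[H^q,P^q_j]=0$, $[\cD_qK^{q-1},P^q_j]=0$ and the Kato identity $[K^q,P^q_j]=(P^q_j)'$; error bound by splitting $K^q-K^{q-1}$ into its block-diagonal part $-\cD_qK^{q-1}$ and an off-diagonal part treated by the integration-by-parts mechanism of Lemma \ref{idint}) are exactly that scheme. The intertwining half is complete and correct.

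For the error bound, however, the point you yourself flag as "the main obstacle" is a genuine gap, and as you have set it up the step fails. The cancellation you invoke is a matter of $\eps$-scaling: with the generator taken literally as $H^q+\i\eps K^q+\cD_qK^{q-1}$, the difference of the slow parts of the generators of $U$ and $\hat V_q$ is $K^{q-1}-K^q-\tfrac{1}{\i\eps}\cD_qK^{q-1}$, whose block-diagonal component is $\bigl(1-\tfrac{1}{\i\eps}\bigr)\cD_qK^{q-1}\neq 0$; since $\cD_qK^{q-1}=\ode(\eps^q)$ (because $P^{q-1}_jK^{q-1}P^{q-1}_j\equiv 0$ and $P^q_j-P^{q-1}_j=\ode(\eps^q)$ by Proposition \ref{jpjmp} and perturbation theory — an estimate you never state but need), this diagonal defect is of size $\eps^{q-1}$, cannot be improved by oscillations, and Gronwall then yields only $\ode(\eps^{q-1})$, not $\ode(\eps^{q+1})$. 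The argument closes only if the counterterm enters at order $\eps$, i.e. with generator $H^q+\i\eps\bigl(K^q+\cD_qK^{q-1}\bigr)$ — the reading forced by the claimed unitarity of $\hat V_q$ in (\ref{vqhat}), since $\cD_qK^{q-1}$ is anti-self-adjoint, and evidently the intended one behind the misplaced parenthesis. Then the slow-part difference is $K^{q-1}-K^q-\cD_qK^{q-1}$, which is exactly block-off-diagonal with respect to $\{P^q_j\}$ and, together with its derivative, of size $\ode(\eps^q)$; a mild variant of Lemma \ref{idint}/Corollary \ref{bddsmall} with fast part $-\i H^q$ (both slow parts bounded, both propagators unitary) then gives $\|U-\hat V_q\|\leq c_q\eps^{q+1}$ directly, with no separate Gronwall step. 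A further small slip: the block-diagonal part $\cD_qK^{q-1}$ need not commute with $H^q$ when the $\sigma_j(t)$ are not singletons; what the argument actually uses is only its block-diagonality (no phase gain from integration by parts), so this is harmless but should be phrased accordingly.
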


\begin{rem} \label{remt}
i) As a consequence, the quantum evolution follows the instantaneous subspace $P_j^q(t)\cH$, up to an error of order $\eps^{q+1}$: 
$\|P_k^q(t)U(t,s)P_j^q(s)\|=\ode(\eps^{q+1})$ if $j\neq k$.\\
ii) The loss of  factor $|t-s|$ stems from the integration by parts procedure, see (\ref{adq}). Again, for $s=0$, we have
$\|U(t,0)-\hat V_q(t,0)\| \leq c_q t \eps^{q+1}$.\\
iii) For $s=0$, $j\neq k$, $\|P_k^q(t)U(t,0)P_j(0)\|=\ode(t\eps^{q+1})$, since $P_j^q(0)=P_j(0)$.

\end{rem}

\subsection{Higher Order Adiabatic Dyson expansion}\label{return2}

Making use of the adiabatic approximation $\hat V_q(t,s)$ of $U(t,s)$ on $\cH$  leads to an approximation of $\cU^0(t,s)$ on $\cT(\cH)$ to order $\ode(\eps^{q+1})$ and to the improvement of Proposition \ref{propweak} given in Proposition  \ref{propweaksuper}.

Let $\eps\leq \eps_q$ and define the isometric operator on $\cT(\cH)$ (and on $\cB(\cH)$)
\be\label{defv0}
\cV^0_q(t,s)(\rho)=\hat V_q(t,s) \rho \hat V_q{}^*(t,s), \ \ \rho\in \cT(\cH).
\ee
Then, for $c_q$ given in Proposition \ref{adiabH}, we get  
\bea\label{vtilteq}
\|\cU^0(t,s)-\cV^0_q(t,s)\|_\tau\leq 2c_q\eps^{q+1},
\eea
and the same holds for the subordinate operator norm  on $\cB(\cH)$.
If $\dim P_j(0)<\infty$, then $P_j^q(t)$ belongs to $\cT(\cH)$ for all $t\in[0,1]$ and $q\geq 0$, so that 
\bea\label{superad}
\cU^0(t,s)(P_j^q(s))&=&P^q_j(t)+\ode(\eps^{q+1}), \\
\cU^0(t,0)(P_j(0))&=&P^q_j(t)+\ode(t\eps^{q+1}),
\eea
see Remark ii), \ref{remt}.
If $P_j(0)$ is not trace class, the estimates above hold in operator norm.
Consequently, the first equality in equation (\ref{dyson}) and the above yields the following estimate of the propagator $(\cU(t,s))_{0\leq s\leq t\leq 1}$:
\begin{prop}\label{propweaksuper} Under  assumptions {\bf Reg} and {\bf Spec}, for any $N\geq 1$, any $q\geq 1$, there exist $\eps_q>0$, $c_q<\infty$ (given in Propositions \ref{jpjmp} and \ref{adiabH}), such that for all $0\leq s\leq t \leq 1$,
for all $\eps<\eps_q$, all $g\geq 0$, the propagator 
$\cU(t,s)\in \cB(\cT(\cH))$ satisfies  (with the convention $s_0=t$),
\begin{align}\label{expge}
 \cU(t,s)&=\cV^0_q(t,s)\nonumber \\
 &+\sum_{n=1}^N(g/\eps)^n\int_s^t\int_s^{s_1}\dots\int_s^{s_{n-1}} \cV^0_q(t,s_1)\circ \cL^1_{s_1}\circ \cV^0_q(s_1,s_2)\circ \cL^1_{s_2}\dots \circ \cL^1_{s_n}\circ \cV^0_q(s_n,s) ds_{n}\dots ds_{2}ds_{1} \nonumber\\
 &+R^q_{N+1}(t,s,\eps,g)
 \end{align}
 where, with $L_1=\sup_{0\leq s\leq 1}\| \cL_s^1\|_\tau$,
 \begin{align}\label{1stq}
\|R^q_{N+1}(t,s,\eps,g)\|_\tau\leq 2 c_q\eps^{q+1}\e^{2(t-s)L_1(1+2 c_q\eps^{q+1})g/\eps}+\frac{(L_1(t-s))^{N+1}}{(N+1)!}(g/\eps)^{N+1}.
\end{align}
In particular, if $g/\eps\leq 1$ and $\eps^{q+1}\leq 1/(2c_q)$,
 \begin{align}
\|R^q_{N+1}(t,s,\eps,g)\|_\tau&\leq  2\e^{4L_1}\left(c_q\eps^{q+1}+((t-s)g/\eps)^{N+1}\right)\nonumber\\
&=\ode_q(\eps^{q+1}+(g/\eps)^{N+1}).
\end{align}
where the notation stresses the dependence in the sole order $q$ of the constants involved.
\end{prop}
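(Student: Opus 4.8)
The plan is to replay the proof of Proposition~\ref{propweak} essentially verbatim, with the first-order adiabatic approximant $\cV^0(t,s)$ replaced everywhere by the $q$-th order one $\cV^0_q(t,s)$ of (\ref{defv0}), and with the improved error $\ode(\eps^{q+1})$ of Proposition~\ref{adiabH} in place of $\ode(\eps)$. Fix $q\geq1$ and restrict to $\eps<\eps_q$, so that by Propositions~\ref{jpjmp} and~\ref{adiabH} the whole hierarchy $H^n,P^n_j,K^n$, $0\leq n\leq q$, and the unitary $\hat V_q(t,s)$ of (\ref{vqhat}) are well defined and (\ref{vtilteq}) gives
\[
\Delta_q:=\sup_{0\leq s\leq t\leq1}\|\cU^0(t,s)-\cV^0_q(t,s)\|_\tau\leq 2c_q\eps^{q+1}.
\]
First I would start from the exact, finite Dyson expansion (\ref{dyson}) of $\cU(t,s)-\cU^0(t,s)$, valid for all $\eps>0$, $g\geq0$, with the order-$(N+1)$ remainder term still containing the full propagator $\cU(s_{N+1},s)$, which is a contraction.

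Next I would replace, in the leading term and in each of the explicit order-$n$ terms ($1\leq n\leq N$), every factor $\cU^0$ by $\cV^0_q$, noting that $\cV^0_q$ is also isometric on $\cT(\cH)$ by (\ref{defv0}). Writing $\cU^0=\cV^0_q+(\cU^0-\cV^0_q)$ and expanding the resulting product of $n+1$ operators, using $\|\cL^1_{s_i}\|_\tau\leq L_1$ and integrating over the simplex $\{s\leq s_n\leq\cdots\leq s_1\leq t\}$ of volume $(t-s)^n/n!$, the trace-norm discrepancy produced at order $n$ is at most
\[
(g/\eps)^n\frac{((t-s)L_1)^n}{n!}\sum_{1\leq k\leq n+1}\binom{n+1}{k}\Delta_q^k
\leq (g/\eps)^n\frac{((t-s)L_1)^n}{n!}\,\Delta_q\,2^n(1+\Delta_q)^n,
\]
exactly the estimate (\ref{basesdya}) with $\Delta$ replaced by $\Delta_q$; the leading term ($n=0$) contributes just $\Delta_q$. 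Summing these over all $n\geq0$ yields the first term $2c_q\eps^{q+1}\e^{2(t-s)L_1(1+2c_q\eps^{q+1})g/\eps}$ of the bound (\ref{1stq}). The second term, $\frac{(L_1(t-s))^{N+1}}{(N+1)!}(g/\eps)^{N+1}$, is simply the trace-norm of the untouched order-$(N+1)$ remainder term of (\ref{dyson}), which contains $N+1$ factors $\cL^1$, the isometries $\cU^0$ and the contraction $\cU(s_{N+1},s)$. Collecting the two contributions defines $R^q_{N+1}$ and proves (\ref{1stq}).

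For the final, simplified bound I would impose $g/\eps\leq1$ and $\eps^{q+1}\leq1/(2c_q)$, so that $1+2c_q\eps^{q+1}\leq2$; then with $t-s\leq1$ and $\alpha^m/m!\leq\e^\alpha$ for all $m\geq1$, $\alpha>0$, the first term is $\leq 2c_q\e^{4L_1}\eps^{q+1}$ and the second is $\leq 2\e^{4L_1}((t-s)g/\eps)^{N+1}$, giving $\|R^q_{N+1}\|_\tau=\ode_q(\eps^{q+1}+(g/\eps)^{N+1})$, where the subscript records that the constants depend on $q$ only. I do not expect a genuine obstacle here: the only points needing care are keeping every estimate on the common domain $\eps<\eps_q$ where Propositions~\ref{jpjmp} and~\ref{adiabH} hold, tracking that the error constants $c_q$ depend on $q$ alone (as in Remarks~\ref{remiter} and~\ref{remt}), and---as in Remark~\ref{remu0v0new}~ii) and Remark~\ref{remt}~ii)---observing that keeping $\cU^0$ rather than $\cV^0_q$ in the leading term replaces $\eps^{q+1}$ by $g$ in the remainder, and that for $s=0$ one may replace $c_q$ by $t\,c_q$.
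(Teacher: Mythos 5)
Your proposal is correct and follows essentially the same route as the paper: the paper's proof likewise takes the Dyson expansion (\ref{dyson}), substitutes $\cV^0_q$ for $\cU^0$ in every explicit term, bounds the order-$n$ discrepancy exactly as in (\ref{basesdya}) with $\Delta_q=\sup\|\cU^0-\cV^0_q\|_\tau\leq 2c_q\eps^{q+1}$ in place of $\Delta$, sums the resulting series into the exponential factor, and keeps the untouched order-$(N+1)$ remainder for the second term of (\ref{1stq}), before simplifying under $g/\eps\leq1$, $\eps^{q+1}\leq1/(2c_q)$. No gaps.
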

\begin{rem}\label{remu0v0}
If one keeps $\cU^0$ instead of $\cV_q^0$ in the first term of the RHS of (\ref{expge}),  the  estimate on the remainder reads
$\|R^q_{N+1}(t,s,\eps,g)\|_\tau=\ode_q((g/\eps)(\eps^{q+1}+(g/\eps)^{N}))$. 
\end{rem}
\begin{proof}
We replace $\cU^0$ by its approximation $\cV^0_q$  in each term of the Dyson series, and collect the error terms. With $\Delta_q=\|\cU^0(t,s)-\cV_q^0(t,s)\|_\tau$, the trace norm of the difference of the term of order $n\geq 1$ in (\ref{dyson}) with that of order $n$ in (\ref{expge})  is bounded above as in (\ref{basesdya}) with $\Delta_q$ in place of $\Delta$.
Then, summing over all $n\in \N$ yields the first term in (\ref{1stq}), while the second one stems from the term of order $N+1$ in (\ref{dyson}).
 The second estimate is a consequence of  $g/\eps \leq 1$, $t-s\leq 1$, $1+2 c_q\eps^{q+1}\leq 2$ and  $\frac{\alpha^{m}}{m!}\leq e^{\alpha}$, for all $m\geq 1$, $\alpha>0$.
\qed
\end{proof}

Specialising to the leading order term in $g/\eps$, and taking into account Remark \ref{remu0v0}  above, we get 
\begin{cor} Under the assumptions of Proposition \ref{propweaksuper},  for $\eps\leq \tilde \eps_q= \min(\eps_q, 1/(2c_q)^{1/(q+1)})$ and  $g/\eps\leq 1$, 
\begin{align}\label{ordeg/esuper}
 \cU(t,s)&=\cU^0(t,s)+\frac{g}{\eps}\int_s^t\cV^0_q(t,s_1)\circ \cL^1_{s_1}\circ \cV^0_q(s_1,s) ds_{1} +\ode(g\eps^{q}+g^2/\eps^{2}).
 \end{align}
\end{cor}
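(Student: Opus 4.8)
The plan is to read this corollary off Proposition \ref{propweaksuper} with $N=1$, isolating the $\ode(g/\eps)$ contribution, exactly as Corollary \ref{coruv} was deduced from Proposition \ref{propweak}. First I would set $N=1$ in (\ref{expge}), obtaining
\[
\cU(t,s)=\cV^0_q(t,s)+\frac{g}{\eps}\int_s^t\cV^0_q(t,s_1)\circ\cL^1_{s_1}\circ\cV^0_q(s_1,s)\,ds_1+R^q_{2}(t,s,\eps,g),
\]
where, because $\eps\leq\tilde\eps_q$ forces $\eps^{q+1}\leq 1/(2c_q)$ and $g/\eps\leq 1$, the remainder obeys $\|R^q_2\|_\tau=\ode_q(\eps^{q+1}+(g/\eps)^2)$. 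What then remains is to replace the leading term $\cV^0_q(t,s)$ by $\cU^0(t,s)$.

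The one point that needs care is that this replacement costs $\|\cU^0(t,s)-\cV^0_q(t,s)\|_\tau\leq 2c_q\eps^{q+1}$ by (\ref{vtilteq}), and since $g\leq\eps$ the term $\eps^{q+1}$ is \emph{larger} than the target error $g\eps^q$. So I would not perform the swap on the truncated series; instead I would go back to the Dyson expansion (\ref{dyson}) itself, keep its $n=0$ term $\cU^0(t,s)$ untouched, replace $\cU^0$ by $\cV^0_q$ only inside the $n=1$ term, and bound the $n\geq 2$ tail separately. The replacement error in the $n=1$ term is controlled exactly as in (\ref{basesdya}), by $(g/\eps)(t-s)L_1(2\Delta_q+\Delta_q^2)=\ode_q((g/\eps)\eps^{q+1})=\ode_q(g\eps^{q})$ with $\Delta_q=\|\cU^0-\cV^0_q\|_\tau$; the tail $\sum_{n\geq 2}(g/\eps)^n((t-s)L_1)^n/n!$ is $\ode((g/\eps)^2)$ since $\cU^0$ is isometric and $\cU$ a contraction. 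This is precisely what Remark \ref{remu0v0} records: keeping $\cU^0$ in the first term of (\ref{expge}) turns the remainder estimate into $\ode_q((g/\eps)(\eps^{q+1}+(g/\eps)^N))$, which for $N=1$ reads $\ode_q(g\eps^q+g^2/\eps^2)$. So in practice I would simply invoke that remark.

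Putting the two pieces together gives
\[
\cU(t,s)=\cU^0(t,s)+\frac{g}{\eps}\int_s^t\cV^0_q(t,s_1)\circ\cL^1_{s_1}\circ\cV^0_q(s_1,s)\,ds_1+\ode(g\eps^q+g^2/\eps^2),
\]
which is (\ref{ordeg/esuper}). I do not expect a genuine obstacle: all the analytic work lives in Proposition \ref{propweaksuper} and Remark \ref{remu0v0}. The single subtle step is the one flagged above — one must retain the distinction between $\cU^0$ and $\cV^0_q$ in the leading term, since keeping $\cU^0$ exact there is exactly what yields the error $g\eps^q$ instead of the weaker $\eps^{q+1}$.
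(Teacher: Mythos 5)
Your proposal is correct and follows essentially the paper's own route: the corollary is obtained by taking $N=1$ in Proposition \ref{propweaksuper} and invoking Remark \ref{remu0v0}, whose $N=1$ error $\ode_q((g/\eps)(\eps^{q+1}+g/\eps))=\ode_q(g\eps^{q}+g^{2}/\eps^{2})$ is exactly your bound. Your observation that one must keep $\cU^0(t,s)$ exact in the leading term (rather than swapping $\cV^0_q\mapsto\cU^0$ after truncation, which would only give $\eps^{q+1}$) is precisely the point encoded in that remark, and your sketch of its justification via the telescoping estimate (\ref{basesdya}) on the $n\geq1$ terms is the intended argument.
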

\medskip

Let  $0\leq \rho_j\in \cT(\cH)$ be a state such that $\rho_j=P_j(0)\rho_jP_j(0)$, and recall the definition (\ref{lind}) of the dissipator.
For any $q\in \N$, the transition probability between $P_j^q(0)\cH = P_j(0)\cH$ and $P_k^q(t)\cH$, $j\neq k$, induced by the Lindbladian dynamics (\ref{lindeq}) reads $\tr(P^q_k(t)\cU(t,0)(\rho_j))$,
 since at initial time $s=0$, one has $P_j(0)=P^q_j(0)$.  
 Using (\ref{defv0}) and Proposition \ref{adiabH}, we have
 \be\label{rhofisuper}
 \cV^0_q(s,0) (\rho_j)=\hat V_q(s,0)\rho_j\hat V_q(0,s)=P^q_j(s)\hat V_q(s,0)\rho_j\hat V_q(0,s)P^q_j(s),
 \ee
so that with $P_j^q(t)P^q_k(t)=0$,  we get from (\ref{ordeg/esuper})
 \begin{align}\label{traproq}
 \tr(P^q_k(t)\cU(t,0)&(\rho_j))= \tr(P^q_k(t)\cU^0(t,0)(\rho_j))\\ 
 &+\frac{g}{\eps}\sum_{l\in I}\int_0^t \tr (P^q_k(s)\Gamma_l(s) \hat V_q(s,0)\rho_j \hat V_q(0,s)\Gamma_l^*(s)P^q_k(s))ds +\ode(g\eps^{q}+g^2/\eps^{2}).\nonumber
 \end{align}
The Hamiltonian adiabatic transition probability between these subspaces is of order $\eps^{2(q+1)}$ according to (\ref{interqr}), whereas the  effect of the environment  is of order $g/\eps$: 
due to (\ref{vtilteq}), $ \cV^0_q(t,0) (\rho_j)= \cV^0(t,0) (\rho_j)+\ode(\eps)$ in trace norm, so that by Lemma \ref{expadiab0} the dependence in $\eps$ of $\hat V_q(s,0)\rho_j\hat V_q(0,s)$ disappears to leading order when $\sigma_j(t)=\{e_j(t)\}$.   In case $P_j(0)$ has finite rank, choosing $\rho_j=P_j(0)/\dim(P_j(0))$ yields the  integrand (see iii), Remark \ref{remiter}) :
\be
 \tr (P^q_k(s)\Gamma_l(s)P^q_j(s)\Gamma_l^*(s)P^q_k(s))=\tr (P_k(s)\Gamma_l(s)P_j(s)\Gamma_l^*(s)P_k(s))+\ode(\eps).
 \ee
Hence,  the correction term prevents the solution from following the instantaneous subspace $P^q_j(t)$ up to an error of order $\eps^{(q+1)}$, unless $g\simeq \eps^{q+2}$. 
\medskip

The coherences with respect to the iterated projectors of the integral term in (\ref{ordeg/esuper}) also vanish to leading order, thanks to (\ref{vtilteq}) and Lemma \ref{vancor2}:
\begin{lem}\label{vancor2super} Assume {\bf Reg}, {\bf Spec} and let $\rho_j=P_j(0)\rho_j P_j(0)$ be a state and suppose $\sigma_j(t)=\{ e_j(t)\}$ for all $t\in [0,1]$.
For any $1\leq n\neq m\leq d$, and  $q\geq 0$, for $\eps$ small enough
\be\label{cohvasuper}
\frac{g}{\eps}P^q_n(t)\int_0^t\cV^0_q(t,s)\circ \cL^1_{s}\circ \cV^0_q(s,0) (\rho_j)ds\, P^q_m(t)=\ode (g).
\ee
\end{lem}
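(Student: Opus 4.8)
The plan is to reduce the claim to the leading-order coherence bound already established in Lemma \ref{vancor2}: I would replace the higher-order objects $\cV^0_q$ and $P^q_n(t)$, $P^q_m(t)$ by their $q=0$ counterparts $\cV^0$, $P_n(t)$, $P_m(t)$ up to errors of trace-norm size $\ode(\eps)$, and then note that such errors, multiplied by the prefactor $g/\eps$, are $\ode(g)$. First I would compare the two propagators: since $\|\cU^0(t,s)-\cV^0_q(t,s)\|_\tau\le 2c_q\eps^{q+1}$ by (\ref{vtilteq}) and $\|\cU^0(t,s)-\cV^0(t,s)\|_\tau\le 2c\eps$ by (\ref{u0v0sim}), the triangle inequality gives $\|\cV^0_q(t,s)-\cV^0(t,s)\|_\tau=\ode(\eps)$ uniformly in $0\le s\le t\le 1$ for $\eps$ small. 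Inserting this into the identity $\cV^0_q\circ\cL^1_s\circ\cV^0_q-\cV^0\circ\cL^1_s\circ\cV^0=(\cV^0_q-\cV^0)\circ\cL^1_s\circ\cV^0_q+\cV^0\circ\cL^1_s\circ(\cV^0_q-\cV^0)$ (arguments suppressed), using $\|\cV^0_q\|_\tau=\|\cV^0\|_\tau=1$ and $\|\cL^1_s\|_\tau\le L_1$, and integrating over $s\in[0,t]$ against the state $\rho_j$, I get in trace norm
\be
\int_0^t\cV^0_q(t,s)\circ\cL^1_s\circ\cV^0_q(s,0)(\rho_j)\,ds=X(\eps)+\ode(\eps),\qquad X(\eps):=\int_0^t\cV^0(t,s)\circ\cL^1_s\circ\cV^0(s,0)(\rho_j)\,ds,
\ee
where $\|X(\eps)\|_1\le t\,L_1=\ode(1)$.

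Next I would pass to the unperturbed projectors. By perturbation theory $P^q_n(t)=P_n(t)+\ode(\eps)$ and $P^q_m(t)=P_m(t)+\ode(\eps)$ uniformly in $t$ (Remark \ref{remiter} iii)), and since $X(\eps)$ and the remainder above are uniformly bounded in trace norm, replacing the iterated projectors by the unperturbed ones costs only another $\ode(\eps)$:
\be
P^q_n(t)\int_0^t\cV^0_q(t,s)\circ\cL^1_s\circ\cV^0_q(s,0)(\rho_j)\,ds\,P^q_m(t)=P_n(t)\,X(\eps)\,P_m(t)+\ode(\eps).
\ee
Multiplying by $g/\eps$ and applying Lemma \ref{vancor2}, whose hypotheses ({\bf Reg}, {\bf Spec}, $\rho_j=P_j(0)\rho_jP_j(0)$ and $\sigma_j(t)=\{e_j(t)\}$) coincide with the ones assumed here, gives $\frac g\eps\,P_n(t)X(\eps)P_m(t)=\ode(g)$; since also $\frac g\eps\cdot\ode(\eps)=\ode(g)$, adding the two contributions yields (\ref{cohvasuper}).

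I do not expect a genuine obstacle: the argument is bookkeeping of $\ode(\eps)$ errors against the factor $g/\eps$, and all the estimates used are uniform in $s,t$ thanks to the uniform-in-$\eps$ asymptotic expansions of the $P^q_j$ and of $\hat V_q$ in Propositions \ref{jpjmp} and \ref{adiabH}. The one pitfall worth flagging is that one should \emph{not} attempt an integration by parts directly against $\cV^0_q(s,0)(\rho_j)=\hat V_q(s,0)\rho_j\hat V_q(0,s)$, which carries the fast oscillations of $\hat V_q$ and has $s$-derivative of order $1/\eps$; routing everything through the $q=0$ statement of Lemma \ref{vancor2}, whose own (appendix) proof absorbs those oscillations once and for all, circumvents this entirely.
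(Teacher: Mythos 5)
Your proposal is correct and follows essentially the same route as the paper, which establishes (\ref{cohvasuper}) precisely by combining the comparison estimate (\ref{vtilteq}) (together with $P^q_n(t)=P_n(t)+\ode(\eps)$) with the $q=0$ coherence bound of Lemma \ref{vancor2}, letting the $\ode(\eps)$ replacement errors be absorbed by the prefactor $g/\eps$ into $\ode(g)$. Your explicit bookkeeping of the propagator and projector replacements, and the warning against integrating by parts directly against the oscillatory $\hat V_q$, simply spell out the details the paper leaves implicit.
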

Without going into the details, we note that a similar result holds for each term in (\ref{expge}). 
\medskip

Let us close this section by justifying the adiabatic expressions used throughout the paper for the populations and coherences of $\cU^0(t,0)(\rho_j)$, making use of the hierarchy (\ref{hier}).
\medskip

\noindent
{\bf Proof of Proposition \ref{propuread}:}

\noindent
Thanks to (\ref{vtilteq}) and Proposition \ref{adiabH} for $q=2$ we have 
\begin{align}
 P_k(t)\cU^0(t,0)(\rho_j)P_k(t)&=P_k(t)P_j^2(t)\hat V_2(t,0)\rho_j \hat V_2(0,t) P_j^2(t)P_k(t)+\ode(\eps^3)\nonumber\\
 &=P_k(t)P_j^2(t)\cV_2^0(t,0)(\rho_j)P_j^2(t)P_k(t)+\ode(\eps^3).
 \end{align}
For $j\neq k$, we have (dropping the variable $t$ in the notation) 
\begin{align}
P_kP_j^2&=P_k(P_j^2-P_j)=P_k(P_j^2-P_j^1+P_j^1-P_j).
\end{align}
By  perturbation theory see {\it e.g.} \cite{K2}, Proposition \ref{jpjmp} implies for $\eps$ small enough, 
\be 
P_j^q-P^{q-1}_j=\ode(H^{q}-H^{q-1})=\ode(\eps(K^{q-1}-K^{q-2}))=\ode(\eps^{q}),
\ee
so that 
\begin{align}
P_kP_j^2&=P_k(P_j^1-P_j)+\ode(\eps^2).
\end{align}
Then, $H^1=H-\i\eps K$ with $K$ given by (\ref{multik}) yields,
\be
P_j^1-P_j=-\frac{\eps}{2\pi}\oint_{\gamma_j}R(z)KR(z)dz+\ode(\eps^2)
\ee
so that making use of (\ref{vtilteq}), to write $\cV_2^0(t,0)(\rho_j)=\cV^0(t,0)(\rho_j)+\ode(\eps)$, we have
\begin{align}
P_kP_j^2\cV_2^0(\rho_j)P_j^2P_k&=-\frac{\eps^2}{(2\pi)^2}P_k \oint_{\gamma_j}R(z)P_k'R(z)dz \, \cV^0(\rho_j) \oint_{\gamma_j}R(z)P_k'R(z)dz P_k +\ode(\eps^3).
\end{align}
Hence we get (\ref{puread}) with $\cV^0(t,0)(\rho_j)$ in place of $\tilde \rho_j(t)$, as in Remark \ref{remtiltra}. Finally, assuming
$\sigma_j(t)=\{e_j(t)\}$, Lemma \ref{expadiab0} yields (\ref{puread}).
In case  $HP_k=e_kP_k$ and $HP_j=e_jP_j$, so that $R(z)P_n=P_n/(e_n-z)$ for $n\in \{j,k\}$ and $z\in \rho(H)$, a direct application of Cauchy formula yield (\ref{pureadev}). 

The expressions (\ref{cohmn}) for the coherences are proven quite similarly. 
\qed

 \section{Appendix: Integration by Parts}\label{ipp}
 
 We present here a reformulation of the integration by parts argument used in \cite{ASY} to prove the adiabatic theorem of quantum mechanics,  suited to our setup.
 
Let $\cZ$ be a Banach space and assume $\cG:[0,1]\ra \cB(\cZ)$, $\cK:[0,1]\ra \cB(\cZ)$ are bounded operator valued $C^\infty$ functions on $[0,1]$, in the norm sense.
Let $\eps >0$, and consider the two-parameter propagators $(\cX(t,s))_{1\leq s\leq t\leq 1}$ and $(\cY(t,s))_{1\leq s\leq t\leq 1}$, solution to the equations
 \begin{align}\label{x}
\left\{\begin{matrix}
\eps \partial_t\cX(t,s)=\cG(t)\cX(t,s), \hfill\cr
\cX(s,s)=\un, \ \ 0\leq s \leq t \leq 1,\hfill 
\end{matrix} \right.
\end{align}
and 
\begin{align}\label{y}
\left\{\begin{matrix}
\eps \partial_t\cY(t,s)=(\cG(t)+\eps \cK(t))\cY(t,s), \cr
\cY(s,s)=\un, \ \ 0\leq s \leq t \leq 1.\hfill 
\end{matrix} \right.
\end{align}
The smooth propagators $\cX(t,s)$ and $\cY(t,s)$ are determined by the corresponding Dyson series, both depend on $\eps>0$ with norms that diverge as $\eps\ra 0$, {\it a priori }.
Moreover, they satisfy the integral relation
\be\label{intppadiab}
\cX(t,r)=\cY(t,r)-\int_r^t \cY(t,s) \cK(s) \cX(s,r)ds, \ \ \forall 1\geq t\geq r \geq 0.
\ee

Assume the existence of gaps in the spectrum of $\cG(t)$, uniformly in $t\in[0,1]$. For $d\in \N^*$,
\be
\sigma(\cG(t))=\cup_{1\leq j\leq d}\, \sigma_j(t)\subset \C,  \ \  \inf_{t\in[0,1], 1\leq j\neq k\leq d}{\rm dist} (\sigma_j(t),\sigma_k(t))\geq G >0.
\ee  
Consider the corresponding spectral projector 
\be\label{gari}
\cP_j(t)=-\frac{1}{2 \pi \i}\int_{\gamma_j} (\cG(s)-z)^{-1}dz,
\ee
where $\gamma_j$ is a simple loop in $\rho(\cG(t))$, the resolvent set of $\cG(t)$, encircling $\sigma_j(t)$ and such that for all $k\neq j$, $\mbox{int}\gamma_j\cap \sigma_k(t)=\emptyset$. 
For $\cB:[0,1]\ra \cB(\cZ)$, a smooth bounded operator valued function, define for any $t\in[0,1]$
\be\label{oprb}
\cR_j(\cB)(t)=-\frac{1}{2\pi\i } \oint_{\gamma_j} (\cG(t)-z)^{-1}\cB(t)(\cG(t)-z)^{-1}dz,
\ee
with the same loop $\gamma_j$ as in (\ref{gari}). This operator is smooth as well, and satisfies the identity
\be\label{idgr}
[\cG(t), \cR_j(\cB)(t)]=[\cB(t),\cP_j(t)].
\ee
\begin{rem}\label{rem81} If $\sigma_k(t)=\{g_k(t)\}$ for all $1\leq k\leq d$, $(\cG(t)-z)^{-1}=\sum_{1\leq k \leq d} \cP_k(t)/(g_k(t)-z)$, and 
\be\label{rbdis}
\cR_j(\cB)(t)=\sum_{1\leq k\leq d\atop k\neq j} \frac{\cP_j(t)\cB(t)\cP_k(t)+\cP_k(t)\cB(t)\cP_j(t)}{g_k(t)-g_j(t)}.
\ee
\end{rem}

\begin{lem}\label{idint}
Suppose $\cK(t)$ is off-diagonal for all $t\in [0,1]$, {\it i.e.} s.t. $\cP_j(t)\cK(t)\cP_j(t)\equiv 0$, $\forall 1\leq j\leq d$. Then
\begin{align}\label{formulibp}
\cX(t,r)&-\cY(t,r)= \frac12\sum_{1\leq j\leq d} \eps \Big(\cR_j([\cK, \cP_j])(t)\cX(t,r)- \cY(t,r) \cR_j([\cK, \cP_j])(r)\Big)\\
&+\frac12\sum_{1\leq j\leq d} \eps  \int_r^t  \Big\{  \cY(t,s)\cK(s)[\cG(s), \cR_j([\cK, \cP_j])(s)] \cX(s,r) -\cY(t,s) (\partial_s\cR_j([\cK, \cP_j])(s))\cX(s,r) \nonumber
\Big\} ds.
\end{align}
\end{lem}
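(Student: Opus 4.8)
The starting point is the Duhamel relation (\ref{intppadiab}), $\cX(t,r)-\cY(t,r)=-\int_r^t \cY(t,s)\,\cK(s)\,\cX(s,r)\,ds$, obtained at once by differentiating $s\mapsto\cY(t,s)\cX(s,r)$. The plan is the standard integration-by-parts device: turn the integrand into a total $s$-derivative, which integrates to boundary terms carrying a factor $\eps$, plus remainders that are themselves $\ode(\eps)$. The only analytic facts needed are that $\cX$ and $\cY$ are strongly differentiable in both arguments, with $\eps\,\partial_s\cX(s,r)=\cG(s)\cX(s,r)$ from (\ref{x}) and, in the second (backward) argument, $\eps\,\partial_s\cY(t,s)=-\cY(t,s)\big(\cG(s)+\eps\cK(s)\big)$ from (\ref{y}), together with $\cX(r,r)=\cY(t,t)=\un$; nothing uses self-adjointness, so the argument stays in the stated Banach-space setting.

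The decisive step is the algebraic identity that trades $\cK$ for a commutator with $\cG$. Using (\ref{idgr}) with $\cB=[\cK,\cP_j]$ gives, for each $s$ and each $j$, $[\cG(s),\cR_j([\cK,\cP_j])(s)]=[[\cK(s),\cP_j(s)],\cP_j(s)]$. Expanding the double commutator,
\[
[[\cK,\cP_j],\cP_j]=\cK\cP_j+\cP_j\cK-2\,\cP_j\cK\cP_j ,
\]
and invoking the off-diagonality hypothesis $\cP_j\cK\cP_j\equiv0$ together with $\sum_{1\le j\le d}\cP_j\equiv\un$, summation over $j$ collapses the first two groups and leaves
\[
\cK(s)=\tfrac12\sum_{1\le j\le d}[\cG(s),\cR_j([\cK,\cP_j])(s)] .
\]
The uniform gap hypothesis on $\sigma(\cG(s))$ and the $C^\infty$ regularity of $\cG$ and $\cK$ guarantee that each $s\mapsto\cR_j([\cK,\cP_j])(s)$ and its derivative are norm-bounded and smooth on $[0,1]$, via the contour-integral formula (\ref{oprb}).

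Writing $\cR_j(s):=\cR_j([\cK,\cP_j])(s)$, I would substitute this expression for $\cK(s)$ into the Duhamel integrand and integrate by parts term by term. Differentiating $s\mapsto\cY(t,s)\cR_j(s)\cX(s,r)$ and inserting the two evolution equations yields
\[
\cY(t,s)\,[\cG(s),\cR_j(s)]\,\cX(s,r)=-\eps\,\partial_s\!\big(\cY(t,s)\cR_j(s)\cX(s,r)\big)+\eps\,\cY(t,s)\big(\partial_s\cR_j(s)\big)\cX(s,r)-\eps\,\cY(t,s)\cK(s)\cR_j(s)\cX(s,r) .
\]
Summing over $j$, dividing by $2$ and integrating over $s\in[r,t]$, the total-derivative term produces the boundary contribution, which collapses via $\cY(t,t)=\cX(r,r)=\un$ to $\tfrac{\eps}{2}\sum_j\big(\cR_j(t)\cX(t,r)-\cY(t,r)\cR_j(r)\big)$, while the two remaining integrals assemble into the integral terms of (\ref{formulibp}); collecting everything gives the stated identity.

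The computation is essentially mechanical and there is no genuine obstacle; the care points are the sign in the backward equation for $\cY(t,s)$, the evaluation of the boundary term (where $\cX(r,r)=\cY(t,t)=\un$ enter), the verification of (\ref{idgr}) by the standard resolvent manipulation (writing $\cG\cR_j(\cB)-\cR_j(\cB)\cG$ as a contour integral of $\cB(\cG-z)^{-1}-(\cG-z)^{-1}\cB$), and the boundedness of $\cR_j$ and $\partial_s\cR_j$ — all immediate from the gap and the smoothness. The off-diagonality of $\cK$ is used exactly once, to discard the $\cP_j\cK\cP_j$ terms so that $\sum_j[[\cK,\cP_j],\cP_j]=2\,\cK$; the explicit $\eps$ in front of every term on the right of (\ref{formulibp}) is precisely the gain the procedure delivers, which is what makes the identity the workhorse of the adiabatic estimates used elsewhere in the paper.
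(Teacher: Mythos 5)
Your argument is correct and is essentially the paper's own proof: the same Duhamel relation (\ref{intppadiab}), the same identity $\cK=\tfrac12\sum_j[\cG,\cR_j([\cK,\cP_j])]$ obtained from (\ref{idgr}) together with the off-diagonality of $\cK$ and $\sum_j\cP_j=\un$, and the same integration by parts on $s\mapsto \cY(t,s)\cR_j([\cK,\cP_j])(s)\cX(s,r)$, with the backward equation for $\cY(t,s)$ and the boundary values $\cX(r,r)=\cY(t,t)=\un$ handled correctly. One point deserves attention: your (correct) displayed identity produces the integrand term $\cY(t,s)\cK(s)\cR_j([\cK,\cP_j])(s)\cX(s,r)$, whereas formula (\ref{formulibp}) as printed --- and the corresponding intermediate display in the paper's proof --- has $\cY(t,s)\cK(s)[\cG(s),\cR_j([\cK,\cP_j])(s)]\cX(s,r)$, i.e.\ an extra commutator with $\cG$. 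These are not the same: after the sum over $j$ with the prefactor $\tfrac12$, the printed version amounts to $\cY\cK^2\cX$ while your version gives $\cY\cK\bigl(\tfrac12\sum_j\cR_j([\cK,\cP_j])\bigr)\cX$, and a check in the constant-coefficient case (expanding $\cX(t,r)$ and $\cY(t,r)$ in powers of $t-r$ and matching orders) confirms that your version is the correct one; the extra bracket in the paper appears to be a slip propagated from its intermediate display into the statement. So your closing claim that the terms ``assemble into the integral terms of (\ref{formulibp})'' holds only after this correction of the printed formula. Since both versions carry the overall factor $\eps$ with uniformly bounded (respectively $\ode(1/g)$-bounded) coefficients, Corollary \ref{bddsmall}, Proposition \ref{je<g} and the subsequent estimates are unaffected.
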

\begin{proof}
The operator $\cK$ being off-diagonal and  (\ref{idgr}) give 
\be
\cK(t)=\frac12\sum_{1\leq j\leq d}\big[[\cK(t), \cP_j(t)\big],\cP_j(t)]=\frac12\sum_{1\leq j\leq d}\big[\cG(t), \cR_j([\cK, \cP_j])(t)\big].
\ee
Hence, using (\ref{intppadiab}), (\ref{x}) and (\ref{y}),
\begin{align}
\cX(t,r)-\cY(t,r)
=&-\frac12\sum_{1\leq j\leq d}\int_r^t \cY(t,s) \big[\cG(s), \cR_j([\cK, \cP_j])(s)\big] \cX(s,r)ds
\end{align}
where, for each integral in the summand
\begin{align}
-\int_r^t \cY(t,s) [&\cG(s), \cR_j([\cK, \cP_j])(s)] \cX(s,r)ds\nonumber\\
=&\eps \int_r^t  \Big\{(\partial_sY(t,s)) \cR_j([\cK, \cP_j])(s)\cX(s,r) + \cY(t,s)\cK(s)\big[\cG(s), \cR_j([\cK, \cP_j])(s)\big] \cX(s,r) \nonumber\\ 
&+ \cY(t,s) \cR_j([\cK, \cP_j])(s)\partial_s\cX(s,r) \Big\} ds.
\end{align}
Thanks to the smoothness of all operators in the integrand,  we have
\begin{align}
(\partial_s\cY(t,s) )&\cR_j([\cK, \cP_j])(s)\cX(s,r)+\cY(t,s) \cR_j([\cK, \cP_j])(s)\partial_s\cX(s,r)\nonumber\\
&=
\partial_s(\cY(t,s) \cR_j([\cK, \cP_j])(s)\cX(s,r))
-\cY(t,s) (\partial_s\cR_j([\cK, \cP_j])(s))\cX(s,r),
\end{align}
which yields the sought for identity. \qed
\end{proof}

As a corollary of Lemma (\ref{idint}), if one of the propagators $(\cX(t,s))_{0\leq s\leq t\leq 1}$ or $(\cY(t,s))_{0\leq s\leq t\leq 1}$ is uniformly bounded in $\eps$, so is the other, and their difference goes to zero with $\eps$:

\begin{cor}\label{bddsmall}
Assume  $\exists\ \eps_1>0,  C_1<\infty$ such that $\sup_{0<\eps\leq \eps_1\atop 0\leq s\leq t\leq 1}\|\cX(t,s)\|\leq C_1$. Then, $\exists \ \eps_2>0,  C_2<\infty$ such that $\sup_{0<\eps\leq \eps_2 \atop 0\leq s\leq t\leq 1}\|\cY(t,s)\|\leq C_2$. The same statement holds for $\cX$ and $\cY$ exchanged.
Moreover,  $\exists\  C_3<\infty$ such that for all $ \eps <\eps_2$, 
\be\label{difgoze}
\sup_{0\leq s\leq t\leq 1}\|\cX(t,s)-\cY(t,s)\|\leq C_3 \eps,
\ee
and, whenever $\cK(0)=0$, there exists $C_4<\infty$ so that for all $t\in[0,1]$
\be\label{laboun}
\|\cX(t,0)-\cY(t,0)\|\leq C_4 t \eps.
\ee
\end{cor}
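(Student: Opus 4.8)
The plan is to run a bootstrap argument directly on the integration-by-parts identity (\ref{formulibp}) of Lemma \ref{idint}. The crucial observation is that every operator-valued coefficient appearing on the right-hand side of (\ref{formulibp}) — the projectors $\cP_j(t)$, the operators $\cR_j([\cK,\cP_j])(t)$ and $\partial_t\cR_j([\cK,\cP_j])(t)$, the commutator $[\cG(t),\cR_j([\cK,\cP_j])(t)]$, which by (\ref{idgr}) equals $[[\cK(t),\cP_j(t)],\cP_j(t)]$, and $\cK(t)$ itself — is $C^\infty$ on $[0,1]$ and, most importantly, \emph{independent of $\eps$}; hence all of these are bounded in norm by a single constant $M$, uniformly on $[0,1]$. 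Only the propagators $\cX$ and $\cY$ depend on $\eps$, and for each fixed $\eps>0$ they are a priori norm-bounded (their generators being bounded), so that $X^*(\eps):=\sup_{0\leq s\leq t\leq 1}\|\cX(t,s)\|$ and $Y^*(\eps):=\sup_{0\leq s\leq t\leq 1}\|\cY(t,s)\|$ are finite; the whole game is to bound one uniformly in $\eps$ given that the other is.

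First I would establish the transfer of uniform boundedness. Assume $X^*(\eps)\leq C_1$ for $0<\eps\leq\eps_1$. Solving (\ref{formulibp}) for $\cY(t,r)$ and taking norms, the boundary term contributes at most $\tfrac{\eps}{2}dM\big(C_1+\|\cY(t,r)\|\big)$ and the integral term, whose range $[r,t]$ lies in $[0,1]$, contributes at most $\eps\,B\,Y^*(\eps)$ for a constant $B$ depending only on $M,d,C_1$; taking the supremum over $0\leq r\leq t\leq 1$ yields $Y^*(\eps)\leq C_1\big(1+\tfrac{\eps}{2}dM\big)+\eps\,B'\,Y^*(\eps)$ with $B'=B'(M,d,C_1)$. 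Since $Y^*(\eps)<\infty$, for $\eps\leq\eps_2:=\min\big(\eps_1,1/(2B')\big)$ this self-improves to $Y^*(\eps)\leq C_2:=2C_1\big(1+\tfrac{\eps_1}{2}dM\big)$. The roles of $\cX$ and $\cY$ may be exchanged because $\cX$ also enters (\ref{formulibp}) only linearly, through $\cR_j([\cK,\cP_j])(t)\cX(t,r)$ and through $\cX(s,r)$ under the integral, so the identical bootstrap bounds $X^*$ in terms of a uniform bound on $\cY$.

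With both $\cX$ and $\cY$ bounded by $C:=\max(C_1,C_2)$ for $\eps\leq\eps_2$, estimate (\ref{difgoze}) then follows by simply reading off (\ref{formulibp}): every one of its terms carries an explicit prefactor $\eps$ and involves, besides $\cX,\cY$, only the uniformly bounded $\eps$-independent coefficients, so $\|\cX(t,r)-\cY(t,r)\|\leq C_3\eps$ with, e.g., $C_3=dMC+\tfrac{d}{2}(M^2+M)C^2$. For the sharper bound (\ref{laboun}) I would add the hypothesis $\cK(0)=0$ and use (\ref{formulibp}) at $r=0$. Then $[\cK(0),\cP_j(0)]=0$, hence $\cR_j([\cK,\cP_j])(0)=0$, which annihilates the boundary term at $r=0$; and since $\cR_j([\cK,\cP_j])$ is $C^1$ on $[0,1]$ and vanishes at $0$, the fundamental theorem of calculus gives $\|\cR_j([\cK,\cP_j])(t)\|\leq t\sup_{[0,1]}\|\partial_s\cR_j([\cK,\cP_j])\|$, so the remaining boundary term $\tfrac{\eps}{2}\sum_j\cR_j([\cK,\cP_j])(t)\cX(t,0)$ is $O(t\eps)$; the integral term is over $[0,t]$ with bounded integrand and is therefore $O(t\eps)$ as well, whence $\|\cX(t,0)-\cY(t,0)\|\leq C_4 t\eps$.

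The only genuinely delicate point is the bootstrap in the boundedness step: one must note that $Y^*(\eps)$ (resp. $X^*(\eps)$) is finite for each fixed $\eps$ before absorbing it into the left-hand side, and that the constants multiplying and controlling it are themselves $\eps$-independent; that is precisely what turns an inequality of the form $Y^*\leq A+\eps B'Y^*$ into a uniform bound for small $\eps$. Everything else is bookkeeping with the uniform bound $M$, and the passage from $O(\eps)$ to $O(t\eps)$ near $t=0$ is driven solely by the vanishing of $\cR_j([\cK,\cP_j])$ at $t=0$ under the hypothesis $\cK(0)=0$. (This is the same mechanism that underlies Remark \ref{ctoct} and the $g$-uniform versions used later in Propositions \ref{je<g} and Lemma \ref{adiag0}.)
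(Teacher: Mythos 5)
Your proof is correct and follows essentially the same route as the paper: the same bootstrap on the identity (\ref{formulibp}) to transfer the uniform bound (absorbing the finite, $\eps$-dependent supremum of the other propagator for $\eps$ small), the same direct reading of (\ref{formulibp}) for the $\ode(\eps)$ estimate, and the same use of $\cR_j([\cK,\cP_j])(0)=0$ together with the fundamental theorem of calculus to get the $\ode(t\eps)$ bound when $\cK(0)=0$. Your explicit remark that $Y^*(\eps)$ is finite for each fixed $\eps$ before absorption is a point the paper leaves implicit, but the argument is the same.
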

\begin{rem}
If both $\cX(t,s)$ and $\cY(t,s)$ are a priori uniformly bounded, estimate (\ref{difgoze}) holds for all $\eps$.
\end{rem}
\begin{proof}
Set 
\begin{align}
2C_0=\max\Big(& \sum_{1\leq j\leq d}\sup_{0\leq s\leq 1}\|\cR_j([\cK, \cP_j])(s)\|, \sum_{1\leq j\leq d}\sup_{0\leq s\leq 1}\|\partial_s \cR_j([\cK, \cP_j])(s)\|,
 \nonumber\\
& \sum_{1\leq j\leq d}\sup_{0\leq s\leq 1}\|\cK(s)\big[\cG(s), \cR_j([\cK, \cP_j])(s)\big]\|\Big)
\end{align}
and consider $\eps<\eps_1$. Lemma (\ref{idint}) yields the bound
\begin{align}
\|\cY(t,r)\|&\leq C_1+\eps C_0 (\|\cY(t,r)\|+C_1)+\eps 2C_0C_1\sup_{0\leq s \leq t\leq 1} \|\cY(t,s)\|
\end{align}
so that, taking the supremum over $0\leq r\leq t\leq 1$ and for $\eps < \min (\eps_1, 1/(2C_0(1+2C_1))) := \eps_2$, we get in turn
%
%
\begin{align}
\sup_{0\leq r \leq t\leq 1}\|\cY(t,r)\|&\leq \frac{1}{(1-\eps C_0(1+2C_1))}(C_1(1+\eps C_0))
\leq C_1\frac{3+4C_1}{1+2C_1}\equiv C_2.
\end{align}
Then, inserting this estimate into (\ref{formulibp}), one gets, uniformly in $0\leq s\leq t\leq 1$, 
\be
\|\cX(t,s)-\cY(t,s)\|\leq C_3 \eps,
\ee
with $C_3=C_0(C_1+C_2+2C_1C_2)$.
Finally, for the initial time $s=0$, the integrated contribution in (\ref{formulibp}) reduces to $\frac12\sum_{1\leq j\leq d} \eps \cR_j([\cK, \cP_j])(t)\cX(t,0)$ when $\cK(0)=0$,
and since either $\cR_j([\cK, \cP_j])(t)=0$ or
\be
\cR_j([\cK, \cP_j])(t)=\int_0^t\partial_s \cR_j([\cK, \cP_j])(s)ds,
\ee 
we have in any case
\be
\Big\| \frac12\sum_{1\leq j\leq d}\cR_j([\cK, \cP_j])(t)\Big\|\leq tC_0.
\ee
The integral term in (\ref{formulibp}) is of order  $\eps t$, so that the  bound (\ref{laboun}) holds with $C_4=C_0C_1(1+2C_2)$.

The fact that $\cX$ and $\cY$ can be exchanged in all arguments above follows from  the structure of the RHS of (\ref{formulibp}).
\qed
\end{proof}

\medskip

\noindent
{\bf Proof  of  Lemma \ref{atqm}:}\\
We briefly prove estimate (\ref{adiab0}). 
Here the Banach space is $\cZ=\cH$, the propagators are $\cX=U$, $\cY=V$, and the generators are constructed with $\cG=-iH$ and $\cK=K=\sum_{1\leq j\leq d} P'_jP_j$.

Using $P(t)P'(t)P(t)\equiv 0$ for any smooth projector $P(t)$, one has the identities $P_j(t)K(t)P_j(t)\equiv 0$, for all $1\leq j\leq d$, and actually both $U$ and $V$ are bounded {\it a priori}, since they are unitary. Moreover, $K(0)=0$, under {\bf Reg}. Hence Lemma \ref{atqm} derives from Corollary \ref{bddsmall}. \qed

\medskip

\noindent
{\bf Proof  of  Lemma \ref{vancor2}:}\\
As a second application, we derive here estimate  (\ref{cohva}). 
We need to show that 
\be
P_n(t)\int_0^t\cV^0(t,s)\circ \cL^1_{s}\circ \cV^0(s,0) (\rho_j)ds\, P_m(t)=\ode(\eps).
\ee
We first note that  by Lemma \ref{expadiab0}, 
$\cV^0(s,0) (\rho_j)=\tilde \rho_j(s)$, where $\partial_s \tilde \rho_j(s)= [K(s), \tilde \rho_j(s)]$ is continuous in trace norm and $\eps$-independent. Moreover, 
\be\label{spliint}
P_n(t)\cV^0(t,s)(\cdot)P_m(t)=P_n(t)\big(\cV^0(t,0)\circ \cV^0(0,s)(P_n(s) \cdot P_m(s))\big)P_m(t),
\ee
thanks to the intertwining property (\ref{interqr}).
Using the definition of $\cL_s^1$ , we have
\begin{align}
P_n(s) \cL^1_{s}\circ \cV^0(s,0) (\rho_j) P_m(s)=&\sum_{l\in I}P_n(s)(\Gamma_l(s) \tilde \rho_j(s)\Gamma_l^*(s)P_m(s) \\
&-\delta_{mj}\frac12P_n(s)\Gamma_l^*(s)\Gamma_l(s) \tilde \rho_j(s)P_m(s) - \frac12 \delta_{nj}P_n(s) \tilde \rho_j(s)\Gamma_l^*(s)\Gamma_l(s) P_m(s),\nonumber
\end{align}
where all terms are independent of $\eps$. Hence, to get the result, we are lead to show that for a smooth trace class operator $[0,1]\ni s\ra F(s)$, such that $\partial_sF(s)\in \cT(\cH)$, independent of $\eps$, and $n\neq m$,
\be
\int_0^tV^0(0,s) P_n(s)F(s)P_m(s)V^0(s,0)ds\, =\ode(\eps).
\ee
We have thanks to (\ref{idgr}) with $\cG=H$
\be
P_n(s)F(s)P_m(s)=[P_n(s)F(s)P_m(s), P_m(s)]=[H(s),\cR_m(P_nFP_m)(s)]
\ee
so that, by a slight variation of Lemma \ref{idint}
\begin{align}
\int_0^tV^0&(0,s) [H(s),\cR_m(P_nFP_m)(s)]V^0(s,0)ds=-\i\eps V^0(0,s) \cR_m(P_nFP_m)(s)]V^0(s,0)|_0^t\\ \nonumber
&+\i\eps \int_0^tV^0(0,s) \Big\{ \partial_s\cR_m(P_nFP_m)(s)-K(s)\cR_m(P_nFP_m)(s)+\cR_m(P_nFP_m)(s)K(s)\Big\}V^0(s,0)ds.
\end{align}
As $F(s)$ and its derivative are trace class, the expression above is $\ode(\eps)$ in trace norm. \qed\\
Let us finally note that, making use of the projectors appearing (\ref{spliint}), we can further integrate by parts the last integral term, provided $\partial_sF(s)$ is continuously differentiable in trace norm,  in which case
\begin{align}
\int_0^tV^0(0,s) P_n(s)F(s)P_m(s)V^0(s,0)ds
&= \i\eps \big(\cR_m(P_nFP_m)(0)-V^0(0,t) \cR_m(P_nFP_m)(t)V^0(t,0)\big)+\ode(\eps^2).
\end{align}

\noindent
{\bf Proof of Lemma \ref{gendiag}:}

\noindent
For any  $\rho\in \cT(\cH)$, we need to consider 
\begin{align}\label{zeroth}
\int_0^t\int_0^{s_1}\dots\int_0^{s_{n-1}} \cV^0(t,s_1)\circ \cL^1_{s_1} \circ \cV^0(s_1,s_2) \circ \cL^1_{s_2}\dots  \cL^1_{s_n} \circ \cV^0(s_n,0) (\rho)ds_{n}\dots ds_{2}ds_{1}.
\end{align}
Noting with (\ref{defv00}) that for any $0\leq s,t\leq 1$ 
$
\cV^0(t,s)(\rho)=\cV^0(t,0)\circ \cV^0(0,s)(\rho),
$
we can write (\ref{zeroth}) as $\cV^0(t,0)(I_n(t))$, where $I_n(t)\in \cT(\cH)$ is defined inductively by
\begin{align}\label{defin}
I_n(t)&=\int_0^t  \cV^0(0,s_1)\circ \cL^1_{s_1} \circ\cV^0(s_1,0)(I_{n-1}(s_1))ds_1,\nonumber\\
I_1(t)&=\int_0^t  \cV^0(0,s_n) \circ\cL^1_{s_n} \circ\cV^0(s_n,0)(\rho)ds_n.
\end{align}
Lemma \ref{vancor2} shows the existence of $C_1<\infty$, such that for all $0\leq t\leq 1$ and $\eps$ small enough, 
\be
\| I_1(t)-\cP_0(0)(I_1(t))\|_1\leq \eps C_1 \|\rho \|_1.
\ee
Let us show by induction that for for each $n$, there exists $C_n<\infty$ so that for $\eps$ small enough
\be
\| I_n(t)-\cP_0(0)(I_n(t))\|_1\leq \eps C_n \|\rho \|_1.
\ee
Assuming the result for $n\geq 1$, we consider the step $n+1$. We  get
\begin{align}\label{714}
 I_{n+1}(t)-\cP_0(0)&(I_{n+1}(t))=\sum_{1\leq j\neq k\leq d}P_j(0)I_{n+1}(t)P_k(0)\\
 &=\sum_{1\leq j\neq k\leq d}\int_0^t P_j(0)\big\{\cV^0(0,s) \circ \cL^1_{s} \circ \cV^0(s,0)(I_n(s))\big\}P_k(0)ds\nonumber \\
 &= \sum_{1\leq j\neq k\leq d}\int_0^t P_j(0)\big\{\cV^0(0,s) \circ \cL^1_{s} \circ \cV^0(s,0)\circ \cP_0(0)(I_n(s)))\big\}P_k(0)ds+\ode_n(\|\rho\|_1 \eps), \nonumber
\end{align}
by the induction hypothesis and recalling the operator $\cV^0(t,s)$ is isometric and $\cL_s^1$ is uniformly bounded. 
Then we observe that for any $j\neq k$, and any $[0,1]\ni s\mapsto A(s)\in\cT(\cH)$,  $C^1$ in trace norm,  see Lemmas \ref {decvwp}, \ref{lemw} and (\ref{vpw})
\begin{align}\label{estoff}
\int_0^t P_j(0)\big\{\cV^0(0,s)& (A(s))\big\}P_k(0)ds=\int_0^t e^{\frac{\i}{\eps}\int_{0}^s(e_j-e_k)(u)du}P_j(0)\big\{\cW^0(0,s) (A(s))\big\}P_k(0)ds\nonumber\\
&=\frac{-\i\eps}{e_j(s)-e_k(s)}e^{\frac{\i}{\eps}\int_{0}^s(e_j-e_k)(u)du}P_j(0)\big\{\cW^0(0,s) (A(s))\big\}P_k(0)\Big|_0^t\nonumber\\
&+\int_0^t\i\eps e^{\frac{\i}{\eps}\int_{0}^s(e_j-e_k)(u)du}P_j(0)\partial_s\Big(\frac{\cW^0(0,s) (A(s))}{e_j(s)-e_k(s)}\Big)P_k(0)ds.
\end{align}
The trace norm of the RHS is bounded above by $\eps c (\sup_{0\leq s\leq 1}\|A(s)\|_1+\sup_{0\leq s\leq 1}\|\partial_s A(s)\|_1)$, where $c$ is a constant independent of $\eps$.
The integral term of the RHS of (\ref{714}) has the form (\ref{estoff}) with
\be
A(s)= \cL^1_{s} \circ \cV^0(s,0)\circ \cP_0(0)(I_n(s))= \cL^1_{s} \circ \cW^0(s,0)\circ \cP_0(0)(I_n(s)),
\ee
where $\cW^0(s,0)$ and $\cL_s^1$ are smooth, independent of $\eps$ and bounded on $\cT(\cH)$, while $I_n(s)$ and $\partial_t I_n(s)$ are continuous and bounded in trace norm by a constant (uniform in $\eps$) time $\|\rho\|_1$, see (\ref{defin}), which ends the proof. \qed

 \end{document}